   \def\@citecolor{blue}%
   \def\@urlcolor{blue}%
   \def\@linkcolor{blue}%
\def\orcidID#1{\smash{\href{http://orcid.org/#1}{\protect\raisebox{-1.25pt}{\protect\includegraphics{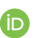}}}}}
\newcommand\play[1]{\mathsf{play}(#1)}
\newcommand\trace[1]{\mathit{Tr}(#1)}
\newcommand\lang[1]{\mathit{L}(#1)}
\newcommand\abra[1]{\langle #1 \rangle}
\newcommand\cutout[1]{}
\newcommand\lock{\mathit{lock}}
\newcommand\mm{\mathsf{m}}
\newcommand\eq{\epsilon_\Q}
\newcommand\ea{\epsilon_\A}
\newcommand\imax{\mathit{max}}
\newcommand\off{\rho}
\newcommand\alp[1]{\mathcal{T}_{#1}}
\newcommand\q{\mathsf{q}}
\newcommand\colortext[2]{}
\newcommand\rl[1]{\colortext{green}{#1}}
\newcommand\rlnote[1]{\sidenote{\rl{#1}}}
\newcommand\igw[1]{\colortext{violet}{#1}}
\newcommand\N{\mathbb{N}}
\newcommand\sem[1]{{\llbracket}{#1}{\rrbracket}}
\newcommand\seq[2]{{#1} \vdash {#2}}
\newcommand\rarr\rightarrow
\newcommand\dom[1]{\mathsf{dom}(#1)}
\newcommand\makeset[1]{\{#1\}}
\newcommand\word[1]{\textrm{#1}}
\newcommand\clg[1]{\mathcal{#1}}
\newcommand\trans[1]{{\xlongrightarrow{#1}}}
\newcommand\Aut{\mathcal{A}}
\newcommand\la{\mathsf{LA}}
\newcommand\lla{\mathsf{LLA}}
\newcommand\sla{\lla}
\newcommand\D{\mathcal{D}}
\newcommand\pred[1]{\mathit{pred}(#1)}
\newcommand\predc{\mathit{pred}}
\newcommand\fica{\mathsf{FICA}}
\newcommand\lfica{\mathsf{LFICA}}
\newcommand\sfica{\lfica}
\newcommand\iatype[1]{{\bf #1}}
\newcommand\expt{\iatype{exp}}
\newcommand\comt{\iatype{com}}
\newcommand\vart{\iatype{var}}
\newcommand\semt{\iatype{sem}}
\newcommand\ctx{\mathcal{C}}
\newcommand\iaterm[1]{{\bf #1}}
\newcommand\ade[2]{\mathit{ad}_{#1}(#2)}
\newcommand{\aasg}{\,\raisebox{0.065ex}{:}{=}\,}
\newcommand\while[2]{\iaterm{while}\,#1\,\iaterm{do}\,#2}
\newcommand\cond[3]{\iaterm{if}\,#1\,\iaterm{then}\,#2\,\iaterm{else}\,#3}
\newcommand\skipcom{\iaterm{skip}}
\newcommand\divcom{\iaterm{div}}
\newcommand\newin[2]{\iaterm{newvar}\,#1\,\iaterm{in}\,#2}
\newcommand\newsem[2]{\iaterm{newsem}\,#1\,\iaterm{in}\,#2}
\newcommand\grb[1]{\iaterm{grab}(#1)}
\newcommand\rls[1]{\iaterm{release}(#1)}
\newcommand\mkvar[2]{\iaterm{mkvar}(#1,#2)}
\newcommand\mksem[2]{\iaterm{mksem}(#1,#2)}
\newcommand\arop[1]{\mathbf{op}(#1)}
\newcommand\vvec[1]{\overrightarrow{#1}}
\newcommand\mem{\mathcal{V}}
\newcommand\mss{\hspace*{0mm}}
\newcommand{\astep}[4]{\mem\vdash #2,\,#1 &\mss\longrightarrow\mss & #4,\,#3}
\newcommand{\step}[4]{\mem\vdash #2,\,#1\longrightarrow #4,\,#3}
\newcommand{\sqsubsim}{\,\raisebox{-.5ex}{$\stackrel{\textstyle\sqsubset}{\scriptstyle{\sim}}$}\,}
\newcommand\parc{||}
\newcommand\moveset{{\mathcal{M}}}
\newcommand\move[1]{\mathsf{#1}}
\newcommand\mread{\move{read}}
\newcommand\mwrite[1]{\move{write}(#1)}
\newcommand\mok{\move{ok}}
\newcommand\mrun{\move{run}}
\newcommand\mdone{\move{done}}
\newcommand\mrls{\move{rls}}
\newcommand\mgrb{\move{grb}}
\newcommand\mq{\move{q}}
\newcommand\wait{\mathsf{WAIT}}
\newcommand\fork{\mathsf{FORK}}
\newcommand{\comp}[1]{\textsf{comp}(#1)}
\newcommand\justf[3][]{\nccurve[arrowsize=2.5pt,nodesep=.5pt,offsetB=-2pt,linewidth=0.4pt,angleA=110,angleB=30,linecolor=darkgray#1]{->}{#2}{#3}}
\newcommand\justg[3][]{\nccurve[arrowsize=2.5pt,nodesep=.5pt,offsetB=-2pt,linewidth=0.4pt,angleA=130,angleB=30,linecolor=darkgray#1]{->}{#2}{#3}}
\newcommand\justh[3][]{\nccurve[arrowsize=2.5pt,nodesep=.5pt,offsetB=-2pt,linewidth=0.4pt,angleA=130,angleB=30,linecolor=darkgray#1]{->}{#2}{#3}}
\newcommand\justj[3][]{\nccurve[arrowsize=2.5pt,nodesep=.5pt,offsetB=-2pt,linewidth=0.4pt,angleA=160,angleB=30,linecolor=darkgray#1]{->}{#2}{#3}}
\newcommand\justn[4][]{\nccurve[arrowsize=2.5pt,nodesep=.5pt,offsetB=-2pt,linewidth=0.4pt,angleA=#4,angleB=30,linecolor=darkgray#1]{->}{#2}{#3}}
\newcommand\Q{\mathsf{Q}}
\newcommand\A{\mathsf{A}}
\begin{document}

%%%%%%%%%%%%%%%%%%%%%%%%%%%%%%%%%%%%%%%%%%%%%%%%%%%%%%%%%%%%%%%%%%%%%%%%%%%%%%%%
% FRONT MATTER

\title{Leafy automata for higher-order concurrency}

% If the paper title is too long for the running head, you can set
% an abbreviated paper title here:
% \titlerunning{Abbreviated paper title}

\author{Alex Dixon\inst{1}\orcidID{0000-0003-3048-5128} (\Letter) \and
Ranko Lazi\'c\inst{2}\orcidID{0000-0003-3663-5182} \and
Andrzej S. Murawski\inst{3}\orcidID{0000-0002-4725-410X} \and
Igor~Walukiewicz\inst{4}}

\authorrunning{A.~Dixon, R.~Lazi\'c, A.~S.~Murawski and I.~Walukiewicz}
% First names are abbreviated in the running head.
% If there are more than two authors, 'et al.' is used.

\institute{University of Warwick, UK, \texttt{alexander.dixon@warwick.ac.uk}\and
University of Warwick, UK, \texttt{r.s.lazic@warwick.ac.uk} \and
University of Oxford, UK, \texttt{andrzej.murawski@cs.ox.ac.uk} \and
CNRS, Universit\'e de Bordeaux, France
}

\maketitle

\begin{abstract}

Finitary Idealized Concurrent Algol ($\fica$) is a prototypical programming language combining functional, imperative, and concurrent computation. There exists a fully abstract game model of $\fica$, which in principle can be used to prove equivalence and safety of $\fica$ programs. Unfortunately, the problems are undecidable for the whole language, and only very rudimentary decidable sub-languages are known. 

We propose leafy automata as a dedicated automata-theoretic formalism for representing the game semantics of $\fica$. The automata use an infinite alphabet with a tree structure. 
We show that the game semantics of any $\fica$ term can be represented by traces of a leafy automaton. Conversely, the traces of any leafy automaton can be represented by a $\fica$ term.
Because of the close match with $\fica$, we view leafy automata as a promising starting point for finding decidable subclasses of the language and, more generally, to provide a new perspective on models of higher-order concurrent computation. 

Moreover, we identify a fragment of $\fica$ that is amenable to verification by translation into a particular class of leafy automata.  Using a locality property of the latter class, where communication between levels is restricted and every other level is bounded, we show that their emptiness problem is decidable by reduction to Petri nets reachability.

%\igw{Abstract modified}
%\am{So we don't mention "local leafy automata" by name here and don't brag about the decidability result?}

%Finitary Idealized Concurrent Algol ($\fica$) is a prototypical programming language combining functional, imperative, and concurrent computation. There exists a fully abstract game model of $\fica$, which in principle can be used to prove equivalence and safety of $\fica$ programs. 

% AM: I added a similar sentence later to delay the "bad news".
%Unfortunately, the problems are undecidable for the whole language and only very rudimentary decidable sub-languages are known. 

%We propose leafy automata as a dedicated automata-theoretic formalism for representing the game semantics of $\fica$. The automata use an infinite alphabet with a tree structure. To demonstrate suitability of this automata model, we show that:

%\begin{enumerate}
%\item  the game semantics of any $\fica$ term can be represented by traces of a leafy automaton, and 
%\item  the traces of any leafy automaton can be represented by a $\fica$ term.
%\end{enumerate}

%\adnote{Perhaps just ``The emptiness problem for leafy automata is undecidable.''}
%Unfortunately, the emptiness problem for leafy automata turns out to be undecidable. To address this, we propose a decidable variant, called \emph{local},
%where communication between levels is restricted and every other level is bounded.
%We also identify a fragment of $\fica$ that is amenable to verification by translation  into the local variant.

\keywords{Finitary Idealized Concurrent Algol, Higher-Order Concurrency, Automata over Infinite Alphabets, Game Semantics}
\end{abstract}

%%%%%%%%%%%%%%%%%%%%%%%%%%%%%%%%%%%%%%%%%%%%%%%%%%%%%%%%%%%%%%%%%%%%%%%%%%%%%%%%
% MAIN BODY

% TODO:
% - Reformat relevant parts according to lncs-resources/style-guide.tex

\section{Introduction}

Game semantics is a versatile paradigm for giving semantics to a wide spectrum of
programming languages~\cite{AM98b,MT16b}.
It is well-suited for studying the observational equivalence of programs and, more generally, the
behaviour of a program in an arbitrary context. 
About 20 years ago, it was discovered that the game semantics of a program can sometimes be expressed by a finite automaton or another simple computational model~\cite{GMcC00}. 
This led to algorithmic uses of game semantics 
for program analysis and verification~\cite{AGMO04,DGL06,GM06,BG08,HO09,HMO12,KMOWW12,MRT15,Dim16,Dim17}. 
%\igwnote{more references needed}.\amnote{done}
Thus far, these advances concerned mostly languages without concurrency.

In this work, we consider Finitary Idealized Concurrent Algol ($\fica$) and its fully abstract game semantics~\cite{GM08}.
It is a call-by-name language with higher-order features, side-effects, and concurrency implemented by a parallel composition operator and semaphores.
It is finitary since, as it is common in this context, base types are restricted to finite domains.
Quite surprisingly, the game semantics of this language is arguably simpler than that for the language without concurrency.
The challenge comes from algorithmic considerations.
%Unlike for the sequential case~\cite{GMcC00}, observational equivalence of programs is undecidable even for second-order programs without recursion~\cite{GMO06}.

%Our long-term goal is to find fragments of $\fica$, for which some program analysis problems become decidable.
Following the successful approach from the sequential case~\cite{GMcC00,Ong02,Mur04,MW08,CBMO19}, the first step is to find an automaton model abstracting the phenomena appearing in the semantics. 
The second step is to obtain program fragments from structural restrictions on the automaton model. 
In this paper we take both steps.

We propose \emph{leafy automata}:  an automaton model working on nested data.
Data are used to represent pointers in plays, while the nesting of data reflects structural dependencies in the use of pointers. 
Interestingly, the structural dependencies in plays boil down to imposing a tree structure on the data.
We show a close correspondence between the automaton model and the game semantics of $\fica$.
For every program, there is a leafy automaton whose traces (data words) represent precisely the plays in the semantics of the program (Theorem~\ref{thm:trans}).
Conversely, for every leafy automaton, there is a program whose semantics consists of plays representing the traces of the automaton (Theorem~\ref{thm:toalgol}).
(The latter result holds modulo a saturation condition we explain later.) 
This equivalence shows that leafy automata are a suitable model for studying decidability
questions for $\fica$.

Not surprisingly, due to their close connection to $\fica$, leafy automata turn out to have an undecidable emptiness problem.
We use the undecidability argument to identify the source, namely communication across several unbounded levels, i.e., levels in which nodes can produce an unbounded number of children during the lifetime of the automaton.
To eliminate the problem, we introduce a restricted variant of leafy automata, called \emph{local},
in which every other level is bounded and communication is allowed to cross only one unbounded node.
Emptiness for such automata can be decided via reduction to a number of instances of Petri net reachability problem.

We also identify a fragment of $\fica$, dubbed \emph{local} $\fica$ ($\sfica$), which maps onto
local leafy automata. 
It is based on restricting the distance between semaphore and variable
declarations and their uses inside the term.
This is a first non-rudimentary fragment of $\fica$ for which some verification tasks are decidable.
Overall, this makes it possible to use local leafy automata to analyse $\sfica$ terms and decide associated verification tasks.

\paragraph{Related work} Concurrency, even with only first-order recursion, leads to
undecidability~\cite{Ram00}. 
Intuitively, one can encode the intersection of languages of two pushdown automata.
From the automata side, much research on decidable cases has concentrated on bounding interactions between stacks representing different threads of the program~\cite{QR05,TMP09,AGK14}.
From the game semantics side, the only known decidable fragment of $\fica$ is Syntactic Control
of Concurrency (SCC)~\cite{GMO06}, which imposes bounds 
on the number of threads in which arguments can be used.
This restriction makes it possible to represent the game semantics of programs by finite automata. 
In our work, we propose automata models that correspond to unbounded interactions with
arbitrary $\fica$ contexts, and importantly that remains true also when we restrict the terms to $\sfica$.
%\amnote{I wanted to stress that in both cases we consider interactions with arbitrary $\fica$ contexts
%and highlight that we are not restricting contexts, as was done in the "rudimentary" SCC.} 
%\adnote{Perhaps make that more explicit here (if there is space!)}
Leafy automata are a model of computation over an infinite alphabet.
This area has been explored extensively, partly motivated by applications to database theory, notably XML~\cite{Sch07}. 
In this context, nested data first appeared in~\cite{BB07}, where the authors considered shuffle expressions as 
the defining formalism. Later on, data automata~\cite{BDMSS11} and class memory automata~\cite{BS10} have been adapted to nested data in~\cite{DHLT14,CMO15}.
They are similar to leafy automata in that the automaton is allowed to access states related to previous uses of data values at various depths. 
% What distinguishes leafy automata is the fact that the lifetime of a data value follows a question/answer discipline and is quite restricted. 
What distinguishes leafy automata is that the lifetime of a data value is precisely defined and follows a question and answer discipline in correspondence with game semantics.
Leafy automata also feature run-time ``zero-tests'', activated when reading answers.

For most models over nested data, the emptiness problem  is undecidable. 
To achieve decidability, the authors in~\cite{DHLT14,CMO15} relax the acceptance conditions so that the emptiness problem can eventually be recast as a coverability problem for a well-structured transition system.
In~\cite{CHMO15}, this result was used to show decidability of equivalence for a first-order (sequential) fragment of Reduced ML.
On the other hand, in~\cite{BB07} the authors relax the order of letters in words, which leads to an analysis based on semi-linear sets. 
Both of these restrictions are too strong to permit the semantics of $\fica$, because of the game-semantic $\wait$ condition, 
which corresponds to waiting until all sub-processes terminate.

Another orthogonal strand of work on concurrent higher-order programs is based on higher-order recursion schemes~\cite{Hague13,KI13}. Unlike $\fica$, they feature recursion but the computation is purely functional over a single atomic type~$o$.

\paragraph{Structure of the paper:} 
In the next two sections we recall $\fica$ and its game semantics from~\cite{GM08}.
The following sections introduce leafy automata ($\la$) and their local variant ($\sla$),
where we also analyse the associated decision problems and, in particular, show that the non-emptiness problem for $\sla$ is decidable.
Subsequently, we give a translation from $\fica$ to $\la$ (and back) and define a fragment $\sfica$ of $\fica$ which can be translated into $\sla$.
% !TEX root =  main.tex

\section{Finitary Idealized Concurrent Algol ($\fica$)}
\label{sec:fica}

Idealized Concurrent Algol~\cite{GM08} is a paradigmatic language combining higher-order with imperative computation in the style of Reynolds~\cite{Rey78}, extended to concurrency with parallel composition ($\parc$) and binary semaphores.
We consider its finitary variant $\fica$ over the finite datatype $\makeset{0,\ldots,\imax}$ ($\imax\ge 0$)
with loops but no recursion.
Its types $\theta$ are generated by the grammar 
\[
\theta::=\beta\mid \theta\rarr\theta\qquad\qquad
  \beta::=\comt\mid\expt\mid\vart\mid\semt
\]
where 
$\comt$ is the type of commands;
$\expt$ that of $\makeset{0,\ldots,\imax}$-valued expressions;
$\vart$ that of assignable variables;
and $\semt$ that of semaphores.
The typing judgments are displayed in Figure~\ref{fig:icatypes}.
$\skipcom$ and $\divcom_\theta$ are constants representing termination and divergence respectively,
$i$ ranges over $\{0,$ $\cdots,$ $\imax\}$,
and $\mathbf{op}$ represents unary arithmetic operations, such as successor or predecessor (since we work over a finite datatype, operations of bigger arity can be defined using conditionals).
Variables and semaphores can be declared locally via $\mathbf{newvar}$ and $\mathbf{newsem}$.
Variables are dereferenced using $!M$, and semaphores are manipulated using two (blocking) primitives,
$\grb{s}$ and $\rls{s}$, which  grab and release the semaphore respectively. 
\begin{figure}[t]
\begin{center}
  \AxiomC{$\phantom{\beta}$}
 \UnaryInfC{$\Gamma\vdash\skipcom:\comt $}
\DisplayProof\quad
  \AxiomC{$\phantom{\beta}$}
 \UnaryInfC{$\Gamma\vdash\divcom_\theta:\theta $}
\DisplayProof\quad
  \AxiomC{$\phantom{\beta}$}
 \UnaryInfC{$\Gamma\vdash i:\expt$}
\DisplayProof\quad
  \AxiomC{$\seq{\Gamma}{M:\expt}$}
  \UnaryInfC{$\seq{\Gamma}{\arop{M}:\expt}$}
  \DisplayProof\\[2ex]
%  \AxiomC{$\{\Gamma\vdash N_1:\theta_1\}$}
%  \AxiomC{$\Gamma\vdash N_2:\theta_2$}
%\RightLabel{$\square$ is any sequential combinator}
%  \BinaryInfC{$\Gamma\vdash \{M_1\}\,\square\, M_2:\theta_3$}
%  \DisplayProof\\[2ex]
  \AxiomC{$\Gamma\vdash M:\comt$}
  \AxiomC{$\Gamma\vdash N:\beta$}
  \BinaryInfC{$\Gamma \vdash M;N:\beta$}
  \DisplayProof\quad
    \AxiomC{$\Gamma\vdash M:\comt$}
  \AxiomC{$\Gamma\vdash N:\comt$}
  \BinaryInfC{$\Gamma \vdash M\parc N:\comt$}
  \DisplayProof\\[2ex]
      \AxiomC{$\Gamma\vdash M:\expt$}
  \AxiomC{$\Gamma\vdash N_1,N_2:\beta$}
  \BinaryInfC{$\Gamma\vdash \cond{M}{N_1}{N_2}:\beta$}
  \DisplayProof\quad
    \AxiomC{$\Gamma\vdash M:\expt$}
  \AxiomC{$\Gamma\vdash N:\comt$}
  \BinaryInfC{$\Gamma\vdash \while{M}{N}:\comt$}
  \DisplayProof\\[2ex]
  \AxiomC{$\phantom{\beta}$}
  \UnaryInfC{$\Gamma, x:\theta \vdash x: \theta$}
  \DisplayProof\quad
  \AxiomC{$\Gamma,x:\theta\vdash M:\theta'$}
  \UnaryInfC{$\Gamma\vdash\lambda x. M:\theta\rarr\theta' $}
  \DisplayProof\quad
  \AxiomC{$\Gamma\vdash M:\theta\rarr\theta'$}
  \AxiomC{$\Gamma\vdash N:\theta$}
  \BinaryInfC{$\Gamma \vdash M N:\theta'$}
  \DisplayProof\\[2ex]
      \AxiomC{$\Gamma\vdash M:\vart$}
  \AxiomC{$\Gamma\vdash N:\expt$}
  \BinaryInfC{$\Gamma \vdash M\aasg N:\comt$}
  \DisplayProof\quad
  \AxiomC{$\Gamma\vdash M:\vart$}
  \UnaryInfC{$\Gamma \vdash !M:\expt$}
  \DisplayProof\\[2ex]
  \AxiomC{$\Gamma\vdash M:\semt$}
  \UnaryInfC{$\Gamma \vdash \rls{M}:\comt$}
  \DisplayProof\,\,
  \AxiomC{$\Gamma\vdash M:\semt$}
  \UnaryInfC{$\Gamma \vdash \grb{M}:\comt$}
  \DisplayProof\\[2ex]
    \AxiomC{$\Gamma, x:\vart\vdash M:\comt,\expt$}
  \UnaryInfC{$\Gamma\vdash \newin{x\aasg i}{M:\comt,\expt}$}
  \DisplayProof\quad
  \AxiomC{$\Gamma,x:\semt\vdash M:\comt,\expt$}
  \UnaryInfC{$\Gamma\vdash \newsem{x\aasg i}{M:\comt,\expt}$}
  \DisplayProof
\end{center}
\caption{$\fica$ typing rules\label{fig:icatypes}}
\vspace{-4mm}
\end{figure}
The small-step operational semantics of $\fica$ is reproduced in Appendix~\ref{apx:opsem}.
In what follows, we shall write $\divcom$ for $\divcom_\comt$.
\medskip

We are interested in \emph{contextual equivalence} of terms. 
Two terms are contextually equivalent if there is no context that can distinguish them with respect to may-termination. 
More formally, a term $\seq{}{M:\comt}$ is said to terminate,  written $M\!\Downarrow$, if there
exists a terminating evaluation sequence from $M$ to $\skipcom$.
Then \emph{contextual (may-)equivalence} ($\Gamma\vdash M_1\cong M_2$) is defined by:
for all contexts $\ctx$ such that $\seq{}{\ctx[M]:\comt}$,
$\ctx[M_1]\!\Downarrow$ if and only if $\ctx[M_2]\!\Downarrow$.
The force of this notion is quantification over all contexts.

Since contextual equivalence becomes undecidable for $\fica$ very quickly~\cite{GMO06}, we will look at 
the special case of testing equivalence with  terms that always diverge,
e.g. given $\Gamma\vdash M: \theta$, is it the case that
$\seq{\Gamma}{M\cong\divcom_\theta}$?
Intuitively, equivalence with an
always-divergent term means that $\ctx[M]$ will never converge (must diverge) if $\ctx$ uses $M$.
At the level of automata, this will turn out to correspond to the emptiness problem.

\label{ex:verification} In verification tasks, with the above equivalence test, we can check whether uses of $M$ can ever lead to undesirable states. For example, for a given term $\seq{x:\vart}{M:\theta}$, the term
\[
\seq{f:\theta\rarr\comt}{\newin{x\aasg 0}{(f (M)\, ||\, \cond{!x=13}{\,\skipcom\,}{\,\divcom}})}
\]
will be equivalent
to $\divcom$ only when $x$ is never set to $13$ during a terminating execution.
Note that, because of quantification over all contexts, $f$ may use $M$ an arbitrary number of times,
also concurrently or in nested fashion, which is a very expressive form of quantification.

\cutout{
The automata-theoretic emptiness problems that will be studied in the paper correspond to 
a special case  of contextual equivalence, where a term is compared with a term that always diverges,
i.e. given $\Gamma\vdash M:\theta_h\rarr\cdots\rarr\theta_1\rarr\comt$, is it the case that
$\seq{\Gamma}{M\cong\lambda x_h\cdots x_1.\divcom}$? Intuitively, equivalence with an 
always-divergent term means that $\ctx[M]$ will never converge (must diverge) if $\ctx$ uses $M$.
\rlnote{Can we provide a bit more motivation for focussing on emptiness, in addition to the kind of example that follows here?  At a first reading, these two paragraphs currently come across as a little contrived.  We should provide as robust advertising as possible here, since the criticism about applicability to program verification of what we are doing is perhaps one of the most likely ones.}

In verification tasks, the above property could be used to check whether uses of $M$ can ever lead to undesirable states. For example, given $\seq{x:\vart}{M:\theta}$, 
\[
\seq{f:\theta\rarr\comt}{\newin{x\aasg 0}{(f (M)\, ||\, \cond{!x=13}{\,\skipcom\,}{\,\divcom}})}
\]
will be equivalent
to $\divcom$ if $x$ is never set to $13$ during a terminating execution.
Note that, because of quantification over all contexts, $f$ may use $M$ an arbitrary number of times,
also concurrently or in nested fashion, which is a very expressive form of quantification.
}
% !TEX root =  main.tex

\section{Game semantics\label{sec:gs}}

Game semantics for programming languages involves
two players, called Opponent (O) and Proponent (P),
and the sequences of moves made by them can be viewed as interactions between 
a program (P) and a surrounding context (O).
In this section, we briefly present 
the fully abstract game model for $\fica$ from~\cite{GM08}, which we rely on in the paper.
The games are defined using an auxiliary concept of an arena.
\begin{definition}
An \emph{arena} $A$ is a
triple $\langle{M_A,\lambda_A,\vdash_A}\rangle$ where:
\begin{itemize}
\item $M_A$ is a set of \emph{moves};
\item $\lambda_A:M_A\rarr\makeset{O,P}\times\makeset{Q,A}$
is a function determining for each $m\in M_A$ whether
it is an \emph{Opponent} or a \emph{Proponent move}, 
and a \emph{question} or an \emph{answer};
we write $\lambda_A^{OP},\lambda_A^{QA}$ for the composite
of $\lambda_A$ with respectively the first and second projections;
\item $\vdash_A$ is a binary relation on $M_A$, called \emph{enabling},
satisfying: if $m\vdash_A n$ for no $m$ then $\lambda_A (n)= (O,Q)$,
if $m\vdash_A n$ then $\lambda_A^{OP}(m)\neq\lambda_A^{OP}(n)$,
and if $m\vdash_A n$ then $\lambda_A^{QA}(m)=Q$.
\end{itemize}
\end{definition}
%If $m\vdash_A n$ we say that $m$ \emph{enables} $n$.
We shall write $I_A$ for the set of all moves of $A$ which
have no enabler; such moves are called \emph{initial}.
Note that an initial move must be an Opponent question.
In arenas used to interpret base types all questions are initial and P-moves
answering them are detailed in the table below, where $i\in\makeset{0,\cdots,\imax}$.
\[\renewcommand\arraystretch{0.9}\begin{array}{c|c|c||c|c|c}
~\word{Arena}~ & ~\word{O-question}~  & ~\word{P-answers}~ &
~\word{Arena}~ & ~\word{O-question}~  &~\word{P-answers}~\\
\hline
\sem{\comt} & \mrun & \mdone &
\sem{\expt} & \mq &   i\\[1ex]
\hline
\sem{\vart} & \mread & i & \sem{\semt}
          & \mgrb & \mok \\
          & \mwrite{i} & \mok &
          & \mrls & \mok
\end{array}
\]
More complicated types are interpreted inductively using
the \emph{product} ($A\times B$) 
and \emph{arrow} ($A\Rightarrow B$) constructions, given below.
\[\begin{array}{rcl}
M_{A\times B}      &=& M_A+M_B\\
\lambda_{A\times B}&=& [\lambda_A,\lambda_B]\\
\vdash_{A\times B} &=& \vdash_A+\vdash_B\\
\end{array}\qquad
\begin{array}{rcl}
M_{A\Rightarrow B}      &=& M_A+M_B\\
\lambda_{A\Rightarrow B}&=& [\abra{\lambda_A^{PO},\lambda_A^{QA}},\lambda_B]\\
\vdash_{A\Rightarrow B} &=& \vdash_A+\vdash_B+\makeset{\,(b,a)\mid b\in I_B\textrm{ and }a\in I_A}\\
\end{array}\]
where $\lambda_A^{PO}(m)= O$ iff $\lambda_A^{OP}(m)=P$.
We write $\sem{\theta}$ for the arena corresponding to type $\theta$. Below we draw (the enabling relations of) $A_1=\sem{\comt\rarr\comt\rarr\comt}$
and $A_2=\sem{(\vart\rarr\comt)\rarr\comt}$ respectively, 
using superscripts to distinguish copies of the same move
(the use of superscripts is consistent with our future use of tags in Definition~\ref{def:tags}).
\[
\xymatrix@C=1mm@R=1mm{O &&&\mrun\ar@{-}[d]\ar@{-}[ld]\ar@{-}[lld]\\
P &\mrun^2\ar@{-}[d] &\mrun^1\ar@{-}[d] &\mdone\\
O &\mdone^2 &\mdone^1}\qquad\qquad
\xymatrix@C=1mm@R=1mm{O &&&&\mrun\ar@{-}[d]\ar@{-}[ld]\\
P && &\mrun^1\ar@{-}[d]\ar@{-}[ld]\ar@{-}[lld] &\mdone\\
O &\mread^{11}\ar@{-}[d]&\mwrite{i}^{11}\ar@{-}[d] & \mdone^1\\
P &i^{11}& \mok^{11}}
\]
%An arena is called \emph{flat} if its questions are all initial
%(consequently the P-moves can only be answers). 

Given an arena $A$, we specify next what it means to be a legal play in $A$.
For a start, the moves that players exchange will have to form
a \emph{justified sequence}, which is a finite sequence of moves
of $A$ equipped with pointers. Its first move is always initial and has no
pointer, but each subsequent move $n$ must have a unique pointer to an
earlier occurrence of a move $m$ such that $m\vdash_A n$.  We say that
$n$ is (explicitly) justified by $m$ or, when $n$ is an answer, that
$n$ answers $m$. \cutout{Note that interleavings of several justified sequences 
may not be justified sequences; instead we shall call them \emph{shuffled sequences}.}
If a question does not have an answer in a justified sequence, we say
that it is \emph{pending} in that sequence.  
%and $m_A$ a move from $M_A$.
Below we give two justified sequences from $A_1$ and $A_2$ respectively.

\[
\rnode{A}{\mrun}\,\,\rnode{B}{\mrun}^1\justh{B}{A}\,\,\rnode{C}{\mrun}^2\justh{C}{A}\, \,\rnode{D}{\mdone^1}\justn{D}{B}{140}\,\, \rnode{E}{\mdone^2}\justn{E}{C}{140}\, \,\rnode{F}{\mdone}\justn{F}{A}{155}\qquad
\rnode{A}{\mrun} \,\, \rnode{B}{\mrun^1}\justj{B}{A}\,\, \rnode{C}{\mread^{11}}\justh{C}{B}\,\, \rnode{D}{0^{11}}\justf{D}{C}\, \,
\rnode{E}{\mwrite{1}^{11}}\justh{E}{B}\,\, \rnode{F}{\mok^{11}}\justh{F}{E}\,\, \rnode{G}{\mread^{11}}\justn{G}{B}{160} \,\, \rnode{H}{1^{11}}
\justh{H}{G}
\]

Not all justified sequences are valid.  In order to constitute a legal
play, a justified sequence must satisfy a well-formedness condition
that reflects the ``static'' style of concurrency of our programming
language: any started sub-processes must end before the parent process terminates.
\cutout{any process starting sub-processes must wait for the
children to terminate in order to continue. In game terms: if a
question is answered then that question and all questions justified by
must have been answered (exactly once).} This is formalised as follows,
where the letters $q$ and $a$ to refer to question- and answer-moves
respectively, while $m$ denotes arbitrary moves.
\begin{definition}
The set $P_A$ of  \emph{plays over $A$} 
consists of the justified sequences $s$ over $A$ that satisfy
the two conditions below.
\begin{description}
\item[FORK]: In any prefix $s'= \cdots\rnode{A}{q} \cdots\rnode{B}{m}\justf{B}{A}$ of $s$, the question $q$ must be pending when $m$ is played.
\item[WAIT]: In any prefix $s'= \cdots\rnode{A}{q} \cdots\rnode{B}{a}\justf{B}{A}$ of $s$, all questions justified by $q$ must be answered.
\end{description}
\end{definition}
\cutout{For two shuffled sequences $s_1$ and $s_2$, $s_1\amalg s_2$ denotes
the set of all interleavings of $s_1$ and $s_2$.
For two sets of shuffled sequences $S_1$ and $S_2$,
$S_1\amalg S_2=\bigcup_{s_1\in S_1,s_2\in S_2}s_1\amalg s_2$.
Given a set $X$ of shuffled sequences, we define $X^0=X$, $X^{i+1}=X^i \amalg X$. 
Then $X^\circledast$, called \emph{iterated shuffle} of $X$, is defined to
be $\bigcup_{i\in\N}X^i$. }
It is easy to check that the justified sequences given above are plays.
A subset $\sigma$ of $P_A$ is \emph{O-complete} if $s\in \sigma$
and $s o\in P_A$ imply $so\in\sigma$, when $o$ is an O-move.
\begin{definition}
  A \emph{strategy} on $A$, written $\sigma:A$, is a
  prefix-closed O-complete subset of $P_A$.
\end{definition}
\cutout{
Recall that O represents the role of the environment/context in game semantics.
Thus, strategies record all potential environment actions.

The game model of $\fica$ consists of \emph{saturated} strategies only: the saturation
condition stipulates that all possible (sequential) observations of
(parallel) interactions must be present in a strategy: actions of the
environment (O) can always be observed earlier if possible, actions of the
program (P) can be observed later. To formalize this, for any arena
$A$, we define a preorder $\preceq$ on $P_A$, as the least transitive
relation $\preceq$ satisfying 
$s\, o\, m\, s'\preceq s\, m\, o\, s'$ and $s\, m\, p\, s'\preceq s\, p\, m\, s'$
for all $s,s'$,
where $o$ and $p$ are an O- and  a P-move respectively (in the above pairs of plays 
moves on the left-hand-side of $\preceq$ are assumed to have the same justifiers as on the right-hand-side). 
\begin{definition}\label{def:sat}
A strategy $\sigma:A$ is \emph{saturated} iff, for all $s,s'\in P_A$,
if $s\in \sigma$ and $s'\preceq s$ then $s'\in\sigma$.
\end{definition}
\begin{remark}\label{rem:causal}
Definition~\ref{def:sat} states that saturated strategies are stable 
under certain rearrangements of moves.
Note that $s_0\,  p\, o\, s_1\not \preceq s_0\, o\, p\, s_1$, while other move-permutations are allowed.
Thus, saturated strategies express causal dependencies of P-moves on O-moves. This partial-order aspect 
is captured explicitly in concurrent games based on event structures~\cite{CCRW17}.
\end{remark}
}
\cutout{The two saturation conditions, in various formulations, have a long
pedigree in the semantics of concurrency. For example, they have been
used by Udding to describe propagation of signals across wires in
delay-insensitive circuits~\cite{Udd86} and by Josephs {\em et al} to
specify the relationship between input and output in asynchronous
systems with channels~\cite{JJH90}.  Laird has been the first to adopt
them in game semantics, in his model of Idealized CSP~\cite{Laird01}.
Arenas and saturated strategies form a Cartesian closed category $\clg{G}_{\rm sat}$, 
in which $\clg{G}_{\rm sat}(A,B)$ consists of saturated strategies on $A\Rightarrow B$.
Strategies $\sigma:A\Rightarrow B$ and $\tau:B\Rightarrow C$ are composed 
by considering all possible interleavings of plays from $\tau$ with
multiple plays from $\sigma$ that coincide in the shared arena $B$,
and then hiding the $B$ moves.
}
Suppose
$\Gamma=\{x_1:\theta_1,\cdots, x_l:\theta_l\}$ 
and $\seq{\Gamma}{M:\theta}$ is a $\fica$-term.
Let us write $\sem{\seq{\Gamma}{\theta}}$ for the arena $\sem{\theta_1}\times\cdots\times\sem{\theta_l}\Rightarrow\sem{\theta}$.
In~\cite{GM08} it is shown how to assign 
a strategy on $\sem{\seq{\Gamma}{\theta}}$ to any $\fica$-term
$\seq{\Gamma}{M:\theta}$. 
We write $\sem{\seq{\Gamma}{M}}$ to refer to that strategy.
For example, $\sem{\seq{\Gamma}{\divcom}}=\{\epsilon, \mrun\}$
and $\sem{\seq{\Gamma}{\skipcom}} = \{\epsilon,\mrun,\rnode{A}{\mrun}\, \rnode{B}{\mdone}\justf{B}{A}\}$.
\cutout{
$\seq{\Gamma}{M:\theta}$, where $\Gamma=\{x_1:\theta_1,\cdots, x_l:\theta_l\}$, using a strategy, written 
through strategies, written $\seq{\Gamma}{M:\theta}$, where $\Gamma=\{x_1:\theta_1,\cdots, x_l:\theta_l\}$,
are interpreted as saturated strategies (written $\sem{\seq{\Gamma}{M}}$) in the arena 
$\sem{\seq{\Gamma}{\theta}}=\sem{\theta_1}\times\cdots\times\sem{\theta_l}\Rightarrow\sem{\theta}$.
To model free identifiers $\seq{\Gamma,x:\theta}{x:\theta}$, one uses (the least saturated strategy generated by) 
alternating plays in which P simply copies moves between the two instances of $\sem{\theta}$.
Other elements of the syntax are interpreted using strategy composition with special strategies.
Below we give a selection of constructs along with the plays that generate the corresponding special strategies.

\noindent
\renewcommand\arraystretch{1}
\[\begin{array}{lclclcl}
;& & \rnode{A}{q}\,\,\rnode{B}{\mrun}^2\justh{B}{A}\,\,\rnode{C}{\mdone^2}\justh{C}{B}\, \,\rnode{D}{q^1}\justn{D}{A}{140}\, \,\rnode{E}{a^1}\justh{E}{D}\, \,\rnode{F}{a}\justh{F}{A} & & 
||& & \rnode{A}{\mrun}\,\,\rnode{B}{\mrun}^1\justh{B}{A}\,\,\rnode{C}{\mrun}^2\justh{C}{A}\, \,\rnode{D}{\mdone^1}\justn{D}{B}{140}\,\, \rnode{E}{\mdone^2}\justn{E}{C}{140}\, \,\rnode{F}{\mdone}\justn{F}{A}{155}\\[2ex]
\raisebox{0.065ex}{:}{=}& & \rnode{A}{\mrun}\,\,\rnode{B}{\mq^1}\justh{B}{A}\,\,\rnode{C}{i^1}\justh{C}{B}\, \,\rnode{D}{\mwrite{i}^2}\justn{D}{A}{140}\,\, \rnode{E}{\mok^2}\justh{E}{D}\, \,\rnode{F}{\mdone}\justn{F}{A}{150} & &
{!} & & \rnode{A}{\mq}\,\,\rnode{B}{\mread}^1\justn{B}{A}{120}\,\,\rnode{C}{i^1}\justf{C}{B}\,\, \rnode{D}{i}\justn{D}{A}{120}\\[2ex]
{\bf grab} && \rnode{A}{\mrun}\,\,\rnode{B}{\mgrb}^1\justn{B}{A}{110}\,\,\rnode{C}{\mok^1}\justf{C}{B}\, \,\rnode{D}{\mdone}\justn{D}{A}{135} & \qquad\qquad &
{\bf release} && \rnode{A}{\mrun}\,\,\rnode{B}{\mrls}^1\justn{B}{A}{110}\,\,\rnode{C}{\mok^1}\justn{C}{B}{120}\, \rnode{D}{\mdone}\justn{D}{A}{135}
  \end{array}\]
\medskip

\begin{tabular}{ll}
${\bf newvar}\,x\aasg i$ & $\quad \rnode{A}{q} \,\, \rnode{B}{q^1}\justj{B}{A}\,\, (\rnode{C}{\mread^{11}}\justn{C}{B}{160}\,\, \rnode{D}{i^{11}}\justf{D}{C})^\ast\, \,
\big(\sum_{j=0}^\imax(\rnode{E}{\mwrite{j}^{11}}\justj{E}{B}\,\, \rnode{F}{\mok^{11}}\justh{F}{E}\,\, (\rnode{G}{\mread^{11}}\justn{G}{B}{160} \,\, \rnode{H}{j^{11}}
\justh{H}{G})^\ast)\big)^\ast
\,\, a^1\, \,a$\\[2ex]
${\bf newsem}\, x\aasg 0$&
 $\quad \rnode{A}{q} \,\, \rnode{B}{q^1}\justf{B}{A}\,\, (
 \rnode{C}{\mgrb^{11}}\justn{C}{B}{160}\,\, \rnode{D}{\mok^{11}}\justf{D}{C}\,\, \rnode{E}{\mrls^{11}}\justn{E}{B}{155}\,\, 
 \rnode{F}{\mok^{11}}\justh{F}{E})^\ast\, \, (\rnode{G}{\mgrb^{11}}\justn{G}{B}{160}\,\,\rnode{H}{\mok^{11}}\justf{H}{G}+\epsilon)\,\, a^1\, \,a$ \\[1ex]
%${\bf newsem}\, x\aasg 1$& 
%$\quad q\, q\,(\mrls\,\mok\,\mgrb\,\mok)^\ast\,(\mrls\,\mok+\epsilon)\, a\, a$.
\end{tabular}\\[1.5ex]
}
Given a strategy $\sigma$,
we denote by $\comp\sigma$ the set of non-empty \emph{complete} plays of $\sigma$, i.e. those in which all questions have been
answered. 
The game-semantic interpretation $\sem{\cdots}$
turns out to provide a fully abstract model
in the following sense.

%\footnote{The language introduced in~\cite{GM08} also features variable and  semaphore constructors $\mathbf{mkvar}$ and $\mathbf{mksem}$, which play a technical role in the full abstraction argument, similarly to~\cite{AM97a}. We omit them in the main body of the paper, because they do not present technical challenges, but they are covered in the Appendix for the sake of completeness.}.
\begin{theorem}[\cite{GM08}]\label{thm:full}
\cutout{$\Gamma\vdash M_1\sqsubsim  M_2$ iff
$\comp{\sem{\Gamma\vdash M_1}}\subseteq \comp{\sem{\Gamma\vdash M_2}}$.}
$\Gamma\vdash M_1\cong  M_2$ iff  $\comp{\sem{\Gamma\vdash M_1}}=\comp{\sem{\Gamma\vdash M_2}}$.
\end{theorem}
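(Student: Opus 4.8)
\medskip\noindent\textbf{Proof proposal.}
Although Theorem~\ref{thm:full} is quoted from~\cite{GM08}, let me outline how one proves it. The plan is the standard two-part recipe for game models: \emph{computational adequacy} on one side, \emph{definability} on the other, glued together by compositionality of $\sem{-}$.

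The preliminary step is compositionality: the special strategies interpreting sequencing, parallel composition, assignment, dereferencing, semaphore grab/release, and the variable and semaphore blocks, together with the copycat strategy for free identifiers, are all saturated, and strategy composition, pairing and currying preserve saturation, so arenas and saturated strategies form a Cartesian closed category and $\sem{\ctx[M]}=\hat\ctx(\sem{M})$ for a semantic context operation $\hat\ctx$ read off from $\ctx$. Next, \emph{soundness}: if $M$ reduces to $M'$ then $\sem{M}=\sem{M'}$, proved by induction on the operational rules, the interesting cases (state, semaphores, $\parc$) being settled by inspecting the corresponding special strategies. In particular, for closed $\seq{}{M:\comt}$, $M\Downarrow$ implies $\mrun\,\mdone\in\comp{\sem{M}}$.

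The converse implication $\mrun\,\mdone\in\comp{\sem{\seq{}{M:\comt}}}\Rightarrow M\Downarrow$ is \emph{computational adequacy}, the first substantial step. I would prove it with a computability/logical-relations argument: by induction on types, define predicates on closed terms (a $\comt$-term is computable iff having $\mrun\,\mdone$ in its strategy forces convergence, with the function-type clauses closing under application), prove the Fundamental Lemma that every term meets its predicate by induction on typing, and conclude; the $\parc$- and block-cases require care because of the may-quantifier and shared state. (Alternatively: prove adequacy for the loop-free fragment directly from the operational semantics and transfer it to the whole language via the bounded unfoldings of the loops and a continuity property of $\sem{-}$.) With soundness this yields $M\Downarrow\iff\mrun\,\mdone\in\comp{\sem{M}}$ for all closed $\comt$-terms. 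One direction of the theorem now follows: if $\comp{\sem{\seq{\Gamma}{M_1}}}=\comp{\sem{\seq{\Gamma}{M_2}}}$ then, using a compositionality lemma for complete plays (an interaction that is complete overall restricts to complete plays on both factors --- here the $\wait$ condition rules out interactions that finish on one side while a subordinate question is still pending on the other), $\hat\ctx$ descends to sets of complete plays, so $\comp{\sem{\ctx[M_1]}}=\comp{\sem{\ctx[M_2]}}$ for every closing $\comt$-context $\ctx$, and adequacy gives $\ctx[M_1]\Downarrow\iff\ctx[M_2]\Downarrow$, i.e. $\seq{\Gamma}{M_1\cong M_2}$.

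For the converse I would argue by contraposition, and this is where I expect the main difficulty. Given $s\in\comp{\sem{\seq{\Gamma}{M_1}}}\setminus\comp{\sem{\seq{\Gamma}{M_2}}}$ (or vice versa), I want a testing context $\ctx_s$ with $\ctx_s[M]\Downarrow$ iff $s\in\comp{\sem{M}}$: $\ctx_s$ feeds the term the O-moves of $s$ in order along the justification pointers, checks after each P-response that it agrees with the corresponding P-move of $s$, and returns $\skipcom$ precisely once all of $s$ has been replayed. Local $\mathbf{newvar}$ cells record which move of $s$ is expected next, parallel composition with blocking semaphores enforces the $\fork$/$\wait$ structure when $s$ has concurrent or nested sub-threads, loops cope with moves that recur an unbounded number of times, and finiteness of the datatype $\makeset{0,\ldots,\imax}$ makes the needed equality tests expressible. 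Then $\ctx_s$ separates $M_1$ from $M_2$ by adequacy, so $\seq{\Gamma}{M_1\cong M_2}$ fails. Making this construction rigorous --- handling arbitrary pointer structure, the move-rearrangements forced by saturation, and the bookkeeping that makes the test converge on $s$ and nothing else --- is essentially the finite-definability theorem for the $\fica$ game model and is the crux of the argument.
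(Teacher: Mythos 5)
The paper does not prove this statement: it is imported from~\cite{GM08} with only a citation, so there is no internal proof to compare against. Your outline --- compositionality and saturation of the interpreting strategies, soundness plus computational adequacy for the congruence direction (via the lemma that complete plays of a composite are determined by complete plays of the factors, which is where the $\wait$ condition does its work), and a testing-context/definability argument by contraposition for the completeness direction --- is exactly the architecture of the proof in~\cite{GM08}; the definability result you identify as the crux is their Theorem~41, which this paper itself invokes in Section~\ref{sec:tosla}. One caveat on a detail: since $\parc$ and semaphores make reduction nondeterministic, the soundness lemma cannot read ``$M\longrightarrow M'$ implies $\sem{M}=\sem{M'}$'' --- a single reduct may discard the behaviours of other interleavings. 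What holds, and what your argument actually needs to conclude that $M\!\Downarrow$ forces $\mrun\,\mdone\in\comp{\sem{M}}$, is the inclusion $\sem{M'}\subseteq\sem{M}$ for each one-step reduct, with adequacy supplying the converse; so this is a slip in the statement of an auxiliary lemma rather than a flaw in the structure of the proof.
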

In particular, since we have $\comp{\sem{\seq{\Gamma}{\divcom_\theta}}}=\emptyset$,
$\seq{\Gamma}{M:\theta}$ is equivalent to $\divcom_\theta$ iff
$\comp{\sem{\seq{\Gamma}{M}}}=\emptyset$.

% !TEX root =  main.tex

\section{Leafy automata\label{sec:leafy}}

%\adnote{The jump from LA to game semantics is abrupt. Is there a nice way to smooth the transition?}

We would like to be able to represent the game semantics of $\fica$ using automata.
To that end, we introduce \emph{leafy automata} ($\la$).
They are a variant of automata over nested data, i.e. a type of automata that read finite sequences of letters of the form $(t,d_0 d_1\cdots d_j)$ ($j\in \N$), where $t$ is a \emph{tag} from  a finite  set $\Sigma$ and each $d_i$ ($0\le i\le j$) is a \emph{data value}  from an infinite set $\D$.%\footnote{In process algebra, data values are often referred to as \emph{names}.}

In our case, $\D$ will have the structure of a countably infinite forest and the sequences $d_0\cdots d_j$ will correspond to branches of a tree.
Thus,  instead of $d_0\cdots d_j$, we can simply write $d_j$, because $d_j$ uniquely determines its ancestors: $d_0,\dots,d_{j-1}$.
The following definition captures the technical assumptions on~$\D$.
\begin{definition}
$\D$ is a countably infinite set equipped with a function $\predc:\D\rarr\D\cup\{\bot\}$ (the \emph{parent} function) such that the following conditions hold.
\begin{itemize}
\item Infinite branching: $\predc^{-1}(\{d_\bot\})$ is infinite for any $d_\bot\in\D\cup\{\bot\}$.
\item Well-foundedness: for any $d\in\D$, there exists $i\in\N$, called the \emph{level of $d$}, such that $\predc^{i+1}(d)=\bot$.
Level-$0$ data values will be called \emph{roots}.
\end{itemize}

\end{definition}
In order to define configurations of leafy automata, we will rely on finite subtrees of $\D$, whose nodes will be labelled with states.
% We say that $T\subseteq \D$ is a finite subtree of $\D$ iff $T$ is finite, $\pred{x}\in T\cup\{\bot\}$ for any $x\in T$ (closure), and $\pred{x}=\bot$ for at most one $x\in T$ (single root).
We say that $T\subseteq \D$ is a subtree of $\D$ iff $T$ is closed ($\forall x \in T \colon \pred{x}\in T\cup\{\bot\}$) and rooted ($\exists!x\in T\colon\pred{x}=\bot$).

Next we give the formal definition of a level-$k$ leafy automaton.
Its set of states $Q$ will be divided into layers, written $Q^{(i)}$ ($0\le i\le k$), which will be used to label level-$i$ nodes.
We will write $Q^{(i_1,\cdots, i_k)}$ to abbreviate $Q^{(i_1)}\times\cdots \times Q^{(i_k)}$, excluding any components $Q^{(i_j)}$ where $i_j < 0$. We distinguish $Q^{(0,-1)} = \{\dagger\}$.%\igwnote{This paragraph does not bring anything. We can delete it}\adnote{I have moved the definition of $Q^{(i,\cdots,j)}$ to here from the $\sla$ section, so it is now meaningful}

\begin{definition}
A level-$k$ leafy automaton ($k$-$\la$) is a tuple $\Aut=\abra{\Sigma,k,Q,\delta}$, where
\begin{itemize}
\item $\Sigma=\Sigma_\Q+\Sigma_\A$ is a finite alphabet, partitioned into questions and answers;
\item $k\geq 0$ is the level parameter;
\item $Q= \sum_{i=0}^k Q^{(i)}$ is a finite set of states, partitioned into sets $Q^{(i)}$ of level-$i$ states;
\item $\delta=\delta_\Q+\delta_\A$ is a finite transition function, partitioned into question- and answer-related transitions;
% \item $\delta_\Q=\sum_{i=0}^k \delta^{(i)}_{\Q}$, where $\delta^{(i)}_{\Q} \subseteq (\prod_{j=0}^{i-1} Q^{(j)})\times \Sigma_{\Q} \times (\prod_{j=0}^i Q^{(j)})$ for $0\le i\le k$;
% \item  $\delta_{\A}=\sum_{i=0}^k \delta^{(i)}_{\A}$, where $\delta^{(i)}_\A \subseteq (\prod_{j=0}^{i} Q^{(j)})\times \Sigma_{\A} \times  (\prod_{j=0}^{i-1} Q^{(j)})$ for $0\le i\le k$.
% \item $\delta_\Q=\sum_{i=0}^k \delta^{(i)}_{\Q}$, where $\delta^{(i)}_{\Q} \subseteq  Q^{[0,i-1]}\times \Sigma_{\Q} \times Q^{[0,i]}$ for $0\le i\le k$\igwnote{What about this notation?}\adnote{We already use $Q^{(i,i+1,...,j)}$ elsewhere in the paper. Perhaps that can be repurposed here for consistency.};
% \item  $\delta_{\A}=\sum_{i=0}^k \delta^{(i)}_{\A}$, where $\delta^{(i)}_\A \subseteq Q^{[0,i]})\times \Sigma_{\A} \times  Q^{[0,i-1]}$ for $0\le i\le k$.
\item $\delta_\Q=\sum_{i=0}^k \delta^{(i)}_{\Q}$, where $\delta^{(i)}_{\Q} \subseteq  Q^{(0,1,\cdots,i-1)}\times \Sigma_{\Q} \times Q^{(0,1,\cdots,i)}$ for $0\le i\le k$;%\\\igwnote{What about this notation?}\adnote{We already use $Q^{(i,i+1,...,j)}$ elsewhere in the paper. Perhaps that can be repurposed here for consistency.}\adnote{I have done this now}
\item  $\delta_{\A}=\sum_{i=0}^k \delta^{(i)}_{\A}$, where $\delta^{(i)}_\A \subseteq Q^{(0,1,\cdots,i)}\times \Sigma_{\A} \times  Q^{(0,1,\cdots,i-1)}$ for $0\le i\le k$.
\end{itemize}
% We assume $\prod_{j=0}^{-1} Q^{(j)}=\{\dagger\}$.

\igw{Remark for the final version: use $Q^{[0,i]}$ notation}

\end{definition}%\amnote{changed $T$ to $E$ in response to Ranko, to avoid the $t$-$T$ association. In what follows I try to use $t$ consistently to refer to word tags.}
Configurations of $\la$ are of the form $(D,E,f)$, where $D$ is a finite subset of~$\D$ (consisting of data values that have been encountered so far),
$E$ is a finite subtree of $\D$, and $f:E\rarr Q$ is a level-preserving function, i.e.\ if $d$ is a level-$i$ data value then $f(d)\in Q^{(i)}$.
A leafy automaton starts from the empty configuration $\kappa_0=(\emptyset,\emptyset,\emptyset)$ and proceeds according to $\delta$,
making two kinds of transitions. Each kind manipulates a single leaf: for questions one new leaf is added, for answers one leaf is removed.
Let  the current configuration be $\kappa=(D,E,f)$.%\igwnote{Reformulated descriptions of transitions}
\begin{itemize}
\item
On reading a letter $(t,d)$ with $t\in\Sigma_\Q$ and $d\not\in D$ a fresh level-$i$ data, the automaton adds a new leaf $d$ in a configuration and updates the states on the branch to $d$. 
So it changes its configuration to $\kappa'=(D\cup\{d\},E\cup\{d\},f')$ provided that $\pred{d}\in E$ and $f'$ satisfies:
\[
(f(\predc^i(d)),\cdots,f(\pred{d}), t, f'(\predc^i(d)),\cdots,f'(\pred{d}),f'(d))\in\delta^{(i)}_{\Q},
\] 
$\dom{f'}=\dom{f} \cup\{d\}$, and $f'(x)=f(x)$ for all $x\not\in\{\pred{d},\cdots, \predc^i(d)\}$.

\item On reading a letter $(t,d)$ with $t\in\Sigma_\A$ and $d\in E$ a level-$i$ data which is a leaf, the automaton deletes $d$ and updates the states on the branch to $d$.
So it changes its configuration to $\kappa'=(D,E\setminus\{d\},f')$ where $f'$ satisfies:
\[
(f(\predc^i(d)),\cdots,f(\pred{d}),  f(d), t, f'(\predc^i(d)),\cdots,f'(\pred{d}))\in\delta^{(i)}_{\A}, 
\]
$\dom{f'}=\dom{f}\setminus\{d\}$ and $f'(x)=f(x)$ for all $x\not\in\{\pred{d},\cdots, \predc^i(d)\}$.\item Initially $D$,$E$, and $f$ are empty; we proceed to $\kappa' = (\{d\},\{d\},\{d \mapsto q^{(0)}\})$ if $(t,d)$ is read where $\dagger \trans{t} q^{(0)} \in \delta^{(0)}_{\Q}$. The last move is treated symmetrically.
\end{itemize}
% 
% \begin{itemize}
% \item
% Suppose $t\in\Sigma_\Q$ and $d\not\in D$ is a fresh level-$i$ data value with $\pred{d}\in E$. The automaton can read $(t,d)$ and move to $\kappa'=(D\cup\{d\},E\cup\{d\},f')$, where $f'$ must satisfy:
% \[
% (f(\predc^i(d)),\cdots,f(\pred{d}), t, f'(\predc^i(d)),\cdots,f'(\pred{d}),f'(d))\in\delta^{(i)}_{\Q},
% \] 
% $\dom{f'}=\dom{f} \cup\{d\}$, and $f'(x)=f(x)$ for all $x\not\in\{\pred{d},\cdots, \predc^i(d)\}$.
% Intuitively, a new leaf $d$ is added to the tree under an existing node and states on the branch to $d$ are updated.
% \item Suppose $t\in\Sigma_\A$ and $d\in \dom{f}$ is a level-$i$ data with $d\not\in \predc^{-1}(E)$ ($d$ is a leaf in $E$). The automaton can read $(t,d)$ and move to $\kappa'=(D,E\setminus\{d\},f')$ where $f'$ must satisfy:
% \[
% (f(\predc^i(d)),\cdots,f(\pred{d}),  f(d), t, f'(\predc^i(d)),\cdots,f'(\pred{d}))\in\delta^{(i)}_{\A}, 
% \]
% $\dom{f'}=\dom{f}\setminus\{d\}$ and $f'(x)=f(x)$ for all $x\not\in\{\pred{d},\cdots, \predc^i(d)\}$.
% Intuitively,  an existing leaf $d$ is removed and 
% states on the branch that led to $d$ are updated.
% \end{itemize}
In all cases, we write $\kappa\trans{(t,d)}\kappa'$.
Note that a single transition can only change states on the branch ending in $d$. Other parts of the tree remain unchanged.
\begin{example}
Below we illustrate the effect of $\la$ transitions.
Let $D_1=\{d_0,d_1,d_1'\}$ and $d_2\not\in D_1$.
Let $\kappa_1=(D_1,E_1, f_1)$, 
$\kappa_2=(D_1\cup\{d_2\},E_2, f_2)$, $\kappa_3=(D_1\cup\{d_2\}, E_1,f_1)$,
where the trees $E_1, E_2$ are displayed below and node annotations of the form $(q)$ correspond to values
of $f_1, f_2$, e.g. $f_1(d_0)=q^{(0)}$.
\[
\xymatrix@C=1mm@R=2mm{
&&d_0 (q^{(0)})\ar@{-}[ld]\ar@{-}[rd] &\\
E_1,f_1: &d_1' (q) & & d_1 (q^{(1)})\\
}\qquad\qquad
\xymatrix@C=1mm@R=3mm{
&&d_0 (r^{(0)})\ar@{-}[ld]\ar@{-}[rd] &\\
E_2,f_2: &d_1' (q) & & d_1 (r^{(1)})\ar@{-}[d]\\
&&& d_2(r^{(2)})\\}
\]
For $\kappa_1$ to evolve into $\kappa_2$ (on $(t,d_2)$), 
we need $(q^{(0)}, q^{(1)}, t, r^{(0)}, r^{(1)}, r^{(2)})\in \delta^{(2)}_\Q$.
On the other hand, to go from $\kappa_2$ to $\kappa_3$ (on $(t,d_2)$),
we want $(r^{(0)},$ $r^{(1)},$ $r^{(2)},$ $t,$ $q^{(0)},$ $q^{(1)})\in \delta^{(2)}_\A$.
\end{example}
\begin{definition}
A \emph{trace} of a leafy automaton $\Aut$ is a sequence
$w=l_1\cdots l_h\in (\Sigma\times\D)^\ast$ such that $\kappa_0\trans{l_1}\kappa_1\dots\kappa_{h-1}\trans{l_h}\kappa_h$
where $\kappa_0=(\emptyset,\emptyset,\emptyset)$.
A configuration $\kappa=(D,E,f)$  is \emph{accepting} if 
$E$ and $f$ are empty.
A trace $w$ is accepted by $\Aut$ if there is a non-empty sequence of transitions as above with $\kappa_h$  accepting.  
The set of traces (resp. accepted traces) of $\Aut$ is denoted
by $\trace{\Aut}$ (resp. $\lang{\Aut}$).
\end{definition}

\begin{remark}
When writing states, we will often use superscripts $(i)$ to indicate the intended level.
So, $(q^{(0)},\cdots, q^{(i-1)}) \trans{t} (r^{(0)},\cdots, r^{(i)})$ 
refers to
$(q^{(0)},\cdots, q^{(i-1)}, t,$ $r^{(0)},$ $\cdots, r^{(i)})\in\delta^{(i)}_{\Q}$; similarly for $\delta^{(i)}_{\A}$ transitions. 
For $i=0$, this degenerates to $\dagger\trans{t} r^{(0)}$ and $r^{(0)}\trans{t} \dagger$.
\end{remark}

\begin{example}\label{ex:la}
Consider the $1$-$\la$ over $\Sigma_\Q=\{\move{start},\move{inc}\}, \Sigma_\A=\{\move{dec},\move{end}\}$.
Let $Q^{(0)}=\{0\}$, $Q^{(1)}=\{0\}$ and define $\delta$ by:
$\dagger\trans{\move{start}} 0$,\quad $0\trans{\move{inc}} (0,0)$,\quad $(0,0)\trans{\move{dec}} 0$,\quad $0\trans{\move{end}} \dagger$.
The accepted traces of this $1$-$\la$ have the form $(\move{start},d_0)\,\, (||_{i=0}^n (\move{inc}, d_1^i)$ $(\move{dec},d_1^i))\,\, (\move{end},d_0)$,
i.e.\ they are valid histories of a single {non-negative} counter (histories such that the counter starts and ends at 0). In this case, all traces are simply prefixes of such words.
\end{example}

\begin{remark}\label{rem:lawork}
Note that, whenever a leafy automaton reads $(t,d)$ ($t\in\Sigma_\Q$)
and the level of $d$ is greater than $0$, then it must have read 
a unique question $(t',\pred{d})$ earlier.
Also, observe that an $\la$ trace contains at most two occurrences of the same data value, such that the first is paired with a question and the second is paired with an answer. Because the question and the answer share the same data value, we can think of the answer as answering the question, like in game semantics. Indeed, justification pointers from answers to questions will be represented in this way in Theorem~\ref{thm:trans}. Finally, we note that $\la$ traces are invariant under tree automorphisms of $\D$.
\end{remark}

\begin{lemma}\label{lem:la2-1}
The emptiness problem for $2$-$\la$ is undecidable.  For $1$-$\la$, it is reducible to the reachability problem for VASS in polynomial time and there is a reverse reduction in exponential time, so it is decidable in Ackermannian time~\cite{LerouxS19} but not elementary~\cite{CzerwinskiLLLM19}.
\end{lemma}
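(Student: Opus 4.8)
The statement has three parts: (i) undecidability of emptiness for $2$-$\la$; (ii) a polynomial reduction from $1$-$\la$ emptiness to VASS reachability; (iii) an exponential reverse reduction from VASS reachability to $1$-$\la$ emptiness. The quoted complexity consequences (Ackermannian upper bound, non-elementary lower bound) then follow immediately by citing~\cite{LerouxS19,CzerwinskiLLLM19}.

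For (i), I would encode a two-counter Minsky machine. A level-$0$ root carries the control state of the machine. Each counter is represented by the multiset of currently-open level-$1$ children of the root — the counter value is the number of such children. Increment opens a new level-$1$ leaf, decrement closes one. The problem, of course, is zero-testing: a plain $1$-$\la$ cannot tell whether any level-$1$ children remain open, which is exactly why $1$-$\la$ only gives VASS. This is where the second level is essential. I would have the root, when it wants to zero-test counter $c$, move to a dedicated "checking" state; the semantics of $\la$ answer-transitions on level-$1$ nodes let a level-$1$ node, at the moment it is closed, inspect and rewrite the root state — so I can arrange that a level-$1$ child refuses to be closed (or poisons the run) once the root is in checking mode, forcing all children of $c$ to be closed \emph{before} the check. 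The genuinely new power comes from using level-$1$ nodes that themselves spawn level-$2$ children to implement a ``broadcast''/token-passing protocol guaranteeing that the checked counter really is empty, not merely that the scheduler chose to close everything. Concretely I expect to mimic the standard construction showing undecidability for automata over data trees of depth~$\ge 2$ (as in~\cite{DHLT14,CMO15}): one level is a VASS-like counter, the extra level supplies the reset/zero-test. The halting state of the Minsky machine is reached iff the automaton has an accepting (fully-closed) trace, giving the reduction from the undecidable halting problem.

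For (ii), given a $1$-$\la$ $\Aut=\abra{\Sigma,1,Q,\delta}$ I would build a VASS whose control states are $Q^{(0)}$ (the root state; a $1$-$\la$ trace has exactly one open root at a time between $\move{start}$ and $\move{end}$) together with a flag recording whether the root is currently open, and whose counters are indexed by $Q^{(1)}$: counter $q$ holds the number of currently-open level-$1$ leaves labelled $q$. A question-transition $(p^{(0)})\trans{t}(r^{(0)},r^{(1)})\in\delta^{(1)}_\Q$ becomes a VASS step that changes the control state $p^{(0)}\mapsto r^{(0)}$ and increments counter $r^{(1)}$; an answer-transition $(p^{(0)},s^{(1)})\trans{t}(r^{(0)})\in\delta^{(1)}_\A$ becomes a step that decrements counter $s^{(1)}$ (which enforces that such a leaf is open) and updates the control state. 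The level-$0$ $\move{start}$/$\move{end}$ transitions open and close the root. An accepting trace corresponds exactly to a run from the configuration with empty control/all-counters-zero back to the same, so $\lang{\Aut}\ne\emptyset$ iff that VASS reachability instance has a solution; the construction is clearly polynomial. (One subtlety: traces versus accepted traces — for emptiness of the language we want the fully-closed configuration, which is literally the all-zero VASS configuration, so no extra gadgetry is needed.)

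For (iii), conversely, given a VASS with counters $1,\dots,n$, states $P$, and a reachability instance from $(p_{\mathit{in}},\vec 0)$ to $(p_{\mathit{fin}},\vec 0)$, I would realise it as a $1$-$\la$ with $Q^{(0)} = P \uplus \{\star\}$ (with $\star$ a closed-root marker) and $Q^{(1)} = \{1,\dots,n\}$, where counter $c$'s value is the number of open level-$1$ leaves in state $c$. A VASS edge from $p$ to $p'$ that adds $\vec a$ and subtracts $\vec b$ is simulated by a short sequence of $\la$ transitions: first $\sum_c b_c$ answer-moves on level-$1$ leaves labelled by the respective $c$ (each decrement requires, and consumes, an open leaf, exactly modelling the guard $\vec x \ge \vec b$), passing the root state through intermediate bookkeeping states, then $\sum_c a_c$ question-moves creating new level-$1$ leaves, ending in root state $p'$. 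Because a single $\la$ transition only touches one leaf, the increments/decrements of one VASS edge must be spread over several $\la$ steps; one must therefore add a linear number of intermediate root states per edge, which is where the exponential blow-up can enter if the $\vec a,\vec b$ are given in binary — hence ``exponential time''. A $\move{start}$ transition opens the root in state $p_{\mathit{in}}$ and an $\move{end}$ transition closes it from state $p_{\mathit{fin}}$, valid only when all level-$1$ leaves are already closed, i.e.\ all counters are $0$. Thus the $1$-$\la$ has an accepting trace iff $(p_{\mathit{fin}},\vec 0)$ is reachable.

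**Main obstacle.** The routine parts are (ii) and (iii) — they are faithful simulations whose correctness is a straightforward induction on run length, the only care needed being the size accounting for the ``polynomial''/``exponential'' claims and the handling of the root's open/closed status. The real work is part (i): making the zero-test airtight. A naive encoding lets the automaton ``cheat'' by simply not closing the level-$1$ children before a zero-test, so one must design the level-$2$ machinery (or an equivalent synchronisation discipline) so that an accepting — hence fully closed — trace can only exist when every zero-test was performed on a genuinely empty counter. I expect to adapt the known depth-$2$ undecidability constructions for nested-data automata rather than invent a new gadget, but verifying that the $\la$ acceptance condition (empty final configuration) plus the question/answer leaf discipline faithfully enforces the Minsky semantics is the delicate step.
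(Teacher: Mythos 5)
Your parts (ii) and (iii) are correct and essentially identical to the paper's (one-sentence) treatment: $1$-$\la$ configurations become a VASS control state (the root) plus one counter per level-$1$ state, and the reverse direction pays an exponential only because VASS updates are in binary while $\la$ transitions touch one leaf at a time. No issues there.

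Part (i), however, has a genuine gap, and it is precisely the step you flag as ``the real work'': you never construct the zero-test gadget. Worse, the layout you commit to --- counter values stored as the number of open level-$1$ children of the root, with level-$2$ machinery bolted on to ``broadcast'' emptiness --- does not obviously admit one, because no transition can ever observe the \emph{absence} of open siblings, and a root state of bounded size cannot record which of unboundedly many open children postdate which zero test. (That detection-rather-than-prevention idea is what the paper uses for the \emph{equivalence} problem in Lemma~\ref{lem:la1}, where a second automaton is allowed to recognise cheating runs; it does not give undecidability of emptiness for a single automaton.) The paper's construction inverts your layout: each counter is a dedicated level-$1$ node, and its \emph{value} is the number of that node's level-$2$ children. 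Increments and decrements add and remove level-$2$ leaves; a zero test on counter $c$ is implemented by \emph{deleting and recreating} the level-$1$ node for $c$, which the $\la$ semantics permits only when that node is a leaf, i.e.\ has no level-$2$ children, i.e.\ the counter is genuinely zero. The ``only leaves may be removed'' discipline \emph{is} the zero test --- no broadcast protocol, no token passing, and no appeal to external nested-data constructions is needed. Without this (or some concrete substitute), your reduction admits cheating runs and the proof of undecidability is incomplete.
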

\begin{proof}

For $2$-$\la$ we reduce from the halting problem on two-counter-machines. 
Two counters can be simulated using configurations of the form
\[\xymatrix@C=.5em@R=.5em{
  &            &     &q\ar@{-}[lld]\ar@{-}[rrd] & &\\
  & c_1\ar@{-}[ld]\ar@{-}[d]\ar@{-}[rd] &     &  &   & c_2\ar@{-}[ld]\ar@{-}[d]\ar@{-}[rd] \ar@{-}[rrd] &\\
\star & \star         & \star &  & \star  & \star        &\star & \star
}\]
where there are two level-$1$ nodes, one for each counter. 
The number of children at level $2$ encodes the counter value.
Zero tests can be implemented by removing the corresponding level-$1$ node and creating a new one.
This is possible only when the node is a leaf, i.e., it does not have children at level~$2$.
The state of the 2-counter machine can be maintained at level~$0$, the states at level $1$ indicate the name of the counter, and the level-$2$ states are irrelevant. 

The translation from $1$-$\la$ to VASS is straightforward and based on representing $1$-$\la$ configurations by the state at level~$0$ and, for each state at level~$1$, the count of its occurrences.  The reverse translation is based on the same idea and extends the encoding of a non-negative counter in Example~\ref{ex:la}, where the exponential blow up is simply due to the fact that vector updates in VASS are given in binary whereas $1$-$\la$ transitions operate on single branches.\qed
\end{proof}

\begin{lemma}\label{lem:la1}
$1$-$\la$ equivalence is undecidable.
\end{lemma}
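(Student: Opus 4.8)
The plan is to reduce from a known undecidable problem about counter machines, exploiting the fact that $1$-$\la$ emptiness is already only reducible to VASS reachability (Lemma~\ref{lem:la1} follows the pattern that equivalence is strictly harder than emptiness). Since traces of a $1$-$\la$ encode histories of a single non-negative counter (Example~\ref{ex:la}), two $1$-$\la$ running ``in parallel'' over the same sequence of level-$0$ and level-$1$ data letters can simulate a two-counter (Minsky) machine: one automaton tracks counter $c_1$, the other tracks $c_2$, while both read a common tag at level~$0$ encoding the instruction stream and the control state. The key difficulty is that equivalence, unlike intersection-nonemptiness, is a universal rather than existential property, so I cannot directly ``intersect'' two automata; instead I will follow the standard route of encoding the complement.

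Concretely, first I would fix a deterministic two-counter machine $\mathcal{M}$ and design an alphabet whose tags record, at level~$0$, the current instruction label and the increments/decrements/zero-test outcomes claimed for both counters at each step, and, at level~$1$, the tokens that witness the current value of a chosen counter (as in Example~\ref{ex:la}). A word over this alphabet is a \emph{purported halting run} of $\mathcal{M}$. I would then build two $1$-$\la$: automaton $\Aut_1$ accepts exactly those words that are syntactically well-formed purported runs \emph{and} are consistent with the behaviour of counter~$c_1$ (every decrement is matched, zero-tests on $c_1$ succeed only when the $c_1$-token multiset is empty, which is exactly the run-time zero-test available to leafy automata at the point an answer closes the level-$1$ subtree), ignoring the arithmetic of $c_2$; and automaton $\Aut_2$ accepts the words consistent with $c_2$ in the same way, ignoring $c_1$. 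Crucially I would arrange that $\lang{\Aut_1}\cap\lang{\Aut_2}$ is exactly the set of genuine halting runs of $\mathcal{M}$, which is nonempty iff $\mathcal{M}$ halts.

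The remaining step is to turn this intersection statement into an equivalence statement. The standard trick: let $\Aut_{\mathrm{all}}$ be a $1$-$\la$ accepting \emph{all} syntactically well-formed purported runs (this is easy: it only checks the finite-state formatting at level~$0$ and the bracket-matching at level~$1$, using a single counter for neither arithmetic check). Then consider $\Aut = \Aut_1$ together with the automaton $\Aut' $ accepting $\lang{\Aut_{\mathrm{all}}}\setminus\lang{\overline{\Aut_2}}$-type data — more carefully, since $1$-$\la$ are not obviously closed under complement, I would instead build $\Aut'$ recognizing $\lang{\Aut_{\mathrm{all}}}$ with the additional nondeterministic option of ``guessing an inconsistency in $c_2$'' at some step (a wrong decrement or a failed/spurious zero-test) and verifying it with its single counter; such a $1$-$\la$ accepts exactly the well-formed runs that are \emph{not} $c_2$-consistent, i.e.\ the complement within $\lang{\Aut_{\mathrm{all}}}$ of the $c_2$-consistent runs. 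Then $\lang{\Aut_1}\subseteq \lang{\Aut'}$ fails exactly when there is a well-formed, $c_1$-consistent, $c_2$-consistent run, i.e.\ iff $\mathcal{M}$ halts; and $\lang{\Aut_1 \cup \Aut'} = \lang{\Aut'}$ is the same condition phrased as equivalence of two $1$-$\la$ (using that $1$-$\la$ are trivially closed under union by disjoint-union of state layers). Hence $\mathcal{M}$ halts iff $\Aut_1\cup\Aut' \not\equiv \Aut'$, giving undecidability of $1$-$\la$ equivalence.

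The main obstacle I anticipate is the ``guess an inconsistency in $c_2$'' automaton $\Aut'$: verifying a \emph{spurious} zero-test (the run claims $c_2=0$ when in fact $c_2>0$) requires certifying that the current $c_2$-token multiset is nonempty, which a leafy automaton can do by keeping a level-$1$ token alive past the claimed zero-test point; verifying an \emph{unmatched decrement} requires detecting that a token was removed that was never added, which needs care to express with the leaf-addition/leaf-removal discipline and the run-time zero-test. I expect the bookkeeping to be routine but delicate: the subtlety is ensuring $\Aut'$ accepts a well-formed run \emph{if and only if} it is genuinely $c_2$-inconsistent, with no false positives (which would break the reduction) and no false negatives (which would too). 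Everything else — the finite-state formatting checks, the single-counter simulation of one counter, closure under union — is standard and mirrors the constructions already used in Example~\ref{ex:la} and Lemma~\ref{lem:la2-1}.
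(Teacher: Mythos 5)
Your overall architecture --- reduce from two-counter halting, phrase the reduction as an inclusion $\lang{\Aut_1}\subseteq\lang{\Aut'}$ between two $1$-$\la$, and realise that inclusion as an equivalence via a disjoint union --- has essentially the shape of the paper's proof, and your $\Aut'$ (nondeterministically guess one incorrect zero test on $c_2$ and certify it by keeping the witnessing level-$1$ token alive past the test) is exactly the second automaton the paper builds. The gap is in $\Aut_1$. You claim it can accept precisely the well-formed runs that are consistent with $c_1$ \emph{including correct zero tests}, by appealing to ``the run-time zero-test available to leafy automata at the point an answer closes the level-$1$ subtree''. No such test is available at level $1$ of a $1$-$\la$: the counter tokens are themselves the level-$1$ leaves, they have no children, so removing one never triggers a leaf-check against the rest of the counter; the only configuration-emptiness check the automaton ever performs is when the root is removed, i.e.\ once, at the very end of the trace. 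A transition sees only the (bounded) root state and the single leaf being added or removed, so there is no mechanism for testing mid-run that the multiset of live $c_1$-tokens is empty. This is precisely the obstacle the paper flags: the $2$-$\la$ zero-test gadget of Lemma~\ref{lem:la2-1} (delete and recreate a level-$1$ node, which \emph{does} force its level-$2$ children to be gone first) is unavailable with only levels $0$ and $1$. With $\Aut_1$ unable to validate the zero tests on $c_1$, the inclusion $\lang{\Aut_1}\subseteq\lang{\Aut'}$ no longer characterises non-halting and the reduction breaks.

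The fix --- which is the paper's actual construction --- is to push \emph{all} zero-test checking to the ``negative'' side. Let the first automaton accept every well-formed pseudo-run in which increments and decrements of \emph{both} counters are paired by shared level-$1$ data values (non-negativity and final emptiness of both counters then come for free from the question/answer discipline and the acceptance condition), with zero tests entirely unverified; and let the second automaton accept exactly those traces of the first that contain at least one \emph{incorrect} zero test, on either counter. The two languages coincide iff every such pseudo-run cheats somewhere, i.e.\ iff the machine does not halt. Your $\Aut'$ already contains all the machinery this needs; the missing move is to apply it symmetrically to both counters rather than attempting to verify one of them positively.
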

\begin{proof}

We provide a direct reduction from the halting problem for 2-counter machines, where both counters are required to be zero initially as well as finally.  The main obstacle is that implementing zero tests as in the proof of the first part of Lemma~\ref{lem:la2-1} is not available because we are restricted to leafy automata with levels $0$ and $1$ only.  To overcome it, we exploit the power of the equivalence problem where one of the $1$-$\la$ will have the task not of correctly simulating zero tests but recognising zero tests that are incorrect. The full argument can be found in Appendix~\ref{apx:leafy}.\qed
\end{proof}

\section{Local leafy automata ($\lla$)}
\label{sec:lla}

Here we identify a restricted variant of $\la$
for which the emptiness problem is decidable. We start with a technical definition.
\begin{definition}
A $k$-$\la$ is \emph{bounded} at level $i$ ($0\le i\le k$) if 
there is a bound $b$ such that each node at level $i$ can create at most $b$ children during a run.
We refer to $b$ as the \emph{branching bound}.
\end{definition}
% Note that above we do not mean a ``local" bound of the number of children in a configuration, but a ``global" bound on the number of children produced by transitions in a whole run.
Note that we are defining a ``global'' bound on the number of children that a node at level $i$ may create across a whole run, rather than a ``local'' bound on the number of children a node may have in a given configuration.

To motivate the design of $\lla$, we observe that the undecidability argument (for the emptiness problem) for $2$-$\la$ used two consecutive levels ($0$ and $1$) that are not bounded.
For the node at level $0$, this corresponded to the number of zero tests, while an unbounded counter is simulated at level $1$. 
In the following we will eliminate consecutive unbounded levels by introducing an alternating pattern of bounded and unbounded levels. Even-numbered layers ($i=0, 2, ...$) will be bounded, while odd-numbered layers will be unbounded. Observe in particular that the root (layer $0$) is bounded. As we will see later, this alternation reflects the term/context distinction in game semantics: the levels corresponding to terms are bounded, and the levels coresponding to contexts are unbounded.

%changed 'reasonable' to 'certain' to make the statement more neutral

With this restriction alone, it is possible to reconstruct the undecidability argument for $4$-$\la$, as two unbounded levels may still communicate. Thus we introduce a restriction on how many levels a transition can read and modify.
\begin{itemize}
    \item when adding or removing a leaf at an odd level $2i+1$, the automaton will be able to
    access levels $2i$, $2i-1$ and $2i-2$; while
    \item when adding or removing a leaf at an even level $2i$, the automaton
    will be able to access levels $2i-1$ and $2i-2$.
\end{itemize}
In particular, when an odd level produces a leaf, it will not be able to see the previous odd level.
The above constraints mean that the transition functions $\delta^{(i)}_{\Q}, \delta^{(i)}_{\Q}$ can be
presented in a more concise form, given below.

\[
\delta^{(i)}_{\Q}\subseteq \begin{cases}
        Q^{(i-2, i-1)} \times \Sigma_\Q \times Q^{(i-2, i-1, i)} 
            & \text{if $i$ is even} \\
        Q^{(i-3, i-2, i-1)} \times \Sigma_\Q \times Q^{(i-3, i-2, i-1, i)} 
            & \text{if $i$ is odd}
            \end{cases}
\]
   \[
    \delta^{(i)}_\A \subseteq\begin{cases}
        Q^{(i-2, i-1, i)} \times \Sigma_\A \times Q^{(i-2, i-1)} 
            & \text{if $i$ is even} \\
        Q^{(i-3, i-2, i-1, i)} \times \Sigma_\A \times Q^{(i-3, i-2, i-1)} 
            & \text{if $i$ is odd}
    \end{cases}
    \]
    %Note that the  relations above contain smaller tuples than the transition relation for %standard $\la$. 
    %This corresponds to the constraint that $\sla$ may only access ancestors close to a newly %created node or a node to be removed, 
    %and may not make transitions on the basis of information that is not available at that level. 
In terms of the previous notation developed for $\la$, 
$(q^{(i-2)}, q^{(i-1)},x,r^{(i-2)}, r^{(i-1)},$ $r^{(i)})\in\delta^{(i)}_\Q$ represents
all tuples of the form $(\vec{q}, q^{(i-2)}, q^{(i-1)}, x, \vec{q}, r^{(i-2)},r^{(i-1)}, r^{(i)})$,
where $\vec{q}$ ranges over $Q^{(0,\cdots,i-3)}$.
\begin{definition}
A level-$k$ \emph{local leafy automaton} ($k$-$\lla$) is a $k$-$\la$ whose transition function admits the above-mentioned
presentation and which is bounded at all even levels.
\end{definition}

\cutout{
\begin{definition}[$\sla$]
A level-$k$ \emph{short-sighted} leafy automaton ($k$-$\sla$) is a tuple $\mathcal{A} = \abra{\Sigma, k, Q, \delta}$, where
\begin{itemize}
    \item $\Sigma = \Sigma_Q+\Sigma_A$ is a finite alphabet, partitioned into questions and answers;
    \item $k\ge 0$ is the level parameter;
    \item $Q= \sum_{i=0}^k Q^{(i)}$ is a finite set of states, partitioned into sets $Q^{(i)}$ (level-$i$ states);
    \item $\delta = \delta_Q + \delta_A$ is the transition function, partitioned into question and answer transitions, such that
    
    \item $\delta_\Q=\sum_{i=0}^k \delta^{(i)}_{\Q}$, where
    \[
    \delta^{(i)}_{\Q}\subseteq \begin{cases}
        Q^{(i-2, i-1)} \times \Sigma_\Q \times Q^{(i-2, i-1, i)} 
            & \text{if $i$ is even} \\
        Q^{(i-3, i-2, i-1)} \times \Sigma_\Q \times Q^{(i-3, i-2, i-1, i)} 
            & \text{if $i$ is odd}
    \end{cases}
    \]
    \item  $\delta_{\A}=\sum_{i=0}^k \delta^{(i)}_{\A}$, where 
 
\end{itemize}
\end{definition}

Formally, given a $k$-$\sla$ $\Aut=\abra{\Sigma,k,Q,\delta}$, its configuration graph will be the same as that of the $k$-$\la$ $\Aut^+=\abra{\Sigma,k,Q,\delta^+}$, where $\delta^+$ is defined by including the missing states from $\delta$. For example, for question transitions which add an even leaf: $ \forall~\vec{q}\in \prod_{j=0}^{i-3} Q^{(j)}$, 
\[
(\vec{q}, q^{(i-2)}, q^{(i-1)},
x,
\vec{q}, r^{(i-2)},r^{(i-1)},
r^{(i)})\in\delta^+ \]
if and only if 
\[(q^{(i-2)}, q^{(i-1)},x,r^{(i-2)}, r^{(i-1)}, r^{(i)})\in\delta.\]

With this in mind, we extend the notation $\trace{\Aut}, \lang{\Aut}$ to $k$-$\sla$, and we may also consider boundedness of $\sla$ at various levels. We shall call a $k$-$\sla$ \emph{bounded} if all its even levels are bounded.
}
% Note that, in comparison to an $\la$, the transition relations $\delta_\Q^{(i)}, \delta_\A^{(i)}$ contain smaller tuples.
% This corresponds to the fact that an $\sla$, when adding/removing a node, need not have access to all ancestors of that node. 
% Instead, it is meant to make transitions on the basis of the visible information only, ignoring the states that it cannot see.
% More formally, given a $k$-$\sla$ $\Aut=\abra{\Sigma,k,Q,\delta}$, its configuration graph will be the same as that of the $k$-$\la$ $\Aut^+=\abra{\Sigma,k,Q,\delta^+}$, where $\delta^+$ is defined by adding the missing states to $\delta$, e.g. for any $\vec{q}\in \prod_{j=0}^{i-3} Q^{(j)}$, $(\vec{q},$ $q^{(i-2)},$ $q^{(i-1)},$ $x,$ $\vec{q},$ $r^{(i-2)},r^{(i-1)}, r^{(i)})\in\delta^+$ if and only if 
% $(q^{(i-2)}, q^{(i-1)},x,r^{(i-2)},$ $r^{(i-1)}, r^{(i)})\in\delta$.
% On this understanding, we extend the notation
% $\trace{\Aut}, \lang{\Aut}$ to $k$-$\sla$, and we can also 
% discuss boundedness of $\sla$ at various levels.
% We shall call a $k$-$\sla$ \emph{bounded} if all of its even levels are bounded.

\newcommand{\interrupt}{\ensuremath{\mathsf{INT}}}

\begin{theorem}
\label{thm:sla-decidable}
The emptiness problem for $\lla$ is decidable.
\end{theorem}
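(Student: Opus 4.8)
The plan is to reduce the emptiness problem for a $k$-$\lla$ to a finite number of instances of the reachability problem for Petri nets (equivalently VASS), which is decidable. The key structural feature to exploit is the locality of the transition function together with the boundedness of all even levels. Because even levels are bounded by a global branching bound $b$, the portion of any reachable configuration that lives at even levels is ``finite-state-like'': along any branch, each even node creates at most $b$ children over the whole run, so the even skeleton of the tree can be tracked by a bounded amount of control information. The unbounded content lives entirely at odd levels, and there the locality constraint guarantees that a transition adding/removing an odd leaf at level $2i+1$ only inspects levels $2i$, $2i-1$, $2i-2$ — in particular it never sees the sibling odd level $2i-1$'s \emph{other} children, so different odd subtrees hanging off a common even parent evolve independently and can be counted rather than individually tracked.

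Concretely, I would proceed by induction on $k$ (or equivalently peel off levels from the top). First I would handle the base cases: for small $k$ the automaton is essentially a $1$-$\la$ or a finite automaton, and Lemma~\ref{lem:la2-1} already gives us the $1$-$\la$-to-VASS reduction. For the inductive step, fix the (bounded) level-$0$ root: since level $0$ is bounded by $b$, the root can create at most $b$ children at level $1$ during the run. Each level-$1$ child, together with everything below it, is itself (modulo re-indexing) the behaviour of a $(k-2)$-$\lla$-like object, but crucially the root only communicates with a level-$1$ child through the local window (levels $0$ and $1$, and when a level-$1$ leaf is created it additionally sees the nonexistent level $-1$, i.e.\ nothing extra). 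So the root's state evolution interleaves at most $b$ such sub-behaviours, and by locality these sub-behaviours never interact with each other except through the shared root state. This is exactly the pattern of asynchronous product that VASS/Petri nets can simulate: represent the root's current level-$0$ state in the finite control, and for each level-$1$ child represent its ``summary'' (current level-$1$ state, plus a recursively constructed Petri-net gadget for its subtree) — with the bound $b$ making the number of live children finite and explicitly enumerable. An accepting run is one that returns to the empty configuration, which translates into a reachability query (reach the zero marking with the control in a designated accepting state) rather than mere coverability; this is why we need full reachability, and it is available by~\cite{LerouxS19}.

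The main obstacle, and the step that needs the most care, is making the recursion bottom out correctly through the \emph{odd} levels, where the automaton is unbounded. The right move is to ``telescope'' each bounded even level into the finite control and let each unbounded odd level contribute a counting dimension: since by locality a transition at an odd level cannot compare two distinct odd-level siblings, the multiset of odd subtrees hanging off a given even node can be aggregated, with one Petri-net place per reachable ``type'' of odd subtree (a type being determined by the finitely much information an even parent can ever observe about it). Verifying that this aggregation is sound — that no transition of the $\lla$ can distinguish two odd subtrees of the same type, and that answer-transitions' implicit ``zero-test-like'' leaf-removal discipline is faithfully captured by Petri net transitions without needing genuine zero tests across unbounded dimensions — is the crux, and is precisely where the $\lla$ restrictions (even levels bounded; odd-level transitions blind to the previous odd level) are used. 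One must also check the accounting at the boundary where an even level of branching bound $b$ sits above an unbounded odd level: the $b$ distinct children must be kept individually addressable in the control (they may be in different states and answered in different orders), which is fine since $b$ is a constant, while each of their unbounded odd-grandchild populations is what gets counted. Once this correspondence is set up, decidability follows immediately from reachability for VASS.
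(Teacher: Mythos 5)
Your proposal follows essentially the same route as the paper: use locality to confine each even-level subtree's interaction with its ancestors to a two-level window, use the branching bound $b$ to make that interaction bounded in length (at most $2b+2$ frontier events, namely the creation and destruction of the even node and of each of its at most $b$ odd children), aggregate the unboundedly many sibling subtrees by this finite interaction ``type'' (the paper calls it a \emph{summary}), recurse over the levels, and finish with VASS reachability. Two points need tightening. First, you state the aggregation backwards in one place: the odd children of an even node number at most $b$ and must be tracked individually in the finite control (as you correctly say later), whereas it is the unboundedly many \emph{even}-level children of an odd node that get aggregated into counters indexed by summary type; relatedly, the subtree that re-indexes to a $(k-2)$-level $\lla$-like object is the one rooted at a level-$2$ node, not a level-$1$ node, since an $\lla$ must have a bounded root. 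Second, you flag the leaf-removal discipline for answer transitions as ``the crux'' but leave it unresolved; the paper's resolution is to omit the leaf check at removal time and observe that, because a removed node's child counters are never incremented again, the single zero test built into the final reachability query (empty configuration, all counters zero) is equivalent to having performed the check at each removal. With those repairs your sketch matches the paper's argument.
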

\begin{proof}[Sketch]
Let $b$ be a bound on the number of children created by each even node during a run.

The critical observation is that, once a node $d$ at even level $2i$ has been created, all subsequent actions of descendants of $d$ access (read and/or write) the states at levels $2i-1$ and $2i-2$ at most $2b$ times. The shape of the transition function dictates that this can happen only when child nodes at level $2i+1$ are added or removed. 
In addition, the locality property ensures that the automaton will never access levels $< 2i-2$ at the same time as node $d$ or its descendants. 
% Since we know that the number of $d$'s children is bounded by $b$, there will be at most $2b$ instances at which the automaton will read/write states at levels $2i-1$ and $2i-2$ when operating on descendants of $d$, and it will never access levels $< 2i-2$ at the same time as $d$ or its descendant. 

We will make use of these facts to construct \emph{summaries} for nodes on even levels which completely describe such a node's lifetime, from its creation as a leaf until its removal, and in between performing at most $2b$ reads-writes of the parent and grandparent states.
A summary is a sequence quadruples of states: two pairs of states of levels $2i-2$ and $2i-1$. The first pair are the states we expect to find on these levels, while the second are the states to which we update these levels. Hence a summary at level $2i$ is a complete record of a valid sequence of read-writes and stateful changes during the lifetime of a node on level $2i$.

% To capitalise on the observation, we are going to compute summaries for nodes at even nodes, describing their lifetime, i.e. from the point of creation to a point where they are leaves again.

% Thus, our summaries will take the form
% \[
% S^{(2i)}_x = Q^{(2i)} \times (Q^{(2i-2, 2i-1)} \times Q^{(2i-2, 2i-1)})^{\leq 2b} \times Q^{(2i)}
% \]
% where the first and last component are the initial and final states of the summary, and the at most $2b$ components in-between represent at most $2b$ reads-writes of the parent and grandparent states.

We proceed by induction and show how to calculate the complete set of summaries at level $2i$ given the complete set of summaries at level $2i+2$. We construct a program for deciding whether a given sequence is a summary at level $2i$.
This program can be evaluated via Vector Addition Systems with States (VASS). Since we can finitely enumerate all candidate summaries at level $2i$, this gives us a way to compute summaries at level $2i$. Proceeding this way, we finally calculate summaries at level $2$.
At this stage, we can reduce the emptiness problem for the given $\lla$ to a reachability test on a VASS.

% , Then the reachability problem for VASS
% allows us to decide whether a particular tuple 
% is a genuine summary at level $2k$, assuming 
% summaries at level $2k+2$ have already been computed correctly.
The complete argument is given in Appendix~\ref{apx:sla}.
\qed
\end{proof}

Let us remark also that the problem becomes undecidable if we remove either boundedness restriction, or allow transitions to look one level further.
% !TEX root =  main.tex
\section{From FICA to LA}
\label{sec:tola}

%\igw{We need an introduction here. I am not sure if we follow my suggestion to move also the other direction of the correspondance here. }\am{OK, Section 8 stays where it is.}

Recall from Section~\ref{sec:gs} that, to interpret base types, game semantics uses moves from the set
\[\begin{array}{rcl}
\moveset &=& M_{\sem{\comt}}\cup M_{\sem{\expt}}\cup M_{\sem{\vart}} \cup M_{\sem{\semt}}\\
&=&\{\, \mrun,\, \mdone,\, \mq,\, \mread,\, \mgrb,\, \mrls,\, \mok\, \}\cup \{\,i,\, \mwrite{i}{}\,|\, 0\le i \le \max\,\}.
\end{array}\]
The game semantic interpretation of 
a term-in-context $\seq{\Gamma}{M:\theta}$ is a strategy over the arena $\sem{\seq{\Gamma}{\theta}}$,
which  is obtained through product and arrow constructions,  starting from arenas corresponding to base types.
As both constructions rely on the disjoint sum, the moves from $\sem{\seq{\Gamma}{\theta}}$ are derived
from  the base types present in types inside $\Gamma$ and $\theta$.
To indicate the exact occurrence of a base type from which each move originates, we will  annotate elements of $\moveset$ with
a specially crafted scheme of superscripts.
Suppose  $\Gamma=\{x_1:\theta_1,\cdots, x_l:\theta_l\}$.
The superscripts will have one of the two forms,  where $\vec{i}\in\N^\ast$ and $\rho\in\N$:
\begin{itemize}
\item $(\vec{i},\rho)$ will be used to represent moves from $\theta$;
\item $(x_v\vec{i}, \rho)$ will be used to represent moves from $\theta_v$ ($1\le v\le l$).
\end{itemize}
The annotated moves will be written as $m^{(\vec{i},\rho)}$ or $m^{(x_v\vec{i},\rho)}$, where $m\in\moveset$.
We will sometimes omit $\rho$ on the understanding that this represents $\rho=0$.
Similarly, when $\vec{i}$ is omitted, the intended value is~$\epsilon$. Thus, $m$ stands for $m^{(\epsilon,0)}$.

The next definition explains how the $\vec{i}$ superscripts are
linked to moves from $\sem{\theta}$.
Given $X\subseteq \{ m^{(\vec{i},\rho)} \,|\, \vec{i}\in\N^\ast,\,\rho\in\N\}$ and $y\in \N\cup \{x_1,\cdots, x_l\}$, 
we let $yX = \{m^{(y\vec{i},\rho)}\,|\, m^{(\vec{i},\rho)}\in X\}$.%\amnote{definition modified in response to Ranko}
\begin{definition}\label{def:tags}
Given a type $\theta$, the corresponding alphabet $\alp{\theta}$ is defined as follows
\[\begin{array}{rcl}
\alp{\beta}&=&\{\, m^{(\epsilon,\rho)}\,|\, m\in M_{\sem{\beta}},\,\rho\in\N\,\}\qquad \beta=\comt,\expt,\vart,\semt\\
\alp{\theta_h\rarr\ldots\rarr\theta_1\rarr\beta}&=& \bigcup_{u=1}^h (u\alp{\theta_u}) \cup \alp{\beta}
\end{array}\]
For $\Gamma=\{x_1:\theta_1,\cdots, x_l:\theta_l\}$,
the alphabet $\alp{\seq{\Gamma}{\theta}}$ is defined to be 
$\alp{\seq{\Gamma}{\theta}}=\bigcup_{v=1}^l (x_v \alp{\theta_v}) \cup \alp{\theta}$.
\end{definition}
\begin{example}
The alphabet  $\alp{\seq{f:\comt\rarr\comt, x:\comt}{\comt}}$ is
  \[\{ \mrun^{(f1,\rho)}, \mdone^{(f1,\rho)}, \mrun^{(f,\rho)}, \mdone^{(f,\rho)}, \mrun^{(x,\rho)},\mdone^{(x,\rho)},\mrun^{(\epsilon,\rho)},\mdone^{(\epsilon,\rho)}\,|\, \rho\in \N \}.\]
\end{example}%\amnote{added a sentence explaining 'finite subsets'}
To represent the game semantics of terms-in-context, of the form $\seq{\Gamma}{M:\theta}$,
we are going to use \emph{finite subsets} of $\alp{\seq{\Gamma}{\theta}}$ as alphabets in leafy automata.
The subsets will be finite, because $\rho$ will be bounded.
Note that $\alp{\theta}$ admits a natural partitioning into questions and answers, depending on whether the underlying move is a question or answer. 

%\igwnote{I would not make (Play represenation a deinfition. It is long, difficult to read in italic, and mixed between formal and explanations. Just make it normal text. I understand this creates some problem in Thm21 but it can be dealt with. I have not implemented this as I understand that this is controversial.}

%\paragraph{Play representation}
We will represent plays using data words in which the underpinning
sequence of tags will come from an alphabet as defined above.
Superscripts and data are used to represent justification pointers.
Intuitively, we represent occurrences of questions with data values.
Pointers from answers to questions just refer to these values.
Pointers from questions use bounded indexing with the help of~$\rho$.
%correspond to the sequence of moves in the play.
%Next we explain how $\rho$ is used to represent justification pointers.

%\igwnote{My rewriting of the next para}
Initial question-moves do not have a pointer and to represent such questions we simply use $\rho=0$. 
For non-initial questions,
we rely on the tree structure of $\D$ and use $\rho$ to indicate the ancestor of the currently read data value that we mean to point at.
Consider a trace $w (t_i,d_i)$ ending in a non-initial question, where $d_i$ is a level-$i$  data value and $i>0$. 
In our case, we will have $t_i\in\alp{\seq{\Gamma}{\theta}}$, i.e.
$t_i=m^{(\cdots, \rho)}$.
By Remark~\ref{rem:lawork}, trace $w$ contains unique occurrences of questions
$(t_0,d_0), \cdots, (t_{i-1},d_{i-1})$ such that $\pred{d_j}=d_{j-1}$ for $j=1,\cdots, i$.
The pointer from $(t_i,d_i)$ goes to one of these questions, and we use $\rho$ to represent
the scenario in which the pointer goes to  $(t_{i-(1+\rho)},d_{i-(1+\rho)})$.

Pointers from answer-moves to question-moves are represented 
simply by using the same data value in both moves (in this case we use $\rho=0$).

% Initial question-moves do not have a pointer and to represent such questions we simply use $\rho=0$. Next we discuss non-initial questions.
% Recall from Remark~\ref{rem:lawork}
% that whenever a leafy automaton has read a trace $w (t_i,d_i)$, where $d_i$ is a level-$i$ data value and $i>0$, then $w$ will contain unique occurrences of questions
% $(t_0,d_0), \cdots, (t_{i-1},d_{i-1})$ such that $\pred{d_j}=d_{j-1}$ for $j=1,\cdots, i$.
%  In our case, we will have $t_i\in\alp{\seq{\Gamma}{\theta}}$, i.e.
%  $t_i=m^{(\cdots, \rho)}$.
%  We shall use the $\rho$ superscript to represent justification pointers from (non-initial) question-moves as follows:
%  if $t_i=m^{(\cdots, \rho)}$ and 
%  $m$ is a question, then $\rho$ represents a justification pointer from $(t_i,d_i)$ to $(t_{i-(1+\rho)},d_{i-(1+\rho)})$,
%  i.e., $\rho$ indicates that, to find the pointer, 
%  we should move $\rho+1$ levels up from $d_i$ in $\D$.
% Finally, pointers from answer-moves to question-moves will be represented by the fact that the corresponding tags from $\alp{\seq{\Gamma}{\theta}}$ 
%  will occur with the same data value in the representing data word.
%  In this case, we will use $\rho=0$.
 
 We will also use  $\epsilon$-tags $\eq$ (question) and $\ea$ (answer), which do not contribute moves to the represented play. Each $\eq$ will always be answered with $\ea$. Note that the use of $\rho,\eq,\ea$ means that several data words may represent the same play (see Examples~\ref{ex:play},~\ref{ex:play2}).
   \begin{example}\label{ex:play}
   Suppose that $d_0=\pred{d_1}, d_1=\pred{d_2}=\pred{d_2'}, d_2=\pred{d_3}$, $d_2'=\pred{d_3'}$.
Then the data word
$(\mrun,d_0)$ $(\mrun^f,d_1)$ $(\mrun^{f1}, d_2)$ $(\mrun^{f1}, d_2')$ $(\mrun^{(x,2)},d_3)$ $(\mrun^{(x,2)}, d_3')$ $(\mdone^x,d_3)$,
which is short for
$(\mrun^{(\epsilon,0)},d_0)$ $(\mrun^{(f,0)},d_1)$ $(\mrun^{(f1,0)}, d_2)$ $(\mrun^{(f1,0)}, d_2')$ $(\mrun^{(x,2)},d_3)$ $(\mrun^{(x,2)}, d_3')$ $(\mdone^{(x,0)},d_3)$,
represents the play
\medskip
\[\begin{array}{ccccccc}
\rnode{Z}{\mrun} &
\rnode{A}{\mrun^f}\justf{A}{Z} &
\rnode{B}{\mrun^{f1}}\justf{B}{A} &
\rnode{C}{\mrun^{f1}}\justn{C}{A}{140} &
\rnode{D}{\mrun^{x}}\justn{D}{Z}{150} &
\rnode{E}{\mrun^{x}} \justn{E}{Z}{150} &
\rnode{F}{\mdone^x}\justn{F}{D}{140}\\
O &P&O&O&P &P & O.
\end{array}\]
\end{example}
\begin{example}
Consider the $\la$ $\Aut=\abra{Q,3,\Sigma,\delta}$,
where $Q^{(0)}=\{0,1,2\}$, $Q^{(1)}=\{0\}$, 
$Q^{(2)}=\{0,1,2\}$, $Q^{(3)}=\{0\}$,
$\Sigma_\Q=\{\mrun,\mrun^f,\mrun^{f1},\mrun^{(x,2)}\}$,
$\Sigma_\A=\{\mdone,\mdone^f,$ $\mdone^{f1},\mdone^x\}$,
and $\delta$ is given by
\[\begin{array}{c}
\dagger\trans{\mrun} 0\qquad
0 \trans{\mrun^f}{(1,0)}\qquad
(1,0) \trans{\mdone^f} 2 \qquad
2 \trans{\mdone} \dagger \qquad
(1,0) \trans{\mrun^{f1}} (1,0,0)\\
(1,0,0)\trans{\mrun^{(x,2)}} (1,0,1,0)\qquad
(1,0,1,0) \trans{\mdone^{(x,0)}} (1,0,2)\qquad
(1,0,2) \trans{\mdone^{f1}} (1,0)
\end{array}\]
Then traces from $\trace{\Aut}$  represent
all plays from $\sigma=\llbracket f:\comt\rarr\comt,\, x:\comt \,\vdash\, f x
% :\com
\rrbracket$, including the play from Example~\ref{ex:play},
and $\lang{\Aut}$ represents $\comp{\sigma}$.
\end{example}

\begin{example}\label{ex:play2}
One might wish to represent plays of  
$\sigma$ from the previous Example
using data values  
$d_0,d_1,d_1',d_1'',d_2,d_2'$ such that
$d_0=\pred{d_1}=\pred{d_1'}=\pred{d_1''}$, $d_1=\pred{d_2}=\pred{d_2'}$,
so that the play from Example~\ref{ex:play} is represented
by
$(\mrun^{(\epsilon,0)},d_0)$ $(\mrun^{(f,0)},d_1)$ $(\mrun^{(f1,0)}, d_2)$ $(\mrun^{(f1,0)}, d_2')$ $(\mrun^{(x,0)},d_1')$ $(\mrun^{(x,0)}, d_1'')$ $(\mdone^{(x,0)},d_1')$.
Unfortunately, it is impossible to construct a $2$-$\la$ that would 
accept all representations of such plays. To achieve this, 
the automaton would have to make sure that the number of $\mrun^{f1}$s is the same
as that of $\mrun^x$s. Because the former are labelled with level-$2$
values and the latter with incomparable level-$1$ values,
the only point of communication (that could be used for comparison)
is the root. However, the root cannot
accommodate unbounded information, while plays of $\sigma$
can feature an unbounded number of $\mrun^{f1}$s, which could well be consecutive.
\end{example}

%\igwnote{My rewriting of the paragraphs before the theorem}
%\amnote{rephrased a bit - there is no P=even, O=odd correspondence, we have even=OQ,PA, odd=PQ,OA}

Before we state the main result linking $\fica$ with leafy automata, we note some structural properties 
of the automata.
Questions will create a leaf, and answers will remove a leaf.
P-moves add leaves at odd levels (questions) and remove leaves at even levels (answers),
while O-moves have the opposite effect at each level.
Finally, when removing nodes at even levels we will not need to check if a node is a leaf.
We call the last property \emph{even-readiness}.

Even-readiness is a consequence of the WAIT condition in the game semantics.
The condition captures well-nestedness of concurrent interactions -- 
a term  can terminate only after subterms terminate.
In the leafy automata setting, this is captured by the requirement that only leaf nodes can be removed, i.e. a node can be removed only if
all of its children have been removed beforehand. 
It turns out that, for \emph{P-answers} only, 
this property will come for free. 
Formally, whenever the automaton arrives at a configuration
$\kappa=(D,E,f)$, where $d\in E$ and there is a transition
\[
(f(\predc^{(2i)}(d)),\cdots,f(\pred{d}),  f(d), t, f'(\predc^{(2i)}(d)),\cdots,f'(\pred{d}))\in\delta^{(2i)}_{\A}, 
\]
then $d$ is a leaf.
In contrast, our automata will not satisfy the same property for O-answers (the environment) and for such transitions it is crucial that
the automaton actually checks that only leaves can be removed.

\begin{theorem}\label{thm:trans}
For any $\fica$-term $\seq{\Gamma}{M:\theta}$, there exists an  even-ready leafy automaton $\Aut_M$ 
over a finite subset of $\alp{\seq{\Gamma}{\theta}}+\{\eq,\ea\}$ such that the set of plays represented by data words from $\trace{\Aut_M}$ 
is exactly $\sem{\seq{\Gamma}{M:\theta}}$. Moreover,
$\lang{\Aut_M}$ represents $\comp{\sem{\seq{\Gamma}{M:\theta}}}$
in the same sense.
\end{theorem}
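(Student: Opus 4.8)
The plan is to proceed by induction on the typing derivation of $\seq{\Gamma}{M:\theta}$, mirroring the way $\sem{\seq{\Gamma}{M}}$ is built in~\cite{GM08}: from copycat strategies for free identifiers and from fixed auxiliary strategies for each term former, glued together by composition, pairing and currying in the Cartesian closed category of arenas and saturated strategies. Before the induction I would pin down the representation relation hinted at just before Definition~\ref{def:tags}: a data word $w$ over a finite subset of $\alp{\seq{\Gamma}{\theta}}+\{\eq,\ea\}$ represents the play obtained by deleting all $\eq$- and $\ea$-tagged letters, reading the tag of each remaining letter as a move of $\sem{\seq{\Gamma}{\theta}}$, drawing an answer-to-question pointer whenever two letters share a data value, and drawing a question-to-question pointer from $(m^{(\cdots,\off)},d)$ to the occurrence of $\predc^{1+\off}(d)$, using Remark~\ref{rem:lawork}. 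The level of a data value is the depth of the corresponding move in this justification forest; since $\fica$ is recursion-free all types have bounded order, so this depth -- hence the required level parameter $k$ of $\Aut_M$ -- is bounded by a quantity determined by the sequent (essentially its order). The statement proved by induction is the two-sided one: every trace of $\Aut_M$ represents a play of $\sem{\seq{\Gamma}{M}}$, and every such play is represented by some trace; restricting to accepting runs (empty final configuration, i.e.\ all questions answered) yields the analogous correspondence with $\comp{\sem{\seq{\Gamma}{M}}}$.

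For the base cases I would exhibit the automata directly. Constants ($\skipcom$, $\divcom_\theta$, $i$, $\arop{-}$) give automata with a handful of states, the only subtlety being that $\divcom$ must still read the initial O-question, matching $\sem{\seq{\Gamma}{\divcom}}=\{\epsilon,\mrun\}$. For a free identifier $\seq{\Gamma,x:\theta}{x:\theta}$ the automaton implements copycat between the two copies of $\sem{\theta}$: every O-move in one copy is matched by the corresponding P-move in the other at the same tree position and with the same $\off$, which is where one checks even-readiness and verifies that the data/$\off$ bookkeeping faithfully transports pointers. Crucially, each base automaton must recognise the representations of the \emph{saturated} strategy, i.e.\ its trace set must already be closed under the move-reorderings of the model of~\cite{GM08}; for copycat and for the fixed generators this closure is obtained by allowing O-moves to be consumed as early as the tree structure permits and P-moves to be deferred.

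The inductive steps all have the form ``$\sem{\seq{\Gamma}{M}}$ is a composite $\tau\circ(\text{tuple of }\sem{\seq{\Gamma}{M_j}})$'' with $\tau$ one of the fixed generators for sequencing, $\parc$, conditional, $\while{-}{-}$, assignment, dereferencing, $\grb{-}$, $\rls{-}$, $\newin{-}{-}$, $\newsem{-}{-}$, application and $\lambda$-abstraction; currying/uncurrying and pairing amount to relabelling tags and are immediate. The heart of the argument is therefore a single \textbf{Composition Lemma}: if $\sigma:A\Rightarrow B$ and $\tau:B\Rightarrow C$ are recognised (representationally, in saturated form, and even-readily) by leafy automata, then so is $\tau\circ\sigma:A\Rightarrow C$. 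I would build the composite automaton by running copies of $\Aut_\tau$ and $\Aut_\sigma$ side by side: $C$- and $A$-moves are read as before, while a $B$-move is read jointly by the two components and relabelled $\eq$ or $\ea$ so that it still creates or deletes its leaf -- keeping the configuration a tree -- without contributing to the represented play. Two phenomena must be managed. First, \emph{level shift}: $B$ occupies the lower levels inside $A\Rightarrow B$, but since a $B\Rightarrow C$-initial move is a $C$-move, the entire $B$-subtree is grafted one level below a $C$-initial move in the composite; the layered state sets $Q^{(i)}$ are re-indexed accordingly, which is exactly why the level parameter grows along the induction. Second, \emph{copy multiplicity}: $\tau$ may open unboundedly many threads in $B$, each a whole play of $\sigma$; since a leafy automaton spawns a fresh subtree per new leaf and a transition touches only one branch, each copy lives under its own node and the components' states for distinct copies never interfere -- this is the place where the tree data structure (rather than a flat one, cf.\ Example~\ref{ex:play2}) is essential. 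Saturation of the composite follows from saturation of the components by the standard argument (hidden reorderings of $B$-moves plus visible reorderings inherited from $\sigma$ and $\tau$), recorded in the automaton by the same ``O early, P late'' discipline as in the base cases. Even-readiness is kept as an invariant: a P-answer at an even level is enabled only once the corresponding node has become a leaf, which -- as observed before the statement -- is forced by the WAIT condition (a term terminates only after its subterms do, so in the representation every child play has already closed with its own answer), whereas for O-answers the leaf-check is retained.

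I expect the Composition Lemma to be the main obstacle, and within it the simultaneous bookkeeping of (i)~the level re-indexing of the hidden arena $B$, (ii)~fresh-subtree allocation for each of the unboundedly many copies of $\sigma$ that $\tau$ may use, and (iii)~showing the resulting trace set is \emph{exactly} the representation of the saturated composite -- neither more nor less. The padding tags $\eq,\ea$ are precisely the device reconciling (i)--(ii) with the discipline ``a trace is a branch-labelled history of a growing/shrinking tree'', at the harmless cost of several data words representing one play. Everything else is routine: finiteness of the tag alphabet holds because $\off$ stays bounded along the induction; the bound on $k$ is as above; and the $\comp{-}$ part of the statement follows by the same induction, noting that a leafy automaton run is accepting iff every question it read has been answered, which corresponds exactly to completeness of the represented play.
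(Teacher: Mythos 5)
Your representation machinery (pointer encoding via data values and offsets, the role of $\eq,\ea$ as padding, WAIT forcing even-readiness for P-answers, and accepting runs corresponding to complete plays) matches the paper's. But your overall route is genuinely different: the paper does \emph{not} induct on typing derivations and does \emph{not} prove a general Composition Lemma. It first converts the term to $\beta\eta$-normal form and then inducts on the structure of normal forms, giving a direct, bespoke automaton construction for each term former ($M_1||M_2$, $M_1;M_2$, $\newin{x}{M_1}$, head application $fM_h\cdots M_1$, etc.). In that setting the only composition-like operations ever performed are (a) hiding a \emph{base-type} arena in $\newin$/$\newsem$, where the hidden moves are immediately answered and the hidden component's state is finite and kept at the root, and (b) head application of a \emph{free} variable, where the unboundedly many argument threads are spawned by O, each under its own fresh node, with no state exchanged between them. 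These are exactly the degenerate instances of composition for which the tree-locality of leafy automata suffices; the normal-form discipline is what guarantees nothing harder ever arises.

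This points to the genuine gap in your proposal: the Composition Lemma, which you correctly identify as the crux, is asserted rather than proved, and in the generality you need it (induction on typing derivations forces you to handle the application case $MN$ by composing two \emph{arbitrary} higher-order term strategies across an arbitrary intermediate arena $B$) it is a substantially stronger claim than anything the paper establishes. The paper's Example~\ref{ex:play2} shows that representability by a leafy automaton is sensitive to the choice of tree-coding of a play, and that natural codings of even simple strategies can fail to be recognisable because the only channel for unbounded synchronisation is a single branch. Your sketch of the composite automaton (grafting each $\sigma$-copy one level down, relabelling $B$-moves to $\eq/\ea$, re-indexing levels) is plausible for the well-behaved cases, but you give no argument that the $\tau$-states and $\sigma$-states that a single $B$-transition must consult always lie on one branch of the composite tree, nor that the resulting trace set is \emph{exactly} the representation of the hidden interaction (existence of a witnessing interaction versus all interleavings). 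Until that lemma is proved, the induction does not go through; the paper's normalisation step is precisely the device that makes it unnecessary.
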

\begin{proof}[Sketch]
Because every $\fica$-term can be converted to $\beta\eta$-normal form, we use induction on the structure of such normal forms. 
The base cases are: $\seq{\Gamma}{\skipcom:\comt}$ ($Q^{(0)}= \{0\}$;
$\dagger \trans{\mrun} 0$,
$0 \trans{\mdone} \dagger$), $\seq{\Gamma}{\divcom:\comt}$ ($Q^{(0)}= \{0\}$; $\dagger \trans{\mrun} 0$),
and $\seq{\Gamma}{i:\expt}$ ($Q^{(0)}= \{0\}$; 
$\dagger \trans{\q} 0$, $0 \trans{i} \dagger$).

The remaining cases are inductive.
When referring to the inductive hypothesis for a subterm $M_i$,
we shall use subscripts $i$ to refer to the automata components, 
e.g. $Q_i^{(j)}$,  $\trans{\mm}_i$ etc.
In contrast, $Q^{(j)}$, $\trans{\mm}$ will refer to the automaton that is being constructed.
Inference lines $\frac{\qquad}{\qquad}$ will indicate that the transitions listed under the line should be added
to the new automaton provided the transitions listed above the line are present in the automaton obtained via induction hypothesis. We discuss a selection of technical cases below.

\paragraph{$\seq{\Gamma}{M_1|| M_2}$}
In this case we need to run the automata for $M_1$ 
and $M_2$ concurrently. To this end, their level-$0$ states will be combined ($Q^{(0)} =  Q_1^{(0)} \times Q_2^{(0)}$),
but not deeper states ($Q^{(j)}= Q_1^{(j)}+Q_2^{(j)}, 1\le j\le k$).
The first group of transitions activate and terminate the two components respectively:
$\frac{\dagger\trans{\mrun}_1 q_1^{(0)}\qquad \dagger\trans{\mrun}_2 q_2^{(0)}}{\dagger\trans{\mrun}(q_1^{(0)},q_2^{(0)}) }$,
$\frac{q_1^{(0)}\trans{\mdone}_1 \dagger\qquad q_2^{(0)}\trans{\mdone}_2 \dagger}{(q_1^{(0)},q_2^{(0)})\trans{\mdone}\dagger}$.
The remaining transitions advance each component:
$\frac{(q_1^{(0)}, \cdots, q_1^{(j)}) \trans{\mm}_1 (r_1^{(0)},\cdots, r_1^{(j')})\qquad q_2^{(0)}\in Q_2^{(0)}}{((q_1^{(0)},q_2^{(0)}), \cdots, q_1^{(j)}) \trans{\mm} ((r_1^{(0)},q_2^{(0)}),\cdots, r_1^{(j')})}$,
$\frac{q_1^{(0)}\in Q_1^{(0)}\qquad (q_2^{(0)}, \cdots, q_2^{(j)}) \trans{\mm}_2 (r_2^{(0)},\cdots, r_2^{(j')})
}{((q_1^{(0)},q_2^{(0)}), \cdots, q_2^{(j)}) \trans{\mm} ((q_1^{(0)},r_2^{(0)}),\cdots, r_2^{(j')})}$, where $\mm\neq\mrun,\mdone$.

\paragraph{$\seq{\Gamma}{\newin{x\aasg i}{M_1}}$} 
%\igwnote{reformulated}
By~\cite{GM08}, the semantics of this term is obtained from the semantics of $\sem{\seq{\Gamma,x}{M_1}}$ by 
\begin{enumerate}
\item  restricting to plays in which the moves $\mread^x$, $\mwrite{n}^x$ are followed  immediately by answers,  
\item selecting those plays in which each answer to a $\mread^x$-move is consistent
with the preceding $\mwrite{n}^x$-move (or equal to $i$, if no $\mwrite{n}^x$ was made),
\item erasing all moves related to $x$, e.g. those of the form $m^{(x,\rho)}$. 
\end{enumerate}
To implement 1., we will lock the automaton after each $\mread^x$- or $\mwrite{n}^x$-move, so that only an answer to that move can be played next. Technically, this will be done by adding  an extra bit (lock) to the level-$0$ state.
To deal with 2., we keep track of the current value of $x$, also at level $0$. 
This makes it possible to ensure that
 answers to $\mread^x$  are consistent  with the stored value and that $\mwrite{n}^x$ transitions cause the right change.
%\igwnote{reformulated}
Erasing from condition 3 is implemented by replacing all moves with the $x$ subscript with $\eq,\ea$-tags.

Accordingly, 
we have $Q^{(0)}=(Q_1^{(0)} + (Q_1^{(0)}\times \{\mathit{lock}\})) \times\{0,\cdots,\imax\}$ and $Q^{(j)} = Q_1^{(j)}$ ($1\le j\le k$).
As an example of a transition, we give the transition related to writing:
$\frac{(q_1^{(0)},\cdots, q_1^{(j)})\trans{\mwrite{z}^{(x,\rho)}}_1 (r_1^{(0)},\cdots, r_1^{(j')})\qquad 0\le n,z\le \imax}{
((q_1^{(0)},n),\cdots, q_1^{(j)})\trans{\eq} ((r_1^{(0)},\lock, z),\cdots, r_1^{(j')})}$.

\paragraph{$\seq{\Gamma}{f M_h \cdots M_1:\comt}$ with $(f: \theta_h\rarr\cdots\rarr\theta_1\rarr\comt)$}
Here we will need $Q^{(0)} = \{0,1,2\}$, $Q^{(1)}=\{0\}$,  $Q^{(j+2)}= \sum_{u=1}^{h} Q_u^{(j)}$ ($0\le j\le k$).
The first group of transitions corresponding to calling 
and returning from $f$: $\dagger \trans{\mrun} 0$,\quad  $0\trans{\mrun^f} (1,0)$,\quad 
$(1,0)\trans{\mdone^f} 2$,\quad $2\trans{\mdone} \dagger$.
Additionally, in state $(1,0)$ we want to enable the environment to spawn an unbounded number of copies of each of $\seq{\Gamma}{M_u:\theta_u}$ ($1\le u\le h$).
This is done through rules that embed the actions of the automata for $M_u$ while (possibly) relabelling the moves in line with our convention for representing moves from game semantics. 
Such transitions have the general form
$\frac{ (q_u^{(0)},\cdots, q_u^{(j)}) \trans{m^{(t,\rho)}}_u (q_u^{(0)},\cdots, q_u^{(j')})}{(1,0,q_u^{(0)},\cdots, q_u^{(j)}) \trans{m^{(t',\rho')}} (1,0,q_u^{(0)},\cdots, q_u^{(j')})}$.
Note that this case also covers $f:\comt$ ($h=0$).

More details and the remaining cases are covered in Appendix~\ref{apx:tola}.
In Appendix~\ref{apx:example} we give an example of a term and the corresponding $\la$.
\qed
\end{proof}

\section{Local $\fica$}
\label{sec:tosla}

In this section we identify a family of $\fica$ terms
that can be translated into $\sla$ rather than $\la$. To achieve boundedness at even levels,
we remove $\mathsf{while}$\footnote{The automaton for $\while{M}{N}$ may repeatedly visit the automata for
$M$ and $N$, generating an unbounded number of children at level $0$ in the process.}.
To achieve restricted communication, we will constrain
the distance between a variable declaration and its use.
Note that in the translation, the application of function-type variables increases $\la$ depth.
So in $\sfica$ we will allow the link 
between the binder $\mathbf{newvar}/\mathbf{newsem}\, x$ and each use of $x$ to ``cross" at most one occurrence of a free variable.
For example, the following terms
\begin{itemize}
\item    $\newin{x\aasg 0}{x\aasg 1\, ||\, f(x\aasg 2)}$,
\item $\newin{x\aasg 0}{f(\newin{y}{f(y\aasg 1)\, ||\, x:=!y})}$
\end{itemize}
will be allowed, but not $\newin{x\aasg 0}{f(f(x\aasg 1))}$.

To define the fragment formally, given a term $Q$ in $\beta\eta$-normal form, 
we use a notion of the \emph{applicative depth of a variable $x:\beta$ ($\beta=\vart,\semt$) %\igwnote{I would remove $\beta$}
inside $Q$}, written $\ade{x}{Q}$ and defined inductively by the table below. The applicative depth is increased whenever a functional 
identifier is applied to a term containing~$x$.
\[\begin{array}{lcl}
\textrm{shape of $Q$} && \ade{x}{Q} \\
\hline
x && 1\\
y\, (y\neq x),\, \skipcom,\,\divcom,\, i &\quad & 0\\
\arop{M},\, !M,\, \rls{M},\, \grb{M} && \ade{x}{M}\\
M;N,\, M||N,\, M\aasg N,\,\while{M}{N} & & \max(\ade{x}{M},\ade{x}{N})\\
{\cond{M}{N_1}{N_2}} && \max(\ade{x}{M},\ade{x}{N_1},\ade{x}{N_2})\\
{\lambda y.M}, \newin{\textbf{/newsem}\,y\aasg i}{M} && \ade{x}{M[z/y]},\textrm{where $z$ is fresh}\\
{f M_1\cdots M_k} && 1+ \max(\ade{x}{M_1},\cdots,\ade{x}{M_k})
\end{array}
\]
%\igwnote{proposal for the last line}

Note that in our examples above, in the first two cases the applicative depth of $x$ is $2$; and in the third case it is $3$.

% Note that  $\ade{x}{x\aasg 1 || f(x\aasg 2)} = 
% \ade{x}{f(\newin{y}{(f(y\aasg 1) || x:=!y)})}=2$ and
% $\ade{x}{f(f(x\aasg 1))}=3$.

\begin{definition}[Local $\fica$]
A  $\fica$-term $\seq{\Gamma}{M:\theta}$ is \emph{local} if its $\beta\eta$-normal form 
does not contain any occurrences of $\mathbf{while}$ and, 
for every subterm of the normal form of the shape $\newin{/\mathbf{newsem}\, x\aasg i}{N}$, we have $\ade{x}{N} \le 2$.
We write $\lfica$ for the set of local $\fica$ terms.
\end{definition}

\begin{theorem}\label{thm:trans2}
For any $\lfica$-term $\seq{\Gamma}{M:\theta}$,
the automaton $\clg{A}_M$ obtained from the translation in Theorem~\ref{thm:trans} 
can be presented as a $\sla$.
\end{theorem}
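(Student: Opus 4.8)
The plan is to show that the leafy automaton $\clg{A}_M$ produced by the translation of Theorem~\ref{thm:trans} meets the two defining requirements of a $\sla$, so that its transition table can be rewritten in the concise format of Section~\ref{sec:lla}. Concretely, presenting $\clg{A}_M$ as a $\sla$ amounts to establishing \textbf{(i)} \emph{locality}: each transition adding or removing a leaf at a level $j$ reads and rewrites only the states in the window permitted by $\delta^{(j)}_\Q/\delta^{(j)}_\A$ (levels $j,j-1,j-2$ when $j$ is even, and additionally $j-3$ when $j$ is odd), copying every lower level unchanged; and \textbf{(ii)} \emph{boundedness} at every even level. Both are proved by induction on the $\beta\eta$-normal form of $M$, mirroring the case split of the proof of Theorem~\ref{thm:trans}, with inductive hypothesis that each subautomaton $\clg{A}_{M_i}$ already admits the concise presentation and is bounded at its even levels.

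For \textbf{(ii)}, the point is that removing $\mathbf{while}$ is exactly what caps the branching of even nodes. A node at an even level is created when Opponent invokes a subterm, and its children at the next (odd) level are the Proponent questions, i.e.\ the free-variable calls and base-type variable uses, issued during a single run of that subterm. In a $\mathbf{while}$-free normal form each run triggers at most a fixed, syntactically determined number of such moves, since no construct can re-enter an application site within one run: sequential composition, conditionals and parallel composition contribute a bounded sum, while arguments are executed only at strictly deeper even nodes and so are not counted here. Hence each even node has boundedly many children, and the constructions preserve this (parallel and sequential composition sum finitely many calls at level $0$; the application case freezes levels $0,1$ and merely transplants each $M_u$ shifted down by two, so the bounds carry over).

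For \textbf{(i)}, the only construct that links a state to a strictly deeper leaf is $\newin{x\aasg i}{M_1}$ (and $\mathbf{newsem}$), which stores the value of $x$ and a lock bit in the level-$0$ state of $\clg{A}_{M_1}$ and must read/update it on every use of $x$. I would first prove an auxiliary counting lemma: in $\clg{A}_{M_1}$ every move $m^{(x,\rho)}$ occurs at level $2\ade{x}{M_1}-1$ relative to the root of $\clg{A}_{M_1}$. This follows by induction on $M_1$, using that an enclosing application of a free variable shifts a subautomaton down by two levels and raises the applicative depth by one, while all other constructs leave both unchanged; the base cases $x\aasg i,\ !x,\ \grb{x},\ \rls{x}$ sit at level $1$, matching $\ade{x}{\,\cdot\,}=1$. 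Since the binder occupies an even level $L$ (every directly-run command/expression subterm sits at an even level, applications preserving parity), each use of $x$ is the leaf of an $\eq/\ea$ transition at the odd level $L+2\ade{x}{M_1}-1$. With the local restriction $\ade{x}{M_1}\le 2$ this level is at most $L+3$, whose locality window reaches down to $L$; hence the value/lock update at level $L$ lies inside the window, and likewise for the matching $\ea$ transition clearing the lock. All remaining constructs are immediately local: the base cases act only on level $0$; parallel, sequential, conditional, arithmetic, dereferencing, assignment and semaphore constructs combine or leave fixed the top states and inherit the deeper transitions of their subautomata, which by the inductive hypothesis copy level $0$ on any transition deeper than the window; and the application case freezes levels $0,1$ and embeds each $M_u$ shifted by two, a shift that carries the locality window along with the transitions and so preserves it.

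The principal obstacle is the $\mathbf{newvar}/\mathbf{newsem}$ case, and specifically the counting lemma underpinning it: one must verify that the applicative-depth measure of Section~\ref{sec:tosla} tracks precisely the leaf level at which $x$ is used, and that this relationship is invariant under arbitrarily deep embedding of the binder by enclosing applications, since such embedding shifts binder and uses together and thus preserves their relative distance and window membership. Establishing this invariance, together with the boundary computation showing that $\ade{x}{M_1}\le 2$ is both sufficient and --- as $\newin{x\aasg 0}{f(f(x\aasg 1))}$ illustrates --- tight for keeping level $L$ inside the window of an odd level $L+2\ade{x}{M_1}-1$, is the heart of the argument; the remaining cases are routine bookkeeping over the constructions of Theorem~\ref{thm:trans}.
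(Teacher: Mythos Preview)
Your approach is essentially the same as the paper's: structural induction on the $\beta\eta$-normal form, establishing locality and even-level boundedness, with the $\mathbf{newvar}/\mathbf{newsem}$ case handled via an auxiliary lemma tying $\ade{x}{M_1}$ to the level at which $x$-moves live.

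One genuine error in your counting lemma, though it is easily repaired. You claim that every move $m^{(x,\rho)}$ in $\clg{A}_{M_1}$ lies at the \emph{single} level $2\,\ade{x}{M_1}-1$. This is false: take $M_1 \equiv (x\aasg 1) \,||\, f(x\aasg 2)$, where $\ade{x}{M_1}=2$ but $x$-moves occur at level $1$ (from the left conjunct) and at level $3$ (from the right). Your inductive proof sketch breaks precisely at the $\max$ in the definition of $\ade{x}{-}$ for $||$, $;$, $\mathbf{if}$: the applicative depth records only the deepest occurrence, not all of them. The correct statement, which is what the paper proves, is that the $x$-transitions add/remove leaves at odd levels in $\{1,3,\ldots,2\,\ade{x}{M_1}-1\}$. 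Fortunately your downstream use of the lemma only needs the upper bound (``this level is at most $L+3$''), so once the lemma is restated as a bound rather than an equality, the rest of your argument for the $\mathbf{newvar}$ case goes through unchanged.
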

\begin{proof}[Sketch]
We argue by induction that the constructions from Theorem~\ref{thm:trans} preserve presentability as a $\sla$.

The case of parallel composition involves running copies of $M_1$ and $M_2$ in parallel without communication,
with their root states stored as a pair at level $0$. Note, though, that each of the automata transitions independently
of the state of the  other automaton.
In consequence, if the automata $M_1$ and $M_2$ are $\sla$, 
so will be the automaton for $M_1 || M_2$. 
The branching bound after the construction is the sum of the two bounds  for $M_1$ and $M_2$.

For $\seq{\Gamma}{\newin{x\aasg i}{M}}$, because the term is in $\lfica$,
so is $\seq{\Gamma,x:\vart}{M}$ and we have $\ade{x}{M}\le 2$.
Then we observe that in the translation of Theorem~\ref{thm:trans} ($\seq{\Gamma,x:\vart}{M:\theta}$) the questions related to $x$,
(namely $\mwrite{i}^{(x,\rho)}$ and $\mread^{(x,\rho)}$) correspond to creating leaves at levels $1$ or $3$, while the corresponding answers ($\mok^{(x,\rho)}$ and $i^{(x,\rho)}$ respectively)
correspond to removing such leaves.
In the construction for $\seq{\Gamma}{\newin{x}{M}}$,
such transitions need access to the root (to read/update the current state) and the root is indeed
within the allowable range: in an $\sla$
transitions creating/destroying leaves at level $3$ can read/write at level $0$.
All other transitions (not labelled by $x$) proceed as in $M$ and
need not consult the root for additional information about the current state, as it is propagated. Consequently, if $M$ is represented by a $\sla$ then the interpretation of $\newin{x\aasg i}{M}$ is also a $\sla$. The construction does not affect the branching bound, because
the resultant runs can be viewed as a subset of runs of the automaton for $M$, i.e. those in which reads and writes are related.

For $f M_h\cdots M_1$, we observe that the construction first 
creates two nodes at levels $0$ and $1$, and the node at level $1$ is used
to run an unbounded number of copies of (the automaton for) $M_i$.
The copies do not need access to the states stored at levels $0$ and $1$, 
because they are never modified when the copies are running.
Consequently, if each $M_i$ can be translated into a $\sla$,
the outcome of the construction in Theorem~\ref{thm:trans} is also a $\sla$. The new branching bound is the maximum over bounds from $M_1,\cdots, M_h$, because at even levels children are produced as in $M_i$ and 
level $0$ produces only $1$ child.
\cutout{The other constructions involve running an automaton for a subterm (to completion), to be followed by an automaton
for another subterm. This does not violate the property of being a $\sla$, as the switch happens through a level-$1$ transition,
which can read/write the root. Boundedness is also preserved:
the new bound is the maximum of bounds obtained from the inductive hypothesis.}
\qed
\end{proof}
\begin{corollary}
For any $\sfica$-term $\seq{\Gamma}{M:\theta}$, the problem of
determining whether $\comp{\sem{\seq{\Gamma}{M}}}$ is empty
is decidable.
\end{corollary}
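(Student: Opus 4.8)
The plan is to derive the corollary directly by composing the three results already established in the excerpt. Given a $\sfica$-term $\seq{\Gamma}{M:\theta}$, we want to decide whether $\comp{\sem{\seq{\Gamma}{M}}}$ is empty. By Theorem~\ref{thm:full}, $\comp{\sem{\seq{\Gamma}{M}}} = \emptyset$ is equivalent to $\seq{\Gamma}{M\cong\divcom_\theta}$, but more importantly it is exactly the kind of property that is reflected at the automata level via the translation in Theorem~\ref{thm:trans}.

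First I would invoke Theorem~\ref{thm:trans} to obtain the even-ready leafy automaton $\Aut_M$ over a finite subset of $\alp{\seq{\Gamma}{\theta}}+\{\eq,\ea\}$, whose accepted-trace language $\lang{\Aut_M}$ represents $\comp{\sem{\seq{\Gamma}{M:\theta}}}$. Since the representation map from data words to plays is surjective onto $\comp{\sem{\seq{\Gamma}{M}}}$, we have $\comp{\sem{\seq{\Gamma}{M}}} = \emptyset$ if and only if $\lang{\Aut_M} = \emptyset$, i.e.\ the emptiness problem for $\Aut_M$. Next I would apply Theorem~\ref{thm:trans2}: because $M$ is local, the very automaton $\Aut_M$ produced by the construction of Theorem~\ref{thm:trans} can be presented as an $\sla$. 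Finally, Theorem~\ref{thm:sla-decidable} states that the emptiness problem for $\lla$ is decidable, so we can decide whether $\lang{\Aut_M} = \emptyset$ and hence whether $\comp{\sem{\seq{\Gamma}{M}}}$ is empty.

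The only genuine subtlety—and the step I would be most careful about—is making the reduction from ``$\comp{\sem{\seq{\Gamma}{M}}}$ is empty'' to ``$\lang{\Aut_M}$ is empty'' precise, since $\lang{\Aut_M}$ is a set of data words while $\comp{\sem{\seq{\Gamma}{M}}}$ is a set of plays. Here I would note that the representation relation between data words and plays is defined so that every complete play in the strategy has at least one data-word representative in $\lang{\Aut_M}$ (and conversely every accepted trace represents some complete play); hence one set is empty precisely when the other is. This is immediate from the statement of Theorem~\ref{thm:trans}, so no further work is required beyond observing it. Everything else is a direct chaining of the three cited theorems, and there is no computation to grind through.
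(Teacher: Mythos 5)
Your proposal is correct and takes essentially the same route as the paper, which justifies the corollary by chaining the translation of $\sfica$ into $\sla$ (Theorems~\ref{thm:trans} and~\ref{thm:trans2}) with decidability of $\sla$ emptiness (Theorem~\ref{thm:sla-decidable}); your extra care in checking that emptiness of $\lang{\Aut_M}$ coincides with emptiness of $\comp{\sem{\seq{\Gamma}{M}}}$ via the representation relation is sound and is left implicit in the paper's one-line argument.
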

Theorems~\ref{thm:full} and \ref{thm:sla-decidable} imply the above. Thanks to Theorem~\ref{thm:full}, it is decidable if a $\sfica$ term is equivalent to a term  that always diverges (cf.\ example on page~\pageref{ex:verification}).
In case of inequivalence, our results could also be
applied to extract the distinguishing context, first by
extracting the witnessing trace from the argument 
underpinning Theorem~\ref{thm:sla-decidable} and then feeding it to
the Definability Theorem (Theorem 41~\cite{GM08}). This is 
a valuable property given that in the concurrent setting bugs
are difficult to replicate.

\section{From LA to FICA}
\label{sec:tofica}

In this section, we show how to represent leafy automata in $\fica$. Let $\Aut=\abra{\Sigma,k,Q,\delta}$ be a leafy automaton. 
We shall assume that $\Sigma,Q\subseteq\{0,\cdots,\imax\}$ so that we can
encode the alphabet and states using type $\expt$.
We will represent a trace $w$ generated by $\Aut$ by a play $\play{w}$, which simulates
each transition with two moves, by $O$ and $P$ respectively. The child-parent links in $\D$ will be represented by justification pointers. We refer the reader to Appendix~\ref{apx:tofica} for details. Below we just state the lemma that 
identifies the types that correspond to our encoding, where 
we write $\theta^{\imax+1}\rarr\beta$ for $\underbrace{\theta \rarr\cdots\rarr\theta}_{\imax+1}\rarr\beta$.

\begin{lemma}
Let $\Aut$ be a $k$-$\la$ and $w\in\trace{\Aut}$. Then $\play{w}$ is a play in $\sem{\theta_k}$,
where $\theta_0=\comt^{\imax+1}\rarr\expt$ and $\theta_{i+1}=(\theta_i\rarr\comt)^{\imax+1}\rarr\expt$ ($i\ge 0$).
\end{lemma}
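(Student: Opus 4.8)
The plan is to prove this by induction on $k$, following the structure of the encoding $\play{\cdot}$ described (in the appendix, but whose shape is dictated by the translation philosophy of Section~\ref{sec:tofica}). The key idea is that a $k$-$\la$ trace is a data word whose data values form a forest of depth at most $k$, and the encoding turns a level-$i$ data value into a justification pointer that sits $i$ levels deep in the arena $\sem{\theta_k}$. So I would first observe that $\sem{\theta_k}$ is built by iterating the pattern ``$(\cdots\rarr\comt)^{\imax+1}\rarr\expt$'', which in arena terms means: an initial O-question $\mq$ (the outermost $\expt$), answered by some $i\in\{0,\ldots,\imax\}$, and enabling $\imax+1$ copies of $\mrun$ (one per $\comt$ in $\theta_{k-1}\rarr\comt$); each such $\mrun$ is a P-question that in turn enables the initial question of $\sem{\theta_{k-1}}$, and so on. Thus there is a natural bijection between ``positions at nesting depth $i$'' in $\sem{\theta_k}$ and level-$i$ data values, with the $\imax+1$-fold branching absorbing both the choice of which transition tag is being simulated and the finite state/alphabet information (recall $\Sigma,Q\subseteq\{0,\ldots,\imax\}$).

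Concretely, I would set up the encoding so that reading $(t,d)$ with $t\in\Sigma_\Q$ and $d$ a level-$i$ data value corresponds to an O-move $\mrun$ (or $\mq$, at the root) together with the P-response $i$ answering the $\expt$ one level up — or more precisely, a pair of moves $o\,p$ where $o$ opens a fresh copy at depth $i$ justified by the copy representing $\pred{d}$, and $p$ records the tag/state. Reading $(t,d)$ with $t\in\Sigma_\A$ then corresponds to the matching answer moves that close that copy. The base case $k=0$: a $0$-$\la$ trace is a (prefix of a) word over $(\Sigma\times\D)$ using only root data values, each appearing as a question then an answer; under the encoding this is exactly a play of $\sem{\comt^{\imax+1}\rarr\expt}$, namely $\mq$ followed by alternations of $\mrun^{u}\,i\cdots$ closing with some answer $i$ to $\mq$ — one checks directly that FORK and WAIT hold because the $\la$ discipline (one leaf added per question, removed per answer, answers only at leaves) literally is the bracketing/nesting condition. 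For the inductive step, given $w\in\trace{\Aut}$, I would decompose $w$ along its root data values: between the opening and closing of a root $d_0$, the sub-word of descendants of $d_0$ is (after stripping one level) a trace of a $(k-1)$-$\la$ on the subtree rooted at $d_0$, so by induction its encoding is a play of $\sem{\theta_{k-1}}$; splicing these in under the corresponding $\mrun$-moves of $\sem{(\theta_{k-1}\rarr\comt)^{\imax+1}\rarr\expt}$ and checking that interleavings of several such sub-plays (the $\la$ allows descendants of different roots to interleave) still satisfy FORK and WAIT yields a play of $\sem{\theta_k}$.

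The main obstacle I expect is verifying the WAIT condition after interleaving, i.e.\ that the encoding never produces a sequence where an answer is played while some question it (transitively) justifies is still pending. On the automaton side this is guaranteed because answers can only remove \emph{leaves} of $E$, so a node is closed only after all its children are closed; I need to argue that this leaf-removal discipline maps precisely onto ``all questions justified by $q$ are answered'' in the arena, across the depth-shift and across interleavings of independent subtrees. A secondary subtlety is bookkeeping the justification pointers correctly: the pointer from the $p$-move encoding $(t,d)$ must land on the copy encoding $\pred{d}$, and when two root subtrees interleave, their depth-$i$ copies live in disjoint sub-arenas of $\sem{\theta_k}$, so no pointer ambiguity arises — but this needs to be stated carefully, presumably using the tagging scheme of Definition~\ref{def:tags}. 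Everything else (alternation, initial move being $\mq$, that each data value occurs at most twice by Remark~\ref{rem:lawork}) is routine.
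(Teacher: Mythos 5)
Your core correspondences are the right ones and are exactly the paper's: each letter $(t,d)$ at level $i$ becomes an O/P pair of moves sitting at depth $i$ of $\sem{\theta_k}$, with the $\imax+1$-fold branching absorbing the tag; FORK holds because an answered data value leaves the configuration and so can never serve as a justifier again; WAIT holds because answer transitions only fire at leaves. However, the induction on $k$ that you build around these observations does not go through as stated. The sub-word of descendants of a root $d_0$, with one level stripped, is \emph{not} a trace of a $(k-1)$-$\la$: several children of $d_0$ may be alive simultaneously, so after stripping you would have several coexisting ``roots'', whereas a configuration of a leafy automaton is by definition a single rooted tree; moreover the transitions governing the subtree under a child of $d_0$ read and write the state at $d_0$ itself, so the sub-run is not a run of any automaton obtained by truncating levels. (Your remark that ``descendants of different roots'' interleave also misplaces the multiplicity: distinct level-$0$ roots are strictly sequential, since only one root exists at a time; it is the children of a single root whose subtrees coexist and interleave.) To salvage your scheme you would have to generalise the induction hypothesis from traces to shuffles of subtree-words, or to a purely structural class of well-nested data words of bounded depth --- and you explicitly leave the WAIT verification after interleaving as an unresolved ``main obstacle'' rather than closing it.

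The paper avoids all of this by inducting on the trace itself rather than on $k$: $\play{w}$ is defined letter by letter (a question letter $(t,d)$ appends an O-question $\mq^{\vec{i}}$ pointing into the segment that encoded $\pred{d}$ followed by a P-question $\mrun^{t\vec{i}}$; an answer letter appends $\mdone$ and the $\expt$-answer closing that segment), and FORK and WAIT are checked at the moment each pair is appended, where they follow in one line from precisely the two automaton facts you identified. I would switch to that per-letter argument: since the lemma is a purely structural statement about justified sequences, the subtree decomposition adds the interleaving and state-dependence complications without buying anything.
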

Before we state the main result, we recall from~\cite{GM08} that strategies corresponding to $\fica$ terms 
satisfy a closure condition known as~\emph{saturation}: swapping two adjacent moves in a play belonging 
to such a strategy yields another play from the same strategy, 
as long as the swap yields a play and it is not the case
that the first move is by O and the second one by P.
Thus, saturated strategies express causal dependencies of P-moves on O-moves.
Consequently, one cannot expect to find a $\fica$-term such that the corresponding
strategy is the smallest strategy containing $\{\,\play{w}\,|\,w\in \trace{\Aut}\,\}$. 
Instead, the best one can aim for is the following result.
\begin{theorem}\label{thm:toalgol}
Given a $k$-$\la$ $\Aut$, there exists a $\fica$ term $\seq{}{M_\Aut:\theta_k}$ such that $\sem{\seq{}{M_\Aut:\theta_k}}$
is the smallest saturated strategy containing $\{\,\play{w}\,|\,w\in \trace{\Aut}\,\}$.
\end{theorem}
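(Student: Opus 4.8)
The strategy is to build $M_\Aut$ compositionally, mirroring the level structure of $\Aut$. The types $\theta_k$ in the preceding lemma are designed so that a level-$i$ node of $\Aut$ corresponds to a ``thread'' playing in a copy of $\sem{\theta_i}$, and the arrow structure $(\theta_i\rarr\comt)^{\imax+1}\rarr\expt$ allows a parent at level $i{+}1$ to spawn arbitrarily many children at level $i$ (the $\imax+1$ copies give enough room to also transmit, via the identity of the copy, the letter read when the leaf was created). First I would fix a way of encoding each letter $(t,d)$ with $t\in\Sigma$ and $d$ a level-$i$ data value as a pair of moves in $\play{w}$: the $O$-move names the transition that $\Aut$ is about to take and carries a justification pointer to the move that introduced $\pred d$, and the $P$-move acknowledges it; the returned $\expt$-value at each level is used to pass back the updated state (which lives in $\{0,\dots,\imax\}$ by the assumption $Q\subseteq\{0,\dots,\imax\}$). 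Question-letters of $\Sigma_\Q$ become calls (creating a child), answer-letters of $\Sigma_\A$ become returns (destroying the leaf); the FORK/WAIT conditions of the game model are exactly what forces the ``only leaves can be removed'' discipline of $\la$, so well-formedness of $\play{w}$ comes for free.

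Next I would write the term itself. The outermost term has type $\theta_k$, i.e.\ it takes $\imax+1$ arguments of type $\theta_{k-1}\rarr\comt$ and returns an $\expt$. It declares a local variable (or a finite bundle of them) holding the current level-$0$ state, initialised to the start state, then enters a loop: at each iteration it inspects the stored state, picks (nondeterministically, using $\cond{}{}{}$ over the finitely many options, with a genuinely nondeterministic choice simulated by reading an uninitialised/demonically-set variable as is standard in IA) an enabled level-$0$ transition of $\delta^{(0)}$, and, if that transition creates a level-$1$ child, applies the appropriate argument $g_j:\theta_{k-1}\rarr\comt$ to the recursively constructed term $M^{(1)}$ representing the subautomaton rooted at that child; the index $j$ of the chosen argument encodes the letter. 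The recursively constructed $M^{(i)}:\theta_{k-i}$ does the same thing one level down, reading its parent's and grandparent's states through the argument-passing protocol whenever $\delta^{(i)}$ requires it. Because $\la$ transitions only touch the branch down to the modified leaf, this recursive structure is faithful: the interaction between a node and its descendants is entirely local to the corresponding subtree of the play, exactly as in $\la$. I would verify by a straightforward induction on $k$ (and on the length of traces) that every $\play{w}$ with $w\in\trace{\Aut}$ is a play of $\sem{M_\Aut}$, and conversely that every complete ``canonical'' play of $\sem{M_\Aut}$ — one in which, between each $O$-move and its $P$-acknowledgement, no other moves intervene — arises as $\play{w}$ for some $w\in\trace{\Aut}$.

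The remaining gap between ``contains $\{\play{w}\mid w\in\trace{\Aut}\}$'' and ``is the \emph{smallest saturated} strategy containing it'' is handled by the saturation closure that the game model of $\fica$ automatically imposes (recalled just before the statement): $\sem{M_\Aut}$ is saturated by construction, so it contains the smallest saturated superset; for the reverse inclusion I would show that every play of $\sem{M_\Aut}$ can be obtained from some canonical $\play{w}$ by a sequence of the permitted adjacent-move swaps, i.e.\ that the only ``extra'' plays the term generates relative to the canonical ones are exactly the saturation-forced rearrangements. This amounts to arguing that the term introduces no causal dependencies beyond those intrinsic to $\la$'s sequencing of reads/writes on a branch — which is where one must be careful about how the loop at each level serialises the choice of transition and the update of the stored state, making sure the stored-state variable is read and written atomically enough that no spurious interleavings of \emph{P}-moves before their enabling \emph{O}-moves become possible. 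I expect this last point — pinning down precisely which plays the term admits and matching them against the saturation of the canonical set — to be the main obstacle; the combinatorics of the encoding and the inductive faithfulness proof are routine by comparison, and the bulk of the detail is deferred to Appendix~\ref{apx:tofica}.
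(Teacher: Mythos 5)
Your proposal follows essentially the same route as the paper: a nested higher-order term $\lambda f^{(0)}.f^{(0)}(\lambda f^{(1)}.f^{(1)}(\cdots))$ in which the index of the chosen argument $f^{(i)}_t$ encodes the letter read, per-level local state variables are checked and updated under mutual exclusion (the paper uses a semaphore for the atomicity you flag), transitions are selected by encoded nondeterministic choice, unbounded child-creation comes from Opponent interleaving copies of the argument of $f^{(i)}_t$, and the saturation gap is closed exactly as you describe. The only small deviation is your ``loop'' at level $0$ --- the paper needs no loop, since all replication is driven by Opponent through application --- but this does not affect the argument.
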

\begin{proof}[Sketch]
Our assumption $Q\subseteq\{0,\cdots,\imax\}$ allows us to maintain $\Aut$-states in the memory of $\fica$-terms.
To achieve $k$-fold nesting, we rely on the  higher-order structure of the term:
$\lambda f^{(0)}.f^{(0)}(\lambda f^{(1)}.f^{(1)}(\lambda f^{(2)}.f^{(2)}(\cdots \lambda f^{(k)}. f^{(k)})))$.
In fact, instead of the single variables $f^{(i)}$, we shall use sequences
$f^{(i)}_0\cdots f^{(i)}_\imax$, so that a question $t_{\Q}^{(i)}$ read by $\Aut$ at level $i$ can be simulated
by using variable $f^{(i)}_{t_{\Q}^{(i)}}$ (using our assumption $\Sigma\subseteq\{0,\cdots,\imax\}$).
Additionally, the term contains state-manipulating code that enables moves only if they are
consistent with the transition function of $\Aut$.\qed
\end{proof}
\section{Conclusion and further work}

%\igwnote{My take on conclusions}
We have introduced leafy automata, $\la$, and shown that they correspond to the game
semantics of Finitary Idealized Concurrent Algol ($\fica$).
The automata formulation makes combinatorial challenges posed by the equivalence problem explicit. 
This is exemplified by a very transparent undecidability proof of the emptiness problem for $\la$. 
Our hope is that $\la$ will allow to discover interesting fragments of $\fica$ for which some variant of the equivalence problem is decidable. 
We have identified one such instance, namely local leafy automata ($\lla$), and a fragment of $\fica$ that can be translated to them. 
The decidability of the emptiness problem for $\lla$ implies decidability of a simple instance of the equivalence problem.
This in turn allows to decide some verification questions as in the example on page~\pageref{ex:verification}. 
Since these types of questions involve quantification over all contexts, the use of a fully-abstract semantics appears essential to solve them. 

The obvious line of future work is to find some other subclasses of $\la$ with decidable emptiness problem. 
Another interesting target is to find an automaton model for the call-by-value setting, where answers enable questions~\cite{AM97b,HY97}. 
It would also be worth comparing our results with abstract machines~\cite{FG13}, the
Geometry of Interaction~\cite{LTY17}, and the $\pi$-calculus~\cite{BHY01}.

% \igwnote{old verion}We have introduced leafy automata and shown that they correspond to the game
% semantics of Finitary Idealized Concurrent Algol ($\fica$).
% This makes them an automata-theoretic foundation for studying and verifying higher-order concurrent computation.
% We have made a first step in this direction by indentifying a decidable subclass $\lla$
% and a fragment of $\fica$ that can be translated into $\lla$.
% The results are immediately applicable to the problem of deciding a simple instance of the equivalence problem,
% which can be applied to specify safety-related verification tasks.

% An interesting target for future work would be to check if we can find an
% automaton model for a call-by-value setting, where answers enable questions~\cite{AM97b,HY97}.
% It would also be worth comparing our results with abstract machines~\cite{FG13}, the
% Geometry of Interaction~\cite{LTY17} and the $\pi$-calculus~\cite{BHY01}.

% \input{sections/intro}
% \input{sections/fica}
% \input{sections/games}
% \input{sections/leafy}
% \input{sections/from-algol}
% \input{sections/sfica}
% \input{sections/to-algol}
% \input{sections/end}

%%%%%%%%%%%%%%%%%%%%%%%%%%%%%%%%%%%%%%%%%%%%%%%%%%%%%%%%%%%%%%%%%%%%%%%%%%%%%%%%
% BIBLIOGRAPHY

\bibliographystyle{lncs-resources/splncs04}
\bibliography{bibliography}

%%%%%%%%%%%%%%%%%%%%%%%%%%%%%%%%%%%%%%%%%%%%%%%%%%%%%%%%%%%%%%%%%%%%%%%%%%%%%%%%
% OBLIGATORY OPEN ACCESS PORTION

%%%%% To display Open Access text and logo, Please add below text and copy the cc_by_4-0.eps in the manuscript package %%%

% \vfill

% {\small\medskip\noindent{\bf Open Access} This chapter is licensed under the terms of the Creative Commons\break Attribution 4.0 International License (\url{http://creativecommons.org/licenses/by/4.0/}), which permits use, sharing, adaptation, distribution and reproduction in any medium or format, as long as you give appropriate credit to the original author(s) and the source, provide a link to the Creative Commons license and indicate if changes were made.}

% {\small \spaceskip .28em plus .1em minus .1em The images or other third party material in this chapter are included in the chapter's Creative Commons license, unless indicated otherwise in a credit line to the material.~If material is not included in the chapter's Creative Commons license and your intended\break use is not permitted by statutory regulation or exceeds the permitted use, you will need to obtain permission directly from the copyright holder.}

% \medskip\noindent\includegraphics{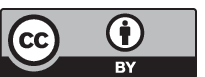}

%%%%%%%%%%%%%%%%%%%%%%%%%%%%%%%%%%%%%%%%%%%%%%%%%%%%%%%%%%%%%%%%%%%%%%%%%%%%%%%%
% APPENDIX

\appendix
\section{Additional material for Section~\ref{sec:fica}}
\label{apx:opsem}

\subsection{Operational semantics of $\fica$}

{
The operational semantics is defined using a (small-step) transition
relation $\step{s}{M}{s'}{M'}$, where $\mem$ is a set of variable names
denoting active \emph{memory cells} and \emph{semaphore locks}.
$s,s'$ are states, i.e.\ functions $s,s':\mem\rightarrow\makeset{0,\cdots,\imax}$, and $M,M'$ are
terms.  We write $s\otimes (v\mapsto i)$ for the state obtained by augmenting $s$ with $(v\mapsto i)$, assuming $v\not\in \dom{s}$.
The basic reduction rules are given in Figure~\ref{fig:os},
where $c$ stands for any language constant ($i$ or $\skipcom$)
and $\widehat{\mathbf{op}}:\{0,\cdots,\imax\}\rarr\{0,\cdots,\imax\}$
is the function corresponding to $\mathbf{op}$.
In-context reduction is given by the schemata:

\begin{center}
\AxiomC{$\mem,v\vdash  M[v/x],s\otimes(v\mapsto i)\longrightarrow M',s'\otimes(v\mapsto i') $ \quad $M\neq c$}
\UnaryInfC{$\mem\vdash\newin{x\aasg i}{M},s\longrightarrow \newin{x\aasg i'}{M'[x/v]}, s' $}
\DisplayProof\\[2ex]
\AxiomC{$\mem,v\vdash  M[v/x],s\otimes(v\mapsto i)\longrightarrow M',s'\otimes(v\mapsto i') $\quad $M\neq c$}
\UnaryInfC{$\mem\vdash\newsem{x\aasg i}{M},s\longrightarrow \newsem{x\aasg i'}{M'[x/v]}, s' $}
\DisplayProof\\[2ex]
  \AxiomC{$\step{s}{M}{s'}{M'}$}
  \UnaryInfC{$\step{s}{\mathcal E[M]}{s'}{\mathcal E[M']}$}
  \DisplayProof
\end{center}
where reduction contexts $\mathcal E[-]$ are produced by the
grammar:
\[\begin{array}{rcl}
  \mathcal E[-] &::=& [-] \mid \mathcal E;N
  \mid (\mathcal E\,\parc\, N)
  \mid (M\,\parc\, \mathcal E)
  \mid {\mathcal E} N 
  \mid \arop{\mathcal E} 
  \mid \cond{\mathcal E}{N_1}{N_2}\\
  &&\mid {!}\mathcal E
  \mid \mathcal E\aasg m
  \mid M\aasg\mathcal E
  \mid \grb{\mathcal E}
  \mid \rls{\mathcal E}.
\end{array}\]
\begin{figure}[t]
\begin{center}
$\begin{array}{rclcrcl}
  \astep {s}{\skipcom\parc\skipcom}{s}{\skipcom} &\quad &   \astep {s}{\cond{i}{N_1}{N_2}}{s}{N_1},\quad i\neq 0\\
  \astep {s}{\skipcom;c}{s}{c} &&   \astep {s}{\cond{0}{N_1}{N_2}}{s}{N_2}\\
  \astep {s}{\arop{i}}{s}{\widehat{\mathbf{op}}(i)} &&    \astep{s}{(\lambda x.M) N}{s}{M[N/x]}\\
  \astep {s}{\newin{x\aasg i}c}{s}{c} &&    \astep {s\otimes(v\mapsto i)}{{!}v}{s\otimes(v\mapsto i)}{i}\\
  \astep {s}{\newsem{x\aasg i}c}{s}{c}  &&\astep {s\otimes(v\mapsto i)}{v\aasg i'}{s\otimes(v\mapsto i')}{\skipcom}
  \end{array}$
  \end{center}
  \begin{center}
  $\begin{array}{rcl}
  \astep {s\otimes(v\mapsto 0)}{\grb v}{s\otimes(v\mapsto 1)}{\skipcom}\\
  \astep {s\otimes(v\mapsto i)}{\rls v}{s\otimes(v\mapsto 0)}{\skipcom},\quad i\neq 0\\
   \astep {s}{\while{M}{N}}{s}{\cond{M}{(N;\while{M}{N})}{\skipcom}}
\end{array}$
\end{center}
\caption{Reduction rules for $\fica$}
\label{fig:os}
\end{figure}
$\seq{}{M:\comt}$ is said to terminate,  written $M\Downarrow$, if 
$\emptyset \vdash \emptyset,\,M\longrightarrow^\ast \emptyset,\skipcom$.

\bigskip

Idealized Concurrent Algol~\cite{GM08} also features variable and semaphore constructors, 
called $\textbf{mkvar}$ and $\textbf{mksem}$ respectively,
which play a technical role in the full abstraction argument, similarly to~\cite{AM97a}.
We omit them in the main body of the paper, because
they do not present technical challenges, but they are covered in the Appendix for the sake of completeness.

\paragraph{Typing rules}
\[
\AxiomC{$\Gamma\vdash M:\expt\rarr\comt$}
  \AxiomC{$\Gamma\vdash N:\expt$}
  \BinaryInfC{$\Gamma\vdash \mkvar{M}{N}:\vart$}
  \DisplayProof
\quad
 \AxiomC{$\Gamma\vdash M:\comt$}
  \AxiomC{$\Gamma\vdash N:\comt$}
  \BinaryInfC{$\Gamma\vdash  \mksem{M}{N}:\semt$}
  \DisplayProof
\]

\paragraph{Reduction rules}

\begin{align*}
  \step {s&}{(\mkvar{M}{N})\aasg M'}{s}{M M'}\\
  \step {s&}{{!}(\mkvar{M}{N}}{s}{N}\\
  \step {s&}{\grb {\mathbf{mksem}\,M N}}{s}{M}\\
  \step {s&}{\rls {\mathbf{mksem}\,M N}}{s}{N}
\end{align*}

\paragraph{$\eta$ rules for $\vart,\semt$}
\[\begin{array}{rcl}
M &\longrightarrow & \mkvar{(\lambda x^\expt. M\aasg x)}{!M}\\
M &\longrightarrow & \mksem{\grb{M}}{\rls{M}}
\end{array}\]

Using $\mathbf{mkvar}$ and $\mathbf{mksem}$,
one can define $\divcom_\theta$ as syntactic sugar using $\divcom=\divcom_\comt$ only.
\[
\divcom_\theta=\left\{
\begin{array}{lcl}
\divcom && \theta=\comt\\
\divcom;0 && \theta=\expt\\
\mkvar{\lambda x^\expt.\divcom}{\divcom_\expt} && \theta=\vart\\
\mksem{\divcom}{\divcom} & &\theta=\semt\\
\lambda x^{\theta_1}.\divcom_{\theta_2} & & \theta=\theta_1\rarr\theta_2\\
\end{array}\right.
\]

\section{Additional material for Section~\ref{sec:leafy}}
\label{apx:leafy}

\subsection{Proof of Lemma~\ref{lem:la1}}

We proceed by reducing from the halting problem for deterministic two-counter machines~\cite[pp.~255--258]{Min67}.

The input to the halting problem is a deterministic two-counter machine \\* $\mathcal{C} = (Q_\mathcal{C}, q_0, q_F, T)$, where $Q_\mathcal{C}$ is the set of states, $q_0, q_F \in Q_\mathcal{C}$ are the initial and final states respectively, and $T : Q_\mathcal{C} ~\setminus~\{q_F\} \rightarrow (\move{INC} \cup \move{JZDEC})$ is the step function. Steps in $\move{INC}$ are of the form $(i, q') \in \{1,2\} \times Q_\mathcal{C}$ (increment counter $i$ and go to state $q'$). Steps in $\move{JZDEC}$ are of the form $(i, q', q'') \in \{1,2\} \times Q_\mathcal{C} \times Q_\mathcal{C}$ (if counter $i$ is zero then go to state $q'$; else decrement counter $i$ and go to state $q''$). The question is whether, starting from $q_0$ with both counters zero, $\mathcal{C}$ eventually reaches $q_F$ with both counters zero.

\bigskip

We first construct a $1$-$\la$ that recognises the language of all data words such that:
\begin{itemize}
\item 
the underlying word (i.e., the projection onto the finite alphabet) encodes a path through the transition relation of $\mathcal{C}$ from the initial state to the final state, in other words a pseudo-run where the non-negativity of counters and the correctness of zero tests are ignored;
\item
the occurrences of the letters that encode increments and decrements of $\mathcal{C}$ form pairs that are labelled by the same level-$1$ data values, where each increment is earlier than the corresponding decrement, which assuming that both counters are zero initially ensures their non-negativity throughout the pseudo-run and their being zero finally.
\end{itemize}

The second $1$-$\la$ is slightly more complex.  It accepts data words that have the same properties as those accepted by the first $1$-$\la$, and in addition:
\begin{itemize}
\item there exists some increment followed by a zero test of the same counter before a decrement with the same data value has occurred, in other words there is at least one incorrect zero test in the pseudo-run.
\end{itemize}

The two sets of accepted traces will be equal if and only if all pseudo-runs that satisfy the initial, non-negativity and final conditions necessarily contain some incorrect zero test, i.e.\ if and only if $\mathcal{C}$ does not halt as required. We give the formal construction below. 

\bigskip

\newcommand\Aone{\mathcal{A}_1(\mathcal{C})}
\newcommand\Atwo{\mathcal{A}_2(\mathcal{C})}

The two LAs we compute are $\Aone = \langle \Sigma, 1, Q, \delta_1 \rangle$ and $\Atwo = \langle \Sigma, 1, Q, \delta_2 \rangle$.
% such that $\mathcal{C}$ does not halt as specified above if and only if $\Aone$ and $\Atwo$ have the same language.

The alphabet, $\Sigma = \Sigma_\Q \cup \Sigma_\A$, is defined as follows:
\[
\Sigma_\Q = \{ \move{start}, \move{inc_1}, \move{inc_2}, \move{zero_1}, \move{zero_2} \}
\qquad
\Sigma_\A = \{ \move{end}, \move{dec_1}, \move{dec_2}, \move{zero'_1}, \move{zero'_2} \}
\]

Traces of $\Aone$ and $\Atwo$ represent pseudo-runs of $\mathcal{C}$, i.e.~sequences of steps of the machine. Aside from $\move{start}$ and $\move{end}$, each letter in the trace corresponds to the machine performing either an $\move{INC}$ step ($\move{inc}$), the ``then'' of a $\move{JZDEC}$ step ($\move{zero}$), or the ``else'' of a $\move{JZDEC}$ step ($\move{dec}$). The $\move{zero'}$ transition is a necessity which allows us to erase leaves added by $\move{zero}$. Each of $\move{inc}$, $\move{dec}$, $\move{zero}$, $\move{zero'}$ has two variants which encode $i$, the counter number in the corresponding step. We will say that two letters \emph{match} if they have the same data value.

By construction $\Aone$ will accept exactly the traces with the following properties, which correspond to the high-level description of our first $1$-$\la$:

\begin{itemize}
    \item The first letter in the trace is $\move{start}$ and the last is a matching $\move{end}$.
    \item For each occurrence of $\move{inc_i}$, there is a matching $\move{dec_i}$ later in the trace.
    \item For each occurrence of $\move{zero_i}$, there is a matching $\move{zero'_i}$ later in the trace.
    \item The letters in the trace (excluding $\move{start}$ and $\move{end}$) form a sequence $(a_0,\ldots,a_{n-1})$; there exists some sequence of states $(s_0,\ldots,s_n) \in Q_\mathcal{C}^{n+1}$ such that for all $i \in (0,\ldots,n-1)$, $s_{i+1}$ appears as the second or third component of~$T(s_{i})$, and $a_i$ is a step which may be performed at state $s_i$ (irrespective of counter values).
\end{itemize}

The state space of the root, $Q^{(0)} = Q_{\mathcal{C}} \times \{\circ, \star, \mathbf{1}, \mathbf{2} \}$, comprises pairs where the first component corresponds to a state of $\mathcal{C}$ and the second tracks an observation of some invalid sequence. The second component is only used in $\Atwo$. We denote the pair at the root by square brackets. The states of the leaves at level 1 are $Q^{(1)} = \bigcup \big\{~ \{ i, 0_{i}, i\star \} ~\big|~ i \in \{1, 2\} ~\big\}$, where $0_{i}$ denotes a temporary leaf generated by $\move{zero_i}$, $i$ denotes a counter, and $i\star$ denotes a counter being observed in $\Atwo$.

The transition function $\delta_1$ of $\Aone$ is defined as follows.

\[
\dagger \trans{\move{start}}_1 [q_0, \circ] 
\qquad 
[q_F, \circ] \trans{\move{end}}_1 \dagger
\qquad
\frac{q \trans{\move{INC}} (i,q')~\in T}{
[q, \circ] \trans{\move{inc_i}}_1 ([q', \circ], i)
}
\]
\[ 
\frac{q \trans{\move{JZDEC}} (i, q', q'')~\in T}{
([q, \circ], i) \trans{\move{dec_i}}_1 [q'', \circ]
\qquad
[q, \circ] \trans{\move{zero_i}}_1 ([q', \circ], 0_{i})
}
\qquad
\frac{q \in Q_\mathcal{C}}
{
([q, \circ], 0_{i}) \trans{\move{zero'_i}}_1 [q, \circ]
}
\]
\bigskip

By construction $\Atwo$ accepts exactly those traces of $\Aone$ where at least one $\move{zero_i}$ letter occurs in between an $\move{inc_i}$ letter and the matching letter $\move{dec_i}$. In other words, the ``then'' of a $\move{JZDEC}$ step has been taken while the counter was nonzero. This is not a legal step, and so such a trace does not represent a computation of $\mathcal{C}$. This implements the high-level description of our second $1$-$\la$.

In order to accept a word, $\Atwo$ must change the second component of the root's state from $\star$ to $\circ$. It does this by nondeterministically choosing to observe some $\move{inc}$ transition. From here, it proceeds as in $\Aone$ until either it meets the matching $\move{dec}$, in which case the automaton rejects, or it meets an $\move{ifz}$ transition on the same counter, at which point it marks the second component with $\circ$ and proceeds as in $\Aone$. 

The transition function $\delta_2$ of $\Atwo$ is defined as follows:
\[
\dagger \trans{\move{start}}_2 [q_0, \star] 
\qquad
[q_F, \circ] \trans{\move{end}}_2 \dagger
\qquad
\frac{
q \trans{\move{INC}} (i,q')~\in T
\qquad
x \in \{\circ, \star, \mathbf{1}, \mathbf{2}\}
}{
[q, x] \trans{\move{inc_i}}_2 ([q', x], i)
\qquad
[q, \star] \trans{\move{inc_i}}_2 ([q, \mathbf{i}], i\star)
}
\]

\[
\frac{
q \trans{\move{JZDEC}} (i,q',q'')~\in T
\qquad
x \in \{\circ, \star, \mathbf{1}, \mathbf{2}\}
}{
[q, x] \trans{\move{zero_i}}_2 ([q', x], 0_{i})
\qquad
[q, \mathbf{i}] \trans{\move{zero_i}}_2 ([q', \circ], 0_{i})
}
\qquad
\frac{
q \in Q_\mathcal{C}
\qquad
x \in \{\circ, \star, \mathbf{1}, \mathbf{2}\}
}{
([q, x], 0_{i}) \trans{\move{zero'_i}}_2 [q, x]
}
\]

\[
\frac{
q \trans{\move{JZDEC}} (i,q',q'')~\in T
\qquad
x \in \{\circ, \star, \mathbf{1}, \mathbf{2}\}
}{
([q, x], i) \trans{\move{dec_i}}_2 [q'', x]
\qquad
([q, \circ], i\star) \trans{\move{dec_i}}_2 [q'', \circ]
}
\]
\bigskip

% For each $\move{JZDEC}$ step $(q, i, q', q'')$ in $T$, the following transitions are in $\delta_2^\move{Steps}$:
% \[
% [q, \circ, x] \trans{\move{zero_i}} ([q', \mathbf{i}, x], \mathbf{i})
% \qquad
% [q, \circ, \mathbf{i}] \trans{\move{zero_i}} ([q', \mathbf{i}, \circ], \mathbf{i})
% \qquad
% ([q', \mathbf{i}, x], \mathbf{i}) \trans{\move{zero'_i}} [q', \circ, x]
% \]\[
% ([q, \circ, x], i) \trans{\move{dec_i}} [q'', \circ, x]
% \qquad 
% ([q, \circ, \circ], i\star) \trans{\move{dec_i}} [q'', \circ, \circ]
% \]

$\Aone$ captures every correctness condition for halting computations of $\mathcal{C}$ except the legality of $\move{zero}$ steps. Hence, $\Atwo$ accepts exactly those accepted traces of $\Aone$ which are \emph{not} halting computations of $\mathcal{C}$, and so $\mathcal{C}$ performs a halting computation if and only if $\Aone \neq \Atwo$.

{
\newcommand{\Ss}{\mathcal{S}}
\newcommand{\Test}{\mathtt{TEST}}
\newcommand{\quu}{\ensuremath{q^{(2i-2)}}}
\newcommand{\qu}{\ensuremath{q^{(2i-1)}}}
\newcommand{\qi}{\ensuremath{q^{(2i)}}}
\newcommand{\qii}{\ensuremath{q^{(2i+1)}}}
\newcommand{\qiii}{\ensuremath{q^{(2i+2)}}}
\newcommand{\qpuu}{\ensuremath{q^{\prime(2i-2)}}}
\newcommand{\qpu}{\ensuremath{q^{\prime(2i-1)}}}
\newcommand{\qpi}{\ensuremath{q^{\prime(2i)}}}
\newcommand{\qpii}{\ensuremath{q^{\prime(2i+1)}}}
\newcommand{\qpiii}{\ensuremath{q^{\prime(2i+2)}}}

\newcommand{\dq}{\delta_\Q}
\newcommand{\da}{\delta_\A}

\newcommand{\Aad}{\Aa^{\downarrow}}
\newcommand{\Aau}{\Aa^{\uparrow}}
\newcommand{\Bbu}{\Bb^{\uparrow}}
\newcommand{\qinit}{q_{init}}
\newcommand{\maxrank}{\mathit{maxrank}}
\newcommand{\summary}{\mathit{Summary}}
\newcommand{\maxdom}[1]{\max(\dom{#1})}
\newcommand{\inc}{\mathit{inc}}
\newcommand{\dec}{\mathit{dec}}
\newcommand{\add}{\mathit{add}}
\newcommand{\del}{\mathit{del}}
\newcommand{\wf}{\widehat{f}}

\newcommand{\state}{\mathit{state}}
\newcommand{\children}{\mathit{children}}
\newcommand{\wrr}{\widehat{r}}
\newcommand{\If}{{\sffamily if}}
\newcommand{\Then}{{\sffamily then}}

\newcommand{\Aa}{\ensuremath{\mathcal{A}}}
\newcommand{\Nat}{\mathbb{N}}
\newcommand{\Bb}{\ensuremath{\mathcal{B}}}
\newcommand{\set}[1]{\{#1\}}
\newcommand{\s}{s}

\section{Additional material for Section~\ref{sec:lla}}
\label{apx:sla}

\subsection{Proof of Theorem \ref{thm:sla-decidable}}
\label{apx:sla-decidable}

We present a proof of  decidability  of the
emptiness problem for $\lla$, Theorem \ref{thm:sla-decidable}.
There are two main steps in the proof. 
The first step uses a notion of summary for some even layer $2i$. 
This allows to restrict an automaton to first $2i$ layers.
The second step is a method for computing a summary for layer $2i$ from a
summary for layer $2i+2$.

\subsection*{Summaries}
The structure of transitions of $\sla$ provides a notation of a domain for data
values. 
The \emph{domain} of a data value $d \in \D$ is the set of data values whose
associated state may be modified by a transition that adds or removes $d$, i.e.,
when reading a letter annotated by $d$.
\[
\dom{d} = \begin{cases}
\set{\predc^2(d), \predc(d), d} & \text{if d is at an even level} \\
\set{\predc^3(d), \predc^2(d), \predc(d), d} & \text{if d is at an odd level}
\end{cases}
\]

Domains give us a notion of independence: Two letters $(t_1, d_1)$, $(t_2, d_2)$ are \emph{independent} if the domains of $d_1$ and $d_2$ are disjoint. We remark that if $w$ is a trace of some $\sla$ then every sequence obtained by permuting adjacent independent letters of $w$ is also a trace of the same $\sla$ ending in the same configuration.

%The form of moves of LLA gives a notion of a domain for every data value $d$.
%It consists of data values whose states can change when reading a letter with
%data $d$. 
% We have $\dom{d}=\{\predc^2(d),\predc^1(d),d\}$ if $d$ is on an even layer, and 
% $\dom{d}=\{\predc^3(d),\predc^2(d),\predc^1(d),d\}$ if $d$ is on an odd layer.
% The domains in turn induce independence on letters with data: $(t,d)$ is
% independent from $(t',d')$ if the domains of $d$ and $d'$ are disjoint.
% It is easy to see that if $w$ is a trace of an LLA then 
% every sequence obtained from $w$ by permuting adjacent independent letters is
% also a trace ending in the same configuration. 

Let us fix an $k$-$\sla$ automaton $\Aa = \abra{\Sigma_\Aa, k_\Aa, Q_\Aa,
\delta_\Aa}$, and let $b$ be its even-layer bound.

Suppose, on an accepting trace on $\Aa$, we encounter some data value $d$ at
even layer $2i$. 
On an accepting trace value $d$ occurs twice: the first occurrence corresponds to 
adding $d$, the second to deleting $d$. Let
$w$ be the part of the trace in between, and including, these two occurrences of $d$.

We can classify letters $(t',d')$ in $w$ into one of three categories:

\begin{enumerate}
    \item \emph{$d$-internal}, when $\dom{d'}$ is included in the subtree rooted at $d$;
    \item \emph{$d$-external}, when $\dom{d'}$ is disjoint from the subtree rooted at $d$;
    \item \emph{$d$-frontier}, when $\dom{d'}$ contains  $d$ and its parent.
\end{enumerate}

Note that these three categories partition the set of all letters in $w$.
The frontier letters are the ones with data value $d$, as well as those with
children of $d$. 
The later are from layer $2i+1$.
Letters with data values from bigger layers are either $d$-internal or
$d$-external.

At this point we use branching bound $b$ of the automaton. 
The number of children of $d$ is bounded by $b$, and every child of $d$ appears
twice in $w$. 
Hence, the number of $d$-frontier letters in $w$ is at most $b+2$, counting the letters with~$d$. 

% The notion of summary for a node at level $2i$ comes from examining the form of
% an accepting trace of $\Aa$ modulo independence.
% Suppose we have a trace of an automaton, and that this trace adds at some
% moment a node $d$ of layer $2i$. 
% Since the trace is accepting, the node must be deleted some time later. 
% Let $w$ be the part of the trace between these two events.
% We classify letters $(t',d')$ from $w$ into one of three types: (i) \emph{$d$-internal} 
% when $\dom{d'}$ is included in the subtree rooted in $d$; (ii) \emph{$d$-external}
% when $\dom{d'}$ in disjoint from the subtree rooted in $d$; (iii) 
% \emph{$d$-frontier} when $d'$ is a child of $d$.
% The number of frontier letters in $w$ is bounded due to the form of LLA.

The $d$-frontier letters divide $w$ into subwords, giving us a sequence of
transitions:
\begin{equation}
\label{eqn:expanded}
	\kappa_1\trans{m_1}\kappa'_1\trans{w_1}\kappa_{2}\trans{m_{2}}\kappa'_{2}\trans{w_2}\dots\kappa_{l}\trans{m_l}\kappa'_l\trans{w_l}\kappa_{l+1}\trans{m_{l+1}}\kappa'_{l+1}
\end{equation}
where $m_1,\dots,m_l$ are $d$-frontier letters; $m_1$ adds node $d$ while $m_{l+1}$ deletes  $d$.
% \adnote{By the definition above, $w$ should not contain the transition which creates or deletes $d$ (though it is adjacent).}\igwnote{You are right. This needs adjustement one way or another but I am still not sure where it is the best to cut.}

Configuration $\kappa'_1$ is the first in which $d$ appears in the tree, so $d$
is a leaf node in $\kappa'_1$. Likewise, $\kappa_l$ is the last configuration in
which $d$ appears, as it is removed by $m_{l+1}$, so $d$ is a leaf node in
$\kappa_{l+1}$. 

We now use independence properties.
Every word $w_j$ contains only $d$-internal and $d$-external letters. 
Due to independence, $w_j$ is equivalent to some $u_jv_j$, with $u_j$
containing only  $d$-internal letters of $w_j$ and $v_j$ containing only  the
$d$-external letters of $w_j$.
(Actually $u_1$ and $u_l$ are empty but we do not need to make a case
distinction in the rest of the argument) 
% So $w_i$ are composed only from $d$-internal and $d$-external letters. 
% Configuration $\kappa_0$ is the one where node $d$ is created; so $d$ is a leaf
% labelled  with some state, say $q$.
% Configuration $\kappa'_{l+1}$ is the one where $d$ is removed; so $d$ is a leaf in
% $\kappa_{l+1}$, and is not in the domain of $\kappa'_{l+1}$.
% This implies that in $w_1$ and $w_{l+1}$ there are no internal actions as the
% first node of level $2i+1$ below $d$ is created in $\kappa'_1$, and the last
% node of level $2i+1$ below $d$ is removed in $\kappa'_l$.
% The important point is that every $w_i$ is equivalent
% modulo independence to $u_iv_i$ with $u_i$ consisting only of $d$-internal letters,
% and $v_i$ only of $d$-external letters.

From here, we can see that the $d$-internal parts
$u_1,\cdots,u_l$ of $w$ only interact with the $d$-external parts at a bounded
number of positions, and those positions exactly correspond to the frontier
transitions $m_2,\cdots,m_l$. Hence, if we could characterize the interactions
that can occur at level $2i$, then we could replace the sequences of transitions
on every $u_j$ by a single short-cut transition. 
This would eliminate the need for levels $\geq 2i$ in the automaton.

We introduce a notion of a summary to implement the idea of short-cut
transitions.
A \emph{summary} for level $2i$ is a function $f \colon \set{1,\dots, 2(l+1)} \to Q^{2i-2} \times
Q^{2i-1}$; for some $l\leq b+1$.
Intuitively, from some trace $w$ expanded as in Equation
\ref{eqn:expanded}, we can extract $f$ such that $f(2j-1)$ is a pair of states
labelling $\predc^2(d)$ and $\predc(d)$ in $\kappa_j$, while $f(2j)$ is a pair
of states labelling these nodes in $\kappa'_{j}$.
This is only the intuition because we do not have runs of $\Aa$ at hand to
compute $f$. 

% From this discussion it follows that the interactions of the internal part of
% the trace $u_1,\dots, u_l$ with the rest of the trace are limited to a bounded
% number of moves: $m_1\dots,m_{l+1}$. 
% Thus if we knew what interactions at level $i$ the automaton can
% realize, we could replace computations $u_1,\dots, u_l$  by single transitions.
% This simplifies the automaton as it removes access to all data layers $\geq 2i$ in the tree.

% We present the interaction with the outside in a form of a \emph{summary} that is a function
% $f:[1,\dots,2d]\to Q^{2i-2}\times Q^{2i-1}$.
% Intuitively, from a trace as above, we extract $f$ where $f(2j-1)$ is a pair of states 
% labelling $\predc^2(d)$ and $\predc(d)$ in $\kappa_j$, while $f(2j)$ is a pair
% of states labelling these nodes in $\kappa'_{j}$.
% The formal definition of the summary actually does not refer to runs of the
% whole automaton.

To formalise the idea of summaries for a given automaton, we will introduce the notion of a \emph{cut
automaton}. Intuitively, the behaviour of a cut automaton $\Aad(2i,  f)$ will
represent the behaviours of $\Aa$ contained within some subtree rooted in a
data value at layer $2i$. 

The states and transitions of $\Aad(2i,  f)$ are those of $\Aa$ but lifted
up so that level $2i$ becomes the root level:
\begin{equation*}
	\Q^{\downarrow (l-2i)} = \Q^{(l)} \qquad \dq^{\downarrow (l-2i)}=\dq^{(l)}\qquad \da^{\downarrow (l-2i)}=\da^{(l)}\qquad\text{for $l\geq2i+2$}
\end{equation*}
The two to layers, $0$ and $1$, are special as just lifting transitions would
make them stick above the root. 
Here is also the place where we use the summary $f$.
\begin{equation*}
    Q^{\downarrow(0)}=Q^{(2i)}\times\dom{f}\qquad Q^{\downarrow(1)}=Q^{(2i+1)}
\end{equation*}
The extra component at layer $0$ will be used for layer $1$ transitions.

% To formalize a notion of a summary for a level $2i$ of an automaton  $\Aa$ we introduce the notion of a \emph{cut automaton}, denoted $\Aad(2i,\qinit,f)$. Its behavior is the behavior of $\Aa$ restricted to some subtree rooted in a data value at layer $2i$. 
% The allowed interactions of this automaton with the rest of the tree will be provided as a summary $f$. 
% The initial state at the root of this automaton will be $\qinit$.
% We will define the set of summaries as those $f$ for which 
%  $\Aad(2i,\qinit,f)$ has an accepting run.

Before defining transitions we introduce some notation. 
For a summary $f$ we write $\maxdom{f}$ for the maximal element in the domain of
$f$.
We use an abbreviated notation for transitions. 
If $f(j)=(\quu,\qu)$, and $f(j+1)=(\qpuu,\qpu)$ then we write
\begin{equation*}
    f(j)\trans{a}(f(j+1),\qpi)\ \text{instead of}\ (\quu,\qu)\trans{a}(\qpuu,\qpu,\qpi)\ .
\end{equation*}

Transitions at levels $0$ and $1$ are adaptations of those of levels $2i$ and $2i+1$
in the original automaton.
A node that was at level $2i$ is now the root so it has no predecessors anymore.
The initial and final moves of $\Aad(2i, f)$ create and destroy the root. 
They use $f$ to predict what are states of predecessors in a corresponding move
of $\Aa$.
\begin{align*}
    \dq^{\downarrow (0)} \text{ contains }& \ \dagger\trans{a}(\qpi,1)\\
    & \text{\quad if there is a transition $f(1)\trans{a} (f(2),\qpi)$ in $\dq^{(2i)}$}\\
    \da^{\downarrow (0)} \text{ contains }& \ (q,r)\trans{a}\dagger\\
    &\text{\quad if $r=\maxdom{f}-1$ and there is $(f(r),q)\trans{a} f(r+1)$ in $\da^{(2i)}$}
\end{align*}
Finally, we have transitions that add and delete nodes on level $1$:
\begin{align*}
	\text{in } \dq^{\downarrow (1)} \text{ we have }& 
	(\qi,r)  \trans{a} ((\qpi,r+2),\qpii)\\
	&\text{\qquad if } (f(r),\qi) \trans{a}(f(r+1),\qpi,\qpii) \in \dq^{(2i+1)}\\
	\text{in } \da^{\downarrow (1)} \text{ we have }&
	((\qi,r),\qii)  \trans{a} ((\qpi,r+2))\\
	& \text{\qquad if } (f(r),\qi,\qii)  \trans{a} (f(r+1),\qpi) \in \da^{(2i+1)}
\end{align*}

% The moves of $\Aad(2i,\qinit,f)$ are the same as of $\Aa$ but lifted by $2i$ levels
% \begin{equation*}
% 	\addd_l=\add_{l-2i}\quad \deld_l=\del_{l-2i}\qquad\text{for $l>2i+2$}
% \end{equation*}
% It remains to define the moves for levels $0$ and $1$, as they "stick-out' of the tree. 
% This is where $f$ and $\qinit$ are used.
% We first have initial and final moves of $\Aad(2i,\qinit,f)$ that create a root,
% and delete it. 
% \begin{align*}
% 	\addd_0 &\text{ contains $\dagger\trans{a}(\qinit,1)$}\\
% 	\deld_0 &\text{ contains $(q,r)\trans{a}\dagger$}\\
% 	&\text{for $r=\maxdom(f)-1$ and some $(f(r),q)\trans{a} f(r+1)$ in $\del_{2i}$}
% \end{align*}
% Then we have transitions adding and deleting nodes on level $1$
% \begin{multline*}
% 	\text{in $\addd_1$ we have $(\qi,r)\trans{a} ((\qpi,r+2),\qpii)$}\\
% 	\text{if $(f(r),\qi)\trans{a}(f(r+1),\qpi,\qpii)$ is in $\add_{2i+1}$.}
% \end{multline*}
% \begin{multline*}
% 	\text{in $\deld_1$ we have $((\qi,r),\qii)\trans{a} ((\qpi,r+2),\qpii)$}\\
% 	\text{if $(f(r),\qi,\qii)\trans{a}(f(r+1),\qpi)$ is in $\del_{2i+1}$.}
% \end{multline*}

\renewcommand{\s}{\sigma}

We can now formally define the set of summaries for an even layer $2i$:
\begin{equation*}
	\summary(\Aa,2i)=\set{f \colon \Aad(2i,f) \text{ accepts some trace}}
\end{equation*}

The next step is to define an automaton that uses such a set of summaries.
The idea is that when a node of layer $2i$ is created it is assigned a summary
from the set of summaries. 
Then all moves below this node are simulated by consulting this summary.
So we will never need layers below $2i$.

Let $\Ss$ be a set of summaries at level $2i$.
We will now define $\Aau(2i, \Ss)$.
It will be $(2i+1)$-$\lla$ automaton.
The states and transitions of $\Aau(2i, \Ss)$
are exactly the states and transitions of $\Aa$ for levels $0$ to $2i-1$. 
The set of states at level $2i$ is
\[ 
    Q^{(2i)} = \set{(f,r) \colon f\in\Ss, r\in\dom{f} }\ .
\]
So a state at layer $2i$ is a summary function and a \emph{use counter}
indicating the part of the summary that has been used. 

For technical reasons we will also need one state at layer $2i+1$. We set
$Q^{(2i+1)}=\set{\bullet}$. 

The transitions $\dq^{\uparrow (2i)}$ and $\da^{\uparrow (2i)}$ are defined as follows.
\begin{align*}
	\text{in } \dq^{\uparrow (2i)} \text{ we have }& f(1) \trans{a} (f(2),(f,3))&\quad
	\text{if }  f \in \Ss\\
	\text{in } \da^{\uparrow (2i)} \text{ we have }& (f(r),(f,r)) \trans{a} f(r+1)&\quad
	\text{if } r = \maxdom{f}-1
\end{align*}

These transitions imply that for every node created at level $2i$, the automaton
guesses a summary and sets the summary's use counter to $3$. 
It is $3$ and not $1$ because the first two values of $f$ are used for the
creation of the node.
The node can be deleted once this bounded counter value is maximal. 

% Let $\set{\s_q}_{q\in Q_{2i}}$ be a set of summaries of level $2i$ indexed by the states from $Q_{2i}$. 
% We define $\Aau(2i,\set{\s_q}_{q\in Q_{2i}})$.
% The states of $\Aau(2i,\set{\s_q}_{q\in Q_{2i}})$ are the states of $\Aa$ for
% levels $0,\dots, 2i-1$. 
% The set of states at level $2i$ is $\set{(f,r) : r\in s_q\text{ for some $q$},
% r\in\dom{f}}$.

% This time, the moves up to level $2i-1$ are the same as in $\Aa$.
% For level $2_i$ we define
% \begin{multline*}
% 	\text{in $\addu_{2i}$ we have $(\quu,\qu)\trans{a} ((\quu,\qu,(f,1))$}\\
% 	\text{if $(\quu,\qu)\trans{a}(\qpuu,\qpu,\qi)$ is in $\add_{2i+1}$ and $f\in\s_{\qi}$.}
% \end{multline*}
% \begin{multline*}
    % 	\text{if $r=\maxdom(f)-1$}
% 	\text{in $\delu_{2i}$ we have $(f(r),(f,r))\trans{a} f(r+1)$}\\
% \end{multline*}
% These transitions mean that for every $2i$ node created automaton guesses a
% summary, and initiates its use index to $1$.
% When the use index gets to maximal value, the node can be deleted. 

Finally, we define the transitions in $\dq^{\uparrow (2i+1)}$ and $\da^{\uparrow (2i+1)}$:
\begin{align*}
	\text{In } \dq^{\uparrow (2i+1)} \text{ we have }& (f(r),(f,r))\trans{a} (f(r+1),(f,r+2),\bullet)\\
	& \qquad\text{if } r<\maxdom{f}-1\\
    \text{In } \da^{\uparrow (2i+1)} \text{ we have }& (f(r),(f,r),\bullet)\trans{a} (f(r),(f,r))\\
    & \text{if }  r=\maxdom{f}-1
\end{align*}
So the automaton creates a child node whenever it uses a summary.
The use counter is increased by $2$ at such a transition.
Once the use counter cannot be increased anymore, $\da^{\uparrow (2i+1)}$
provides transitions for deleting children at layer $2i+1$.
No other transitions are applicable at this point. 
Once there are no children, the root can be removed by a $\da^{\uparrow(2i)}$ transition.

% \begin{multline*}
% 	\text{in $\addu_{2i+1}$ we have $(f(r),(f,r))\trans{a} (f(r+1),(f,r+2),\bullet)$}\\
% 	\text{if $r<\maxdom(f)-1$}
% \end{multline*}
% \begin{multline*}
% 	\text{in $\delu_{2i+1}$ we have $(f(r),(f,r),\bullet)\trans{a} (f(r),(f,r))$}\\
% 	\text{if $r=\maxdom(f)-1$}
% \end{multline*}
% With these transitions, each time automaton uses a summary it creates a node of
% level $2i+1$. 
% Every time the use index is incremented by $2$.
% Then there are transitions allowing to delete all these nodes when the maximal
% value of a use index is reached. 

The next lemma states formally the relation between the two automata we have
introduced and the original one.
Recall that $\Aad$ is used to define a set of summaries.
The lemma is proved by stitching runs of $\Aau$ and $\Aad$.

\begin{lemma}
	For every $k$-level automaton $\Aa$ and level $2i<k$,
	$\Aa$ accepts a trace iff $\Aau(2i,\summary(\Aa,2i))$
	accepts a trace. 
\end{lemma}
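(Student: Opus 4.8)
In the direction from $\Aa$ to $\Aau(2i,\summary(\Aa,2i))$, and back, I would prove the two implications separately. Since the statement only concerns non-emptiness of the accepted languages, in each direction it suffices to convert an accepting run of one automaton into an accepting run of the other, with no need for a tight correspondence between the underlying traces.

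Direction from $\Aa$ to $\Aau(2i,\summary(\Aa,2i))$: fix an accepting run $\rho$ of $\Aa$ on a trace $w$. For each data value $d$ that $\rho$ creates at level $2i$, consider the part of $\rho$ between the creation and the deletion of $d$. By the domain/independence analysis preceding the lemma, the letters of this part that touch $d$ or a child of $d$ (the $d$-frontier letters) are boundedly many, say $m_1,\dots,m_{l+1}$ in order of occurrence with $m_1$ creating $d$ and $m_{l+1}$ deleting it, whereas the remaining letters in the subtree rooted at $d$ (the $d$-internal letters) sit at levels $\ge 2i+2$ and hence never touch levels $\le 2i-1$. Read off from $\rho$ the pair of states carried by $\predc^2(d)$ and $\predc(d)$ immediately before and immediately after each $m_j$; this gives a candidate summary $f_d$ on $\{1,\dots,2(l+1)\}$. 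First I would check $f_d\in\summary(\Aa,2i)$: re-rooting the $d$-internal and $d$-frontier letters so that $d$ becomes the root (a level-$j$ descendant of $d$ becoming level-$(j-2i)$ datum), the resulting word is accepted by $\Aad(2i,f_d)$, since the transitions of $\Aad$ at lifted levels $\ge 2$ are literally those of $\Aa$ at levels $\ge 2i+2$ and its level-$0$/level-$1$ transitions are exactly the $m_j$ with the $\predc^2(d),\predc(d)$ states supplied by $f_d$; the run is accepting because $d$ is a leaf when $\rho$ deletes it, so its subtree is empty by then. Next I would build the $\Aau$-run: replay every move of $\rho$ at levels $\le 2i-1$ verbatim (they are insensitive to the removal of the $d$-internal letters, which live at levels $\ge 2i+2$); when $\rho$ creates a level-$2i$ node $d$, create it in $\Aau$ carrying the guessed $f_d$; replace each frontier move $m_j$ with $2\le j\le l$ by the matching summary-consuming transition of $\dq^{\uparrow(2i+1)}$, which performs on levels $2i-2,2i-1$ the transition recorded in $f_d$ and advances $d$'s use-counter; drop all $d$-internal moves; and when $\rho$ deletes $d$, first clear the dummy level-$(2i+1)$ children and then delete $d$. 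An induction on prefixes of $\rho$ shows that the $\Aau$-configuration agrees with $\rho$ on levels $\le 2i-1$ and that every live level-$2i$ node carries $(f_d,r)$ with $r$ counting its frontier moves so far, whence acceptance transfers.

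Direction from $\Aau(2i,\summary(\Aa,2i))$ to $\Aa$: fix an accepting run $\sigma$ of $\Aau(2i,\summary(\Aa,2i))$. Every level-$2i$ node $d$ that $\sigma$ creates carries a guessed summary $f\in\summary(\Aa,2i)$, so by definition there is an accepting run $\tau_f$ of $\Aad(2i,f)$ on some word $v_f$. Obtain an accepting run of $\Aa$ by stitching: keep all moves of $\sigma$ at levels $\le 2i-1$; and for each level-$2i$ node $d$ of $\sigma$ with summary $f$, re-root $v_f$ at $d$ (the root of $\tau_f$ becoming $d$, its level-$j$ data becoming level-$(2i+j)$ data below $d$) and splice the moves of $\tau_f$ into the global order so that its level-$0$/level-$1$ moves coincide with the frontier moves of $\sigma$ for $d$ and its deeper moves are inserted in between. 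The essential check is that $\tau_f$'s level-$0$/level-$1$ transitions read and write the $\predc^2(d),\predc(d)$ slots along the prescribed sequence $f(1),f(2),\dots$, which is exactly what the matched frontier moves of $\sigma$ do on levels $2i-2,2i-1$; the ``gaps'' $f(2j)\neq f(2j+1)$ that the summary tolerates absorb whatever changes $\sigma$ makes in between through sibling level-$2i$ nodes sharing the same level-$(2i-1)$ ancestor, and the use-counter discipline forces the frontier moves of each $d$ to occur in the order expected by $f$. The run of $\Aa$ so obtained is accepting because $\sigma$ empties all levels $\le 2i+1$ and each $\tau_f$ empties the subtree of $d$ before $d$ is removed.

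The main obstacle is this interleaving bookkeeping, common to both directions: one must verify that at every frontier move the actual states on levels $2i-2$ and $2i-1$ match the summary entry the corresponding transition expects, even though arbitrarily many level-$2i$ nodes (each carrying a whole subtree in the $\Aa$-run) may be simultaneously alive and sharing ancestors, and that inserting or removing the level-$\ge 2i+2$ activity never disturbs the lower levels. Setting up the re-rooting of words and runs and the precise synchronisation between the use-counter and the frontier moves is where the real work lies; once these invariants are in place, matching transitions is a routine case check against the definitions of $\dq^{\downarrow},\da^{\downarrow},\dq^{\uparrow},\da^{\uparrow}$.
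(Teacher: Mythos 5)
Your proof is correct and takes the same route as the paper, which for this lemma gives only the one-line indication that it ``is proved by stitching runs of $\Aau$ and $\Aad$'': your two directions (per level-$2i$ node, extracting a candidate summary from a run of $\Aa$, validating it via re-rooting against $\Aad$, and replaying the low levels with frontier moves turned into summary-consuming transitions; conversely splicing re-rooted accepting runs of $\Aad(2i,f)$ into an accepting run of $\Aau$ at the frontier moves) are exactly that stitching, spelled out. The interleaving bookkeeping you identify as the real work is indeed what the paper's preceding domain/independence analysis is set up to discharge, and you use it in the intended way.
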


The next lemma shows how to use summaries of level $2i+1$ to compute summaries
at level $2i$. 

\begin{lemma}\label{lem:summary-step}
	Take  a summary $f$ of some level $2i$, and consider $\Bb=\Aad(2i,f)$.
	Then $\Bb$ accepts some trace iff $\Bbu(2,\summary(\Aa,2i+2))$ accepts some trace.
\end{lemma}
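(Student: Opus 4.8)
The plan is to obtain Lemma~\ref{lem:summary-step} as an instance of the preceding (``stitching'') lemma, applied not to $\Aa$ but to the cut automaton $\Bb=\Aad(2i,f)$ at \emph{its own} level~$2$. That lemma, read with $\Bb$ in place of $\Aa$ and $2$ in place of $2i$, states that $\Bb$ accepts a trace iff $\Bbu(2,\summary(\Bb,2))$ accepts a trace. Hence everything reduces to establishing
\[
\summary(\Bb,2)=\summary(\Aa,2i+2)
\]
(the equality being understood up to an inert bookkeeping coordinate, explained below); granting this, the lemma follows at once.

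The core observation is that $\Bb=\Aad(2i,f)$ coincides with $\Aa$ on all levels from $2i+2$ upward, merely shifted down by $2i$: by construction the level-$m$ states of $\Bb$ are exactly $Q_\Aa^{(m+2i)}$, and the level-$m$ question/answer transitions of $\Bb$ are exactly those of $\Aa$ at level $m+2i$, for every $m\ge 2$, whereas the special levels $0$ and $1$ of $\Bb$ (the adaptations of $\Aa$'s levels $2i$ and $2i+1$ that consult $f$) are never \emph{rewritten} by a transition of $\Bb$ at level $\ge 2$. Consequently the down-cut $\Bb^{\downarrow}(2,g)$, whose non-emptiness defines membership of $g$ in $\summary(\Bb,2)$, is the very same configuration graph as $\Aad(2i+2,g)$, whose non-emptiness defines membership of $g$ in $\summary(\Aa,2i+2)$. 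The one caveat is that the level-$0$ states of $\Bb$ form the set $Q_\Aa^{(2i)}\times\dom{f}$, carrying the use-counter component $r$; since $r$ is consulted only by $\Bb$'s level-$1$ transitions (those simulating $\Aa$'s level-$(2i+1)$ frontier moves via $f$), every summary in $\summary(\Bb,2)$ has a constant, arbitrary $r$-component, and discarding it identifies $\summary(\Bb,2)$ with $\summary(\Aa,2i+2)$. Verifying this precisely --- tracing through the definitions of $\Aad$ and $\Aau$ to confirm that $r$ is genuinely inert at levels $\ge 2$, and that the ``predicted-state'' slots of $\Bb^{\downarrow}(2,\cdot)$ line up slot-for-slot with those of $\Aad(2i+2,\cdot)$ --- is the step I expect to demand the most care.

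Two routine side-conditions would complete the plan. First, I would check that $\Bb$ is again a local leafy automaton to which the preceding lemma is applicable: its levels alternate bounded/unbounded with the new root level $0$ bounded (the factor $\dom{f}$ is finite, so level-$0$ boundedness is inherited, and the even-level branching bound of $\Aa$ is preserved), and it satisfies the locality constraints on transitions because $\Aa$ does. Second, for the preceding lemma to apply at level $2$ of $\Bb$ one needs $2<k-2i$, which holds in the generic situation $2i+2<k$ under which the lemma is invoked in the recursion of Theorem~\ref{thm:sla-decidable}; the few boundary cases, where $\Bb$ has too few levels, are disposed of directly, the non-emptiness of such a small cut automaton being decided by the same VASS-reachability reduction that serves as the base of that recursion. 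With these in place, Lemma~\ref{lem:summary-step} is immediate from the preceding lemma together with the displayed identity.
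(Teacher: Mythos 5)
Your proposal follows exactly the paper's own argument: the paper's proof is the one-liner ``Follows from $\summary(\Aa,2i+2)=\summary(\Bb,2)$ and the previous lemma,'' i.e.\ apply the stitching lemma to $\Bb$ at its level $2$ and invoke the identity of summary sets. You simply spell out in more detail why that identity holds (the level shift and the inertness of the use-counter coordinate) and check the applicability side-conditions, which the paper leaves implicit.
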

\begin{proof}
	Follows from $\summary(\Aa,2i+2)=\summary(\Bb,2)$ and the previous lemma.
\end{proof}

The lemma reduces the task of computing summaries to checking emptiness of
automata with $3$ layers. 
In the next subsection we show how to reduce the later problem to the
reachability problem in VASS.
With this lemma we can compute $\summary(\Aa,2i)$ inductively. 
Once we compute $\summary(\Aa,2)$, we can reduce testing emptiness of
$\Aau(2,\summary(\Aa,2))$ to VASS reachability.
This turns out to be degenerate case of computing summaries, so the same
technique as for computing summaries applies.

\subsection*{Computing summaries}

% A summary for a node of level $2i$ is a set of possible sequences of
% interactions with nodes of levels $2i-1$ and $2i-2$ from the moment when node
% $2i$ is created with state $q$ to the moment when it is deleted.
% Such a sequence of interactions is encoded as a function $f:\set{1,\dots,2d}\to
% Q_{2i-2}\times Q_{2i-1}$ for some $d<\maxrank$.
% Value $f(2r-1)=(q_{-2},q_{-1})$ says that $r$-th interaction of the node at
% level $2i$ tests if the states at levels $2i-2$ and $2i-1$ are $q_{-2}$ and
% $q_{-1}$, respectively. 
% Value $f(2r)=(q'_{-2},q'_{-1})$ says what are the new states at levels $2i-2$
% and $2i-1$  after the interaction.
% So our task is to compute the set $\summary(2i,q)$ of all summaries coming from
% $(2i)$-runs (to be defined). 

We compute $\summary(\Aa, 2i)$ assuming that we know $\summary(\Aa, 2i+2)$.
For this we use Lemma~\ref{lem:summary-step}. 
We reduce testing emptiness of $\Bbu(2,\summary(\Aa,2i+2))$ from that lemma to VASS
reachability.
Since presenting a VASS directly would be quite unreadable, we present a
nondeterministic program that will use variables ranging 
over bounded domains and some fixed set of non-negative counters. 
By construction, every counter will be tested for $0$ only at the end of the
computation.
This structure allows us to emulate our nondeterministic program in a VASS, such that acceptance by the program is equivalent to reachability of a particular configuration in the VASS.

% The part of the data tree we are looking at is 
%\includegraphics[scale=.7]{data-tree.png}

% More precisely we assume that all the computations involving nodes on level
% $(2i+2)$ are captured by summaries: $\summary(2i+2,q')$ for $q'\in Q_{2i+2}$.

We fix a summary $\wf$ of level $2i$. Observe that the number of summaries at
level $2i$ is bounded, and so it is sufficient to check whether a given
candidate summary $\wf$ is a valid summary. 
% So it is enough to show how to check if a potential summary $\wf$ does indeed
% belong to $\summary(2i,q)$.

The variables of the program are as follows:
\begin{align*}
	\wrr\in~&\dom{\wf}\\
	\state\in~&Q_{2i}\cup\set{\bot}\\
	\state[j]\in~&Q_{2i+1}\cup\set{\bot,\top} & j \in \set{1,\dots,b}\\
	\children[j,f,r]\in~&\mathbb{N} &\text{$f$ summary at level $(2i+2)$, $r \in\dom{f}$}
\end{align*}
Intuitively, $\state$ and $\wrr$ represent a state from $Q^{(0)}$ of
$\Bbu(2,\summary(\Aa,2i+2))$.
The initial configuration is empty so $\state=\bot$.
Variable $\state[j]$, represents the state of  $j$-th child of the root.
By boundedness, the root can have at most $b$ children. 
Value $\state[j]=\bot$ means that the child has not yet been yet created, and
$\state[j]=\top$  that the child has been deleted.
Counter $\children[j,f,r]$ indicates the number of children of the $j$-th child of
the root with a particular summary $f$ of level $2i+2$ and usage counter $r$.

Following these intuitions the initial values of the variables are $\wrr=1$,
$\state=\bot$, $\state[j]=\bot$ for every $j$,  and $\children[j,f,r]=0$ for
every $j$, $f$ and $r$.

% We consider every transition that involves a node at level $2i$. These are all one of the following shapes:
% \begin{align*}
% (\quu, \qu) & \xrightarrow{t} (\qpuu, \qpu, \qpi) \\
% (\quu, \qu, \qi) & \xrightarrow{t} (\qpuu, \qpu, \qpi, \qpii) \\
% (\qi, \qii) & \xrightarrow{t} (\qpi, \qpii, \qpiii)
% \end{align*}

The program $\Test(\wf)$ we are going to write is a set of rules that are executed
nondeterministically.
Either the program will eventually \texttt{accept}, or it will block with no further rules that can be applied. 
We later show that the program has an accepting run for $\wf$ iff
$\wf\in\summary(\Aa,2i)$. The rules of the program refer to transitions of $\Aa$
and simulate the definition of $\Bbu(2,\summary(\Aa,2i+2))$ from Lemma~\ref{lem:summary-step}.
They are defined as follows.

\makeatletter
\newcommand{\alist}[1]{%
    \begin{aligned}[t] & #1\checknextarg}
\newcommand{\checknextarg}{\@ifnextchar\bgroup{\gobblenextarg}{ \end{aligned} }}
\newcommand{\gobblenextarg}[1]{ \\ & #1\@ifnextchar\bgroup{\gobblenextarg}{ \end{aligned} }}
\makeatother

\newcommand{\vassrule}[2]{
    \begin{aligned}
    \textbf{if}~~ & #1 \\
    \textbf{then}~~ & #2
    \end{aligned}
}

\paragraph{Initializing the root}
We have a rule
\begin{equation*}
    \vassrule{\state=\bot}
    {\alist
        {\state=\qpi}
        {\wrr=3}
    }
\end{equation*}
for every transition $f(1)\trans{a} (f(2),\qpi)$ in $\dq^{(2i)}$.

\paragraph{Removing the root and accepting.} The program is able to accept when
it has completed all of its interaction with the outside world. Observe that
this is the only time that the counters are tested for zero.
Since this occurs at the end of the program, it can be easily checked by VASS reachability.
\[
\vassrule
{\alist
    {\state=\qi}
    {\wrr  = \maxdom{\wf}-1}
    {\forall j \colon \state[j] = \top}
    {\forall (j,f,r) \colon \children[j,f,r] = 0}
}
{\texttt{accept}}
\]
for every $(f(\wrr),\qi)\trans{a} f(\wrr+1)$ in $\da^{(2i)}$.

\paragraph{Adding a node at level $2i+1$.} We ensure that we are in the correct
state and ensure that the summary we are testing aligns with some transition
from the automaton.
\begin{equation*}
    \vassrule%
    {\alist
        {\state = \qi}
        {\wf(\wrr) = (\quu, \qu)}
        {\wf(\wrr+1) = (\qpuu, \qpu)}
        {\wrr + 2 < \maxdom{\wf}}
        {\exists j \colon \state[j] = \bot}
    }%
    {\alist
        {\state := \qpi}
        {\state[j] := \qpii}
        {\wrr = \wrr + 2}
    }
\end{equation*}
for every transition 
\begin{equation*}
   (\quu, \qu, \qi) \xrightarrow{t} (\qpuu, \qpu, \qpi, \qpii) \in \delta_\Q^{(2i+1)}
\end{equation*}

\paragraph{Removing a node at level $2i+1$.} We delete  a child according to
some transition from $\delta_\Q^{(2i+1)}$. While the zero test (ensuring $j$ is
a leaf) is not performed here directly, no further operations will be made on
children counters of this child and hence the zero test performed at the end of
the simulation does the job.
\begin{equation*}
    \vassrule
    {\alist
        {\state = \qi}
        {\wf(\wrr) = (\quu, \qu)}
        {\wf(\wrr+1) = (\qpuu, \qpu)}
        {\wrr + 2 < \maxdom{\wf}}
        {\exists j \colon \state[j] = \qii}
    }
    {\alist
        {\state := \qpi}
        {\state[j] := \top}
        {\wrr = \wrr + 2}
    }
\end{equation*}
for every transition
\begin{equation*}
    (\quu, \qu, \qi, \qii) \xrightarrow{t} (\qpuu, \qpu, \qpi) \in \delta_\A^{(2i+1)} 
\end{equation*}

\paragraph{Adding a node at level $2i+2$.} Firstly we ensure that there is some
child $j$ where such a node can be appended. We simulate creation of a child by
nondeterministically choosing a summary and increasing the corresponding
unbounded counter. Index $3$ in $\children[j,f,3]$ means that this child is
after the first interaction with its ancestors at levels $2i$ and $2i+1$, that
happened at its creation.
\begin{equation*}
    \vassrule
    {\alist
        {\state      = \qi}
        {\exists j \colon \state[j]   = \qii}
    }
    {\alist
        {\state      = \qpi}
        {\state[j]   = \qpii}
        {\children[j,f,3] \text{ += } 1} 
        {\text{ for some $f \in \summary(2i+2)$ s.t.}}
        {\text{\qquad \qquad \qquad $f(1)=(\qi,\qii)$ and $f(2)=(\qpi,\qpii)$}}
    }
\end{equation*}
% for every 
% \begin{equation*}
%     (\qi, \qii) \xrightarrow{t} (\qpi, \qpii, \qpiii)) \in \delta_Q^{(2i+2)}    
% \end{equation*}

% \noindent\If\ $\state=q_0$ and $\exists b. \state[b]=q_1$\\ 
% \Then\ $\state:=q'_0$; $\state[b]:=q'_1$; $\inc(\children[b,f,1])$ for some
% $f\in\summary(2i+2,q'_2)$ 

% So we check if there is a child $b$ of the node where the rule can be applied,
% and then simulate creation of a new child by guessing its summary and increasing
% the corresponding counter.
% Index $1$ in $\children[b,f,1]$ means that the child has not yet interacted with
% its ancestors.

% \includegraphics[scale=.5]{add-2ip3.png}

\paragraph{Progressing a child at level $2i+2$.} We identify an appropriate
child $j$ which itself has a child in state $(f, r)$. We use the
test $r + 2 < \maxdom{f}$ to ensure that the last interaction of the node is
reserved for deletion of our root node.
\begin{equation*}
    \vassrule
    {\alist
        {\state = \qi}
        {
        \exists (j,f,r) \colon \state[j] = \qii\\
               & \hspace{15mm} \text{ and } f(r) = (\qi, \qii) \\
               & \hspace{15mm} \text{ and } f(r+1) = (\qpi, \qpii)\\
               & \hspace{15mm} \text{ and } (r+2) < \maxdom{f} \\
               & \hspace{15mm} \text{ and } \children[j,f,r] \geq 1
        }
    }
    {\alist
        {\state := \qpi}
        {\state[j] := \qpii}
        {\children[j,f,r+2] \text{ += } 1}
        {\children[j,f,r] \text{ -= } 1}
    }
\end{equation*}
Observe that the test $\children[j,f,r] \geq 1$ can be simulated by a VASS because
we have $\children[j,f,r] \text{ -= } 1$ in the statement that follows.

% \noindent\If\ $\state=q_0$ and $\exists b,f,r.\ \state[b]=q_1$  and
% $f(r)=(q_0,q_1)$ and $f(r+1)=(q'_0,q'_1)$ and
% $(r+2)<\maxdom(f )$ and $\dec(\children[b,f,r])$\\
% \Then\ $\state:=q'_0$; $\state[b]:=q'_1$; $\inc(\children[b,f,r+2])$

% Here we find an appropriate child $b$ that has itself an appropriate child
% represented by a non-zero value of a counter $\children[b,f,r]$. 
% The test $(r+2)<\maxdom(f)$ is need to ensure that we do not use the last
% interaction of the node. 
% The last interaction is by definition, a deletion of the node, so it cannot be
% used in this case.
% In this case we change the sate, decrease the counter, and increase the counter
% representing the change of state of the child of $b$.

\paragraph{Removing a node at level $2i+2$.} We find a child which has completed
its summary to the point that it can now be removed. We use the last values in
$f$ to determine how to remove the node. 
\begin{equation*}
    \vassrule
    {\alist
        {\state = \qi}
        {
        \exists (j,f,r) \colon \state[j] = \qii\\
               & \hspace{15mm} \text{ and } f(r) = (\qi, \qii) \\
               & \hspace{15mm} \text{ and } f(r+1) = (\qpi, \qpii)\\
               & \hspace{15mm} \text{ and } (r+1) = \maxdom{f}\\
               & \hspace{15mm} \text{ and } \children[j,f,r] \geq 1
        }
    }
    {\alist
        {\state := \qpi}
        {\state[b] := \qpii}
        {\children[j,f,r] \text{ -= } 1}
    } 
\end{equation*}

\begin{lemma}
    Program $\Test(\wf)$ accepts iff
    $\wf\in\summary(\Aa,2i)$. 
\end{lemma}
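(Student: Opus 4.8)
By Lemma~\ref{lem:summary-step}, $\wf\in\summary(\Aa,2i)$ holds if and only if $\Bbu(2,\summary(\Aa,2i+2))$ accepts some trace, where $\Bb=\Aad(2i,\wf)$. So the plan is to show that $\Test(\wf)$ has an accepting run precisely when $\Bbu(2,\summary(\Aa,2i+2))$ does, by exhibiting a representation relation between valuations of the program's variables and finite runs of that three-layer automaton (a run carries both a configuration and a record of which level-$(2i+1)$ children have already been removed, which the program also keeps). Using $\Aa$'s original level numbering, the root of $\Bbu(2,\summary(\Aa,2i+2))$ lives at level $2i$, its children at level $2i+1$, its grandchildren at level $2i+2$ (with states of the form $(f,r)$, $f\in\summary(\Aa,2i+2)$), and the $\bullet$-nodes at level $2i+3$. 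The relation records: the level-$2i$ state in the pair $(\state,\wrr)$, with $\wrr$ the $\dom{\wf}$-valued component $\Bb$ carries as its bookkeeping of which frontier interaction with its (absent) ancestors it is at; the state of each of the at most $b$ level-$(2i+1)$ children of the root in the corresponding $\state[j]$, using $\bot$ for a not-yet-created slot and $\top$ for a child already removed; and, for each such child $j$, the \emph{number} of its level-$(2i+2)$ grandchildren currently in state $(f,r)$, as the counter $\children[j,f,r]$. The level-$(2i+3)$ $\bullet$-nodes carry no state and are not recorded: the program elides them, a single progression rule standing for a $\Bbu$-creation of a $\bullet$ together with its eventual deletion. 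The place where collapsing grandchildren to counts is justified is the independence/permutation property of $\sla$ traces from Section~\ref{sec:lla}: distinct level-$(2i+2)$ subtrees hanging off the same parent interact with the rest of the run only through that parent, so their frontier moves can be reordered at will and only the multiset of their summary-states is relevant.

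The core of the argument is then a rule-by-rule verification that the relation is preserved in both directions: every step of $\Bbu(2,\summary(\Aa,2i+2))$ (other than the benign $\bullet$-steps, which are absorbed) is mirrored by a firing of the corresponding program rule, and conversely every rule of $\Test(\wf)$ can be matched by $\Bbu(2,\summary(\Aa,2i+2))$, the program resolving nondeterministically the data it has abstracted away (which fresh datum, which concrete grandchild of state $(f,r)$). Concretely, ``initialising the root'' and ``removing the root and accepting'' correspond to the root-level $\dq/\da$ transitions of $\Bb$; ``adding/removing a node at level $2i+1$'' to $\Bb$'s level-$(2i+1)$ transitions; and ``adding/progressing/removing a node at level $2i+2$'' to the transitions of $\Bbu$ built from $\summary(\Aa,2i+2)$ via the recipe in the definition of $\Aau$, whose side conditions ($f(r)=(\qi,\qii)$, $f(r+1)=(\qpi,\qpii)$, $r+2<\maxdom{f}$ or $r+1=\maxdom{f}$, etc.) are copied verbatim into the program's guards. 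The initial valuation represents the empty configuration, and the $\texttt{accept}$ guard --- all $\state[j]=\top$, all $\children[j,f,r]=0$, plus the final root-removal transition --- holds exactly when $\Bbu(2,\summary(\Aa,2i+2))$ has reached a configuration with empty tree, i.e.\ an accepting one. Given this, the forward implication (automaton accepts $\Rightarrow$ program accepts) is immediate by replaying an accepting trace and firing matching rules.

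I expect the converse to be the main obstacle. From an accepting run of $\Test(\wf)$ one must reconstruct a genuine accepting trace of $\Bbu(2,\summary(\Aa,2i+2))$: choose a fresh datum at each ``add a node at level $2i+2$'' step, choose some concrete grandchild of the required state $(f,r)$ at each ``progress''/``remove'' step (available because $\children[j,f,r]\ge 1$ is tested there), and re-insert the $\bullet$-node creations and deletions. The delicate point is that $\Bbu(2,\summary(\Aa,2i+2))$, being a leafy automaton, may delete only a leaf, whereas $\Test(\wf)$ performs no zero-test when it sets a level-$(2i+1)$ child slot to $\top$; soundness of this deferral must be argued by observing that once $\state[j]=\top$ no rule ever again modifies $\children[j,\cdot,\cdot]$, so the terminal zero-tests in the $\texttt{accept}$ guard retroactively certify that child $j$ had no surviving grandchildren when it was removed --- and dually for level-$(2i+2)$ nodes and their $\bullet$-children, whose leaf checks are likewise postponed. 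Finally, re-ordering the reconstructed trace as independence permits, so that the frontier moves of each subtree occur contiguously, yields a bona fide trace of $\Bbu(2,\summary(\Aa,2i+2))$. The same construction and correspondence also cover the degenerate outermost instance --- computing $\summary(\Aa,2)$ and then testing emptiness of $\Aau(2,\summary(\Aa,2))$ --- with the root's branching bound in place of $b$ and no enclosing summary to carry, so no separate treatment is needed.
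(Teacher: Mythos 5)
Your proposal is correct and follows essentially the same route as the paper: reduce via Lemma~\ref{lem:summary-step} to showing that $\Test(\wf)$ accepts iff $\Bbu(2,\summary(\Aa,2i+2))$ accepts a trace, and then establish a rule-by-rule, one-to-one correspondence between program instructions and automaton transitions. The paper states this correspondence in two lines, whereas you spell out the representation relation and the delicate points (deferred zero-tests certified by the terminal counter checks, abstraction of grandchildren to counters justified by independence, elision of the $\bullet$-nodes), all of which match the remarks the paper attaches to the individual rules.
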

\begin{proof}
    By definition, $\wf\in\summary(\Aa,2i)$ if automaton $\Bb=\Aad(2i,\wf)$ accepts a
    trace.
    By Lemma~\ref{lem:summary-step} this is equivalent to
    $\Bbu(2,\summary(\Aa,2i+2))$ accepting some trace.
    It can be checked that the instructions of $\Test(\wf)$ correspond
    one-to-one to transitions of $\Bbu(2,\summary(\Aa,2i+2))$. 
    So an accepting run of $\Test(\wf)$ can be obtained from a trace accepted by
    $\Bbu(2,\summary(\Aa,2i+2))$, and vice versa.
\end{proof}

% \includegraphics[scale=.5]{del-2ip2.png}

% \noindent\If\ $\state=q_0$ and $\exists b,f,r.\ \state[b]=q_1$  and
% $f(r)=(q_0,q_1)$ and $f(r+1)=(q'_0,q'_1)$ and $r+1=\maxdom(f)$ and
% $\dec(\children[b,f,r])$\\
% \Then\ $\state:=q'_0$; $\state[b]:=q'_1$

% \includegraphics[scale=.5]{del-2ip3.png}

% [ This already appears above ]

% \noindent\If\ $\state=q_0$ and $\exists b,f,r.\ \state[b]=q_1$  and
% $f(r)=(q_0,q_1)$ and $f(r+1)=(q'_0,q'_1)$ and $r+2<\maxdom(f)$ and
% $\dec(\children[b,f,r])$\\
% \Then\ $\state:=q'_0$; $\state[b]:=q'_1$; $\inc(\children[b,f,r+2])$

% \includegraphics[scale=.5]{add-2ip1.png}

% \noindent\If\ $\wf(\wrr)=(q_{-2},q_{-1})$ and $\state=q_0$ and
% $\wf(\wrr+1)=(q'_{-2},q'_{-1})$ and $\wrr+2<\maxdom(\wf)$\\
% \Then\ $\state:=q'_0$; $\wrr=\wrr+2$, $\state[b]:=q'_1$ for some $b$ s.t.\
% $\state[b]=\bot$

% \includegraphics[scale=.5]{del-2i.png}

% \noindent\If\ $\wf(\wrr)=(q_{-2},q_{-1})$ and $\state=q_0$ and
% $\wf(\wrr+1)=(q'_{-2},q'_{-1})$ and $\wrr+1=\maxdom(\wf)$\\
% \Then\ ACCEPT

% \includegraphics[scale=.5]{del-2ip1.png}

% \noindent\If\ $\wf(\wrr)=(q_{-2},q_{-1})$ and $\state=q_0$ and
% $\wf(\wrr+1)=(q'_{-2},q'_{-1})$ and $r+2<\maxdom(f)$ and $\exists b$.\
% $\state[b]=q_{1}$ and $\forall f,r.$\ $\children[b,f,r]=0$\\	
% \Then\ $\state:=q'_0$; $\wrr=\wrr+2$, $\state[b]:=\top$

% This is the only place where we check counters for $0$. Observe that it happens
% only when the value of $\state[b]$ is set to $\top$. 
% As this can happen only once for each $b$ this shows that every counter is
% tested for $0$ only once. 

}
%\input{sections-apx/02-from-algol}
% !TEX root =  main.tex

\section{Additional material for Section~\ref{sec:tola}}
\label{apx:tola}

\subsection{Proof of Theorem~\ref{thm:trans}} 

Because every $\fica$-term can be converted to $\beta\eta$-normal form, we use induction on the structure of such normal forms. 
The base cases are:
\begin{itemize}
\item $\seq{\Gamma}{\skipcom:\comt}$: $Q^{(0)}= \{0\}$, $\dagger \trans{\mrun} 0$,\,\, $0 \trans{\mdone} \dagger$;
\item $\seq{\Gamma}{\divcom_\comt:\comt}$: $Q^{(0)}= \{0\}$, $\dagger \trans{\mrun} 0$;
\item $\seq{\Gamma}{\divcom_\theta:\theta}$: $Q^{(0)}= \{0\}$, $\dagger \trans{m} 0$,
assuming $\theta=\theta_l\rarr\cdots\rarr\theta_1\rarr\beta$ and $m$ ranges over
question-moves from $M_{\sem{\beta}}$;
\item $\seq{\Gamma}{i:\expt}$: $Q^{(0)}= \{0\}$, $\dagger \trans{\q} 0$,\,\, $0 \trans{i} \dagger$.
\end{itemize}
Observe that they are clearly even-ready, because only one node is ever created.

The remaining cases are inductive. 
Note that  we will use $\mm$ to range over $\alp{\seq{\Gamma}{\theta}}+\{\eq,\ea\}$, i.e. not only $M_{\sem{\seq{\Gamma}{\theta}}}$, and
recall our convention that $m\in M_{\sem{\seq{\Gamma}{\theta}}}$ stands for $m^{(\epsilon,0)}$.

When referring to the inductive hypothesis, i.e. the automaton constructed for some subterm $M_i$,
we will use the subscript $i$ to refer to its components, e.g. $Q_i^{(j)}$,  $\trans{\mm}_i$ etc.
In contrast, we shall use $Q^{(j)}$, $\trans{\mm}$ to refer to the automaton that is being constructed.
The construction will often use inference lines $\frac{\qquad}{\qquad}$ to indicate that the transitions listed under the line should be added
to the new automaton as long as the transitions listed above the line are present in an automaton given by the inductive hypothesis.
Sometimes we will invoke the inductive hypothesis for several terms, which can provide several automata of different depths.
Without loss of generality, we will then assume that they all have the same depth $k$, because an automaton of lower depth can be
viewed as one of higher depth. 

\begin{itemize}
\item $\seq{\Gamma}{\arop{M_1}:\expt}$: $Q^{(j)}=Q_1^{(j)}$ ($0\le j\le k$). In order to interpret unary operators it suffices
to modify transitions  carrying the final answer in the automaton for $M_1$. Formally, this is done as follows.
\[
\frac{(q_1^{(0)}, \cdots, q_1^{(j)}) \trans{\mm}_1 (r_1^{(0)},\cdots, r_1^{(j')}) \qquad \mm\neq i}{(q_1^{(0)}, \cdots, q_1^{(j)}) \trans{\mm} (r_1^{(0)},\cdots, r_1^{(j')})}
\qquad
\frac{q_1^{(0)} \trans{i}_1 \dagger}{q_1^{(0)} \trans{\widehat{\mathbf{op}}(i)} \dagger}
\]
Above, $j$ ranges over $\{-1,0,\cdots, k\}$, so that $(q_1^{(0)}, \cdots, q_1^{(j)})$ can also stand for $\dagger$.
Even-readiness is preserved by the construction, because the 
configuration graph of the original automaton is preserved.

\item $\seq{\Gamma}{M_1|| M_2}:\comt$:  $Q^{(0)} =  Q_1^{(0)} \times Q_2^{(0)}$,
$Q^{(j)}= Q_1^{(j)}+Q_2^{(j)}$  $(1\le j\le k)$.
The first group of transitions activate and terminate the two components respectively:
\[
\frac{\dagger\trans{\mrun}_1 q_1^{(0)}\qquad \dagger\trans{\mrun}_2 q_2^{(0)}}{\dagger\trans{\mrun}(q_1^{(0)},q_2^{(0)}) }\qquad
\frac{q_1^{(0)}\trans{\mdone}_1 \dagger\qquad q_2^{(0)}\trans{\mdone}_2 \dagger}{(q_1^{(0)},q_2^{(0)})\trans{\mdone}\dagger}.
\]
The remaining transitions allow each component to progress.
\[
\frac{(q_1^{(0)}, \cdots, q_1^{(j)}) \trans{\mm}_1 (r_1^{(0)},\cdots, r_1^{(j')})\qquad q_2^{(0)}\in Q_2^{(0)}\qquad \mm\neq\mrun,\mdone}{((q_1^{(0)},q_2^{(0)}), \cdots, q_1^{(j)}) \trans{\mm} ((r_1^{(0)},q_2^{(0)}),\cdots, r_1^{(j')})}\qquad 
\]
\[
\frac{q_1^{(0)}\in Q_1^{(0)}\qquad (q_2^{(0)}, \cdots, q_2^{(j)}) \trans{\mm}_2 (r_2^{(0)},\cdots, r_2^{(j')})\qquad \mm\neq\mrun,\mdone
}{((q_1^{(0)},q_2^{(0)}), \cdots, q_2^{(j)}) \trans{\mm} ((q_1^{(0)},r_2^{(0)}),\cdots, r_2^{(j')})}\]
Even-readiness at even levels different from $0$ follows from even-readiness of the automata obtained in IH, because the construction simply runs them
concurrently without interaction at these levels. For level $0$, we observe that, whenever the root reaches state $(q_1^{(0)},q_2^{(0)})$,
even-readiness of the two automata implies that each of them has removed all nodes below the root, i.e. the root will be a leaf.

\cutout{
\item {$M_1|| M_2$ (a different translation)}

$\seq{\Gamma}{M_1|| M_2}$:  $Q^{(0)} = \{0,1,2\}$,  $Q^{(j+1)}= Q_1^{(j)}+Q_2^{(j)} (0\le j\le k)$. 
The automaton starts up with $\dagger\trans{\mrun} 0$, 
the components are activated (sequentially but silently) by
\[
\frac{\dagger\trans{\mrun}_1 q_1^{(0)}}{0\trans{\eq}(1,q_1^{(0)})}\qquad
\frac{\dagger\trans{\mrun}_2 q_2^{(0)}}{1\trans{\eq}(2,q_2^{(0)})}.
\]
When the level-$0$ state is $2$,  they progress independently thanks to
\[
\frac{(q_1^{(0)}, \cdots, q_1^{(j)}) \trans{\mm}_1 (r_1^{(0)},\cdots, r_1^{(j')})}{(2, q_1^{(0)}, \cdots, q_1^{(j)}) \trans{\mm} (2, r_1^{(0)},\cdots, r_1^{(j')})}\qquad \mm\neq\mrun,\mdone
\]
\[
\frac{(q_2^{(0)}, \cdots, q_2^{(j)}) \trans{\mm}_2 (r_2^{(0)},\cdots, r_2^{(j')})}{(2, q_2^{(0)}, \cdots, q_2^{(j)}) \trans{\mm} (2, r_2^{(0)},\cdots, r_2^{(j')})}\qquad \mm\neq\mrun,\mdone,
\]
and are wound down by
\[
\frac{q_1^{(0)}\trans{\mdone}_1 \dagger}{(2,q_1^{(0)}) \trans{\ea} 2}\qquad
\frac{q_2^{(0)}\trans{\mdone}_2 \dagger}{(2,q_2^{(0)}) \trans{\ea} 2}\qquad 2\trans{\mdone} \dagger.
\]
After the translation we need to readjust pointers from moves of the form $m^{x,\off}$, where $x$ is a variable from $\Gamma$, so that they point at the initial occurrence of $\mrun$.
This can be done simply by incrementing the offset by $1$, i.e. each move of the form $m^{x,\off}$ is replaced with $m^{x,\off+1}$.
}
\cutout{\item $\seq{\Gamma}{M_1;M_2}:\comt$:  $Q^{(0)} = \{0,1,1.5, 2,3\}$,  $Q^{(j+1)}= Q_1^{(j)}+Q_2^{(j)} (0\le j\le k)$. 
The automaton starts up with $\dagger\trans{\mrun} 0$, 
the first component is then enabled by
\[
\frac{\dagger\trans{\mrun}_1 q_1^{(0)}}{0\trans{\eq}(1,q_1^{(0)})}\qquad
\frac{(q_1^{(0)}, \cdots, q_1^{(j)}) \trans{\mm}_1 (r_1^{(0)},\cdots, r_1^{(j')})\qquad \mm\neq\mrun,\mdone}{(1, q_1^{(0)}, \cdots, q_1^{(j)}) \trans{\mm} (1, r_1^{(0)},\cdots, r_1^{(j')})}.
\]
When the automaton for $M_1$ finishes, the automaton for $M_2$ is activated by
\[
\frac{q_1^{(0)}\trans{\mdone}_1 \dagger}{(1,q_1^{(0)}) \trans{\ea} 1.5}\qquad
 \frac{\dagger\trans{\mrun}_2 q_2^{(0)}}{1.5\trans{\eq}(2,q_2^{(0)})}\qquad
\frac{(q_2^{(0)}, \cdots, q_2^{(j)}) \trans{\mm}_2 (r_2^{(0)},\cdots, r_2^{(j')})\qquad \mm\neq\mrun,\mdone}{(2, q_2^{(0)}, \cdots, q_2^{(j)}) \trans{\mm} (2, r_2^{(0)},\cdots, r_2^{(j')})}.
\]
When $M_2$ is done, we wind down the computation with
$\frac{q_2^{(0)}\trans{\mdone}_2 \dagger}{(2,q_2^{(0)}) \trans{\ea} 3}$ and $3\trans{\mdone} \dagger$.

Note that the construction adds an extra level at the top of the data tree, which corresponds to the initial move and
the corresponding answer. Consequently,
we need to adjust pointers from moves that should point at the initial move. In this case, 
these will be moves of the form $m^{x,\off}$, where $m$ is a question and $x$ is a variable from $\Gamma$.
The adjustment can be performed simply by renaming, i.e. in transitions,
each \emph{question-tag} of the form $m^{(x,\off)}$  is replaced with $m^{(x,\off+1)}$.}

\item $\seq{\Gamma}{M_1;M_2}:\comt$: $Q^{(i)} = Q^{(i)}_1 + Q^{(i)}_2$ ($0\le i\le k$).
We let the automaton for $M_1$ run first (except for the final step
$\mdone$):
\[
\frac{\dagger\trans{\mrun}_1 q_1^{(0)}}{\dagger\trans{\mrun}q_1^{(0)}}\qquad
\frac{(q_1^{(0)}, \cdots, q_1^{(j)}) \trans{\mm}_1 (r_1^{(0)},\cdots, r_1^{(j')})\qquad \mm\neq\mdone}{(q_1^{(0)}, \cdots, q_1^{(j)}) \trans{\mm} (r_1^{(0)},\cdots, r_1^{(j')})}.
\]
Whenever the automaton $M_1$ can terminate, we pass control to
the automaton for $M_2$ via
\[
\frac{q_1^{(0)}\trans{\mdone}_1 \dagger \qquad
\dagger \trans{\mrun}_2 q_2^{(0)} \qquad 
q_2^{(0)} \trans{\mm}_2 (r_2^{(0)},\cdots, r_2^{(j')})\qquad \mm\neq \mrun}{
q_1^{(0)} \trans{\mm} (r_2^{(0)},\cdots, r_2^{(j')})}
\]
and allow it to continue
\[
\frac{
(q_2^{(0)}, \cdots, q_2^{(j)}) \trans{\mm}_2 (r_2^{(0)},\cdots, r_2^{(j')})\qquad \mm\neq \mrun}{(q_2^{(0)}, \cdots, q_2^{(j)}) \trans{\mm} (r_2^{(0)},\cdots, r_2^{(j')})}.
\]
Note that the construction relies crucially on even-readiness 
of the automaton for $M_1$, because we move to the automaton for $M_2$
as soon as the automaton $M_1$ arrives at a configuration with level-$0$
state $q_1^{(0)}$ such that $q_1^{(0)}\trans{\mdone}_1 \dagger$.
Thanks to even-readiness, we can conclude that the root will be the only
node in the configuration then and the transition can indeed fire, i.e. $M_1$ is really  finished.

Even-readiness of the new automaton follows from the fact that the original automata were even-ready, because we are re-using their transitions (and when the automaton for $M_2$ is active, 
that for $M_1$ has not left any nodes).

\item $\seq{\Gamma}{M_1;M_2:\beta}$

The general case is nearly the same as the $\comt$ case presented
above except that we need to keep track of what initial move has been played
in order to  perform the transition to $M_2$ correctly.
This is especially important for $\beta=\vart,\semt$, where there are multiple initial moves.
This extra information will be stored at level $0$, while the automaton 
corresponding to $M_1$ is active. Below we present a general construction
parameterized by the set $I$ of initial moves.
The set $I$ is defined as follows.
\begin{itemize}
    \item $\beta=\comt$: $I=\{\mrun\}$
    \item $\beta=\expt$: $I=\{\mq\}$
    \item $\beta=\vart$: $I=\{\mread,\mwrite{0},\cdots, \mwrite{\imax}\}$
    \item $\beta=\semt$: $I=\{\mgrb,\mrls\}$
\end{itemize}

\bigskip

States
\[\begin{array}{rcl}
Q^{(0)} &=& (Q^{(0)}_1\times I) + Q^{(0)}_2\\
Q^{(i)} &=& Q^{(i)}_1 + Q^{(i)}_2\qquad (0 < i\le k)
\end{array}\]
Transitions
\[
\frac{\dagger\trans{\mrun}_1 q_1^{(0)}\qquad x\in I}{\dagger\trans{x} (q_1^{(0)},x)}
\]
\[
\frac{(q_1^{(0)}, \cdots, q_1^{(j)}) \trans{\mm}_1 (r_1^{(0)},\cdots, r_1^{(j')})\qquad \mm\neq\mdone\qquad x\in I}{((q_1^{(0)},x), \cdots, q_1^{(j)}) \trans{\mm} ((r_1^{(0)},x),\cdots, r_1^{(j')})}.
\]
\[
\frac{q_1^{(0)}\trans{\mdone}_1 \dagger \qquad
\dagger \trans{x}_2 q_2^{(0)} \qquad 
q_2^{(0)} \trans{\mm}_2 (r_2^{(0)},\cdots, r_2^{(j')})\qquad x\in I\qquad \mm\not\in I}{
(q_1^{(0)},x) \trans{\mm} (r_2^{(0)},\cdots, r_2^{(j')})}
\]
\[
\frac{
(q_2^{(0)}, \cdots, q_2^{(j)}) \trans{\mm}_2 (r_2^{(0)},\cdots, r_2^{(j')})\qquad \mm\not\in I}{(q_2^{(0)}, \cdots, q_2^{(j)}) \trans{\mm} (r_2^{(0)},\cdots, r_2^{(j')})}
\]
None of the $M_1;M_2$ cases requires an adjustment of pointers, because the inherited indices are accurate.

\item $\seq{\Gamma}{\newin{x\aasg i}{M_1}}:\beta$.  
By~\cite{GM08},
$\sem{\seq{\Gamma}{\newin{x\aasg i}{M_1}}}$ can be obtained by
\begin{itemize}
\item  first restricting $\sem{\seq{\Gamma,x}{M_1}}$ to plays in which
the moves $\mread^x$, $\mwrite{n}^x$ are followed immediately by answers, 
\item selecting only those plays in which each answer to a $\mread^x$-move is consistent
with the preceding $\mwrite{n}^x$-move (or equal to $i$, if no preceding $\mwrite{n}^x$ was made),
\item erasing all moves related to $x$, e.g. those of the form $m^{(x,\rho)}$. 
\end{itemize}

To implement the above recipe, we will lock the automaton after each $\mread^x$- or $\mwrite{n}^x$-move, so that only an answer to that move can be played next.
Technically, this will be done by annotating the level-$0$ state with a $\mathit{lock}$-tag. 
Moreover,  at level $0$, we will also keep track of the current value of $x$. This will help us ensure that
 answers to $\mread^x$  are consistent  with the stored value and that $\mwrite{n}^x$ transitions cause the right change.
Eventually, all moves with the $x$ subscript will be replaced with $\eq,\ea$  to model hiding.

Accordingly, 
we take $Q^{(0)}=(Q_1^{(0)} + (Q_1^{(0)}\times \{\mathit{lock}\})) \times\{0,\cdots,\imax\}$ and $Q^{(j)} = Q_1^{(j)}$ ($1\le j\le k$).
First, we make sure that the state component is initialised to $i$ and that it can be arbitrary at the very end:
\[
\frac{\dagger \trans{\mm}_1 q_1^{(0)}}{\dagger\trans{\mm} (q_1^{(0)},i)}
\qquad
\frac{q_1^{(0)} \trans{\mm}_1 \dagger\qquad 0\le n\le \imax}{(q_1^{(0)},n)\trans{\mm} \dagger}.
\]
Transitions involving moves different from $\mwrite{z}^x$, $\mok^x$, $\mread^x$, $z^x$ (and the moves handled above) progress unaffected 
while preserving  $n$ (the current value of $x$ recorded at level $0$):
\[
\frac{\begin{array}{lcl}
(q_1^{(0)},\cdots, q_1^{(j)})\trans{\mm}_1 (r_1^{(0)},\cdots, r_1^{(j')}) &\quad & \mm\neq \mread^x, z^x, \mwrite{z}^x,\mok^x\\
&& 0\le j,j'\qquad 0\le n\le \imax\qquad 
\end{array}
}{((q^{(0)}_1,n),\cdots, q_1^{(j)})\trans{\mm}((r_1^{(0)},n),\cdots, r_1^{(j')})}.
\]
Transitions using $\mread^x$, $\mwrite{z}^x$ add a lock at level $0$.
The lock can be lifted only if a corresponding answer is played (because of the lock, a unique $\mwrite{z}^x$ or $\mread^x$ will be pending).
Its value must be consistent with the value of $x$  recorded at level $0$.
\[
\frac{(q_1^{(0)},\cdots, q_1^{(j)})\trans{\mwrite{z}^{(x,\rho)}}_1 (r_1^{(0)},\cdots, r_1^{(j')})\qquad 0\le n,z\le \imax}{
((q_1^{(0)},n),\cdots, q_1^{(j)})\trans{\eq} ((r_1^{(0)},\lock, z),\cdots, r_1^{(j')})}
\]
\[
\frac{(q_1^{(0)},\cdots, q_1^{(j)})\trans{\mread^{(x,\rho)}}_1 (r_1^{(0)},\cdots, r_1^{(j')})\qquad 0\le n\le \imax}{
((q_1^{(0)},n),\cdots, q_1^{(j)}) \trans{\eq} ((r_1^{(0)},\lock,n),\cdots, r_1^{(j')}))}
\]

\[
\frac{(r_1^{(0)},\cdots, r_1^{(j')})\trans{\mok^x}_1 (t_1^{(0)},\cdots, t_1^{(j)})\qquad 0\le n\le \imax}{
((r_1^{(0)},\lock, n),\cdots, r_1^{(j')})\trans{\ea} ((t_1^{(0)},n),\cdots, t_1^{(j)})}
\]

\[
\frac{(r_1^{(0)},\cdots, r_1^{(j')})\trans{n^{x}}_1 (t_1^{(0)},\cdots, t_1^{(j)})\qquad 0\le n\le \imax}{
((r_1^{(0)},\lock, n),\cdots, r_1^{(j')})\trans{\ea} ((t_1^{(0)},n),\cdots, t_1^{(j)})}
\]
As the construction involves running the original automaton and transitions corresponding to P-answers
are not modified, even-readiness follows directly from IH. For the same reason, the indices corresponding
to justification pointers need no adjustment.

 \item The case of $\newsem{x\aasg i}{M_1}$  is similar to $\newin{x\aasg i}{M_1}$. We represent the state of the semaphore using an additional bit at level $0$,
where $0$ means free and $1$ means taken.
We let $Q^{(0)}=(Q_1^{(0)} + (Q_1^{(0)}\times \{\lock\})) \times\{0,1\}$ and $Q^{(j)} = Q_1^{(j)}$ ($1\le j\le k$).
First, we make sure the bit is initialised to $i$ and can be arbitrary at the very end.
\[
\frac{\dagger \trans{\mm}_1 q_1^{(0)}\qquad i=0}{\dagger\trans{\mm} (q_1^{(0)},0)}\qquad
\frac{\dagger \trans{\mm}_1 q_1^{(0)}\qquad i>0}{\dagger\trans{\mm} (q_1^{(0)},1)}\qquad
\frac{q_1^{(0)} \trans{\mm}_1 \dagger\qquad z\in\{0,1\}}{(q_1^{(0)},z)\trans{\mm} \dagger}
\]
Transitions involving moves other than $\mrls^{(x,\rho)}$, $\mgrb^{(x,\rho)}$ and $\mok^x$
proceed as before, while preserving the state of the semaphore.
\[
\frac{
(q_1^{(0)},\cdots, q_1^{(j)})\trans{\mm}_1(r_1^{(0)},\cdots, r_1^{(j')})\qquad z\in \{0,1\} \qquad \mm\neq \mrls^{(x,\rho)}, \mgrb^{(x,\rho)}, \mok^x
}{((q_1^{(0)},z),\cdots, q_1^{(j)})\trans{\mm}((r_1^{(0)},z),\cdots, r_1^{(j')})}
\]
Transitions using $\mrls^{(x,\rho)}$, $\mgrb^{(x,\rho)}$ proceed only if they are compatible with the current state of the semaphore,
as represented by the extra bit.
At the same time, each time $\mgrb^{(x,\rho)}$ or $\mrls^{(x,\rho)}$ is played, 
we lock the automaton so that the corresponding answer can be played next.
The moves are then hidden and replaced with $\eq$ and $\ea$.
\[
\frac{(q_1^{(0)},\cdots, q_1^{(j)})\trans{\mgrb^{(x,\rho)}}_1 (r_1^{(0)},\cdots, r_1^{(j')})}{
((q_1^{(0)},0),\cdots, q_1^{(j)})\trans{\eq} 
((r_1^{(0)},\lock,1),\cdots, r_1^{(j')})}\]

\[
\frac{(q_1^{(0)},\cdots, q_1^{(j)})\trans{\mrls^{(x,\rho)}}_1 (r_1^{(0)},\cdots, r_1^{(j')})}{
((q_1^{(0)},1),\cdots, q_1^{(j)})\trans{\eq} ((r_1^{(0)},\lock,0),\cdots, r_1^{(j')})}
\]

\[
\frac{(r_1^{(0)},\cdots, r_1^{(j')})\trans{\mok^x}_1 (t_1^{(0)},\cdots, t_1^{(j)})\qquad z\in\{0,1\}}{
((r_1^{(0)},\lock,z),\cdots, r_1^{(j')})\trans{\ea} ((t_1^{(0)},z),\cdots, t_1^{(j)})}
\]

\item $\seq{\Gamma}{f M_h \cdots M_1:\comt}$ with $(f: \theta_h\rarr\cdots\rarr\theta_1\rarr\comt)\in\Gamma$. 
Note that this also covers the case $f:\comt$.
$Q^{(0)} = \{0,1,2\}$, $Q^{(1)}=\{0\}$,  $Q^{(j+2)}= Q^{(j)}$ ($0\le j\le k$).
First we add transitions corresponding to calling and returning from $f$: $\dagger \trans{\mrun} 0$,\quad  $0\trans{\mrun^f} (1,0)$,\quad 
$(1,0)\trans{\mdone^f} 2$,\quad $2\trans{\mdone} \dagger$.

In state $(1,0)$ we want to enable the environment to spawn an unbounded number of copies of each of $\seq{\Gamma}{M_u:\theta_u}$ ($1\le u\le h$).
This is done through the following rules, which embed the actions of the automata for $M_u$ while relabelling the moves.
\begin{itemize}
\item Moves from $M_u$ corresponding to $\theta_u$ obtain an additional annotation $f u$, as they are now the $u$th argument of
$f:\theta_h\rarr\cdots\rarr\theta_1\rarr\comt$.
\[
\frac{ (q_u^{(0)},\cdots, q_u^{(j)}) \trans{m^{(\vec{i},\rho)}}_u (q_u^{(0)},\cdots, q_u^{(j')})}{(1,0,q_u^{(0)},\cdots, q_u^{(j)}) \trans{m^{(f u \vec{i},\rho)}} (1,0,q_u^{(0)},\cdots, q_u^{(j')})}
\]
Note that above we mean $j,j'$ to range over $\{-1,0,\cdots, k\}$, so that $(q_u^{(0)}, \cdots, q_u^{(j)})$ and 
$(q_u^{(0)},\cdots, q_u^{(j')})$ can also stand for $\dagger$.
The pointer structure is simply inherited  in this case, but an additional pointer needs to be created to $\mrun^f$ from the old initial move for $M_u$, i.e. $m^{(\epsilon,0)}$, 
which did not have a pointer earlier.
Fortunately, because we also use $\rho=0$ in initial moves to represent the lack of a pointer, by copying $0$ now
we indicate that  the move $m^{fu,\rho}$ points one level up, i.e. at the new $\mrun^f$ move, as required.

\item The moves from $M_u$ that originate from $\Gamma$, 
i.e. moves of the form $m^{(x_v\vec{i},\rho)}$ ($1\le v \le l$), where $(x_v\in \theta_v)\in\Gamma$, 
need no relabelling except for question moves that should point at the initial move.
These moves correspond to question-tags of the form $m^{(x_v,\rho)}$.
Leaving $\rho$ unchanged in this case would mean pointing at $m^{f u,0}$, whereas we need to point at $\mrun$ instead.
To readjust such pointers, we simply add $2$ to $\rho$, and
preserve $\rho$ in other moves.
\[
\frac{ (q_u^{(0)},\cdots, q_u^{(j)}) \trans{m^{(x_v,\rho)}}_u (q_u^{(0)},\cdots, q_u^{(j')})\qquad \textrm{$m$ is a question}}{(1,0,q_u^{(0)},\cdots, q_u^{(j)}) \trans{m^{(x_v,\rho+2)}} (1,0,q_u^{(0)},\cdots, q_u^{(j')})}
\]
\[
\frac{ (q_u^{(0)},\cdots, q_u^{(j)}) \trans{m^{(x_v\vec{i},\rho)}}_u (q_u^{(0)},\cdots, q_u^{(j')})\qquad\textrm{$\vec{i}\neq \epsilon$ or ($\vec{i}=\epsilon$ and $m$ is an answer)}}{(1,0,q_u^{(0)},\cdots, q_u^{(j)}) \trans{m^{(x_v \vec{i},\rho)}} (1,0,q_u^{(0)},\cdots, q_u^{(j')})}
\]
\end{itemize}
The construction clearly preserves even-readiness at level $0$.
For other even levels, this follows directly from IH as we are simply
running copies of the automata from IH.

\item $\seq{\Gamma}{f M_h \cdots M_1:\expt}$. Here we follow the same recipe as for $\comt$ except that the initial and final transitions need to be changed from 
\[
\dagger \trans{\mrun} 0\qquad 0\trans{\mrun^f} (1,0)\qquad (1,0)\trans{\mdone^f} 2\qquad 2\trans{\mdone} \dagger
\]
to
\[
\dagger \trans{\mq} 0\qquad 0\trans{\mq^f} (1,0)\qquad (1,0)\trans{i^f} 2^i\qquad 2^i\trans{i} \dagger.
\]

\item $\seq{\Gamma}{f M_h \cdots M_1:\vart}$. Here a slightly more complicated adjustment is needed to account for the two kinds of initial moves.
Consequently, we need to distinguish two copies of $1$, i.e. $1^r$ and $1^w$.
\[
\dagger \trans{\mread} 0\qquad 0\trans{\mread^f} (1^r,0)\qquad (1^r,0)\trans{i^f} 2^i\qquad 2^i\trans{i} \dagger.
\]
\[
\dagger \trans{\mwrite{i}} 0^i\qquad 0^i\trans{\mwrite{i}^f} (1^w,0)\qquad (1^w,0)\trans{\mok} 2\qquad 2\trans{\mok} \dagger.
\]
All the other rules allowing for transitions between states of the form $(1,0,\cdots)$ need to be replicated for $(1^r,0,\cdots)$ and $(1^w,0,\cdots)$.

\item $\seq{\Gamma}{f M_h \cdots M_1:\semt}$. This is similar to the previous case.
To account for the two kinds of initial moves, we use states $1^g$ and $1^r$.
\[
\dagger \trans{\mgrb} 0^g\qquad 0^g\trans{\mgrb^f} (1^g,0)\qquad (1^g,0)\trans{\mok^f} 2^g\qquad 2^g\trans{\mok} \dagger
\]
\[
\dagger \trans{\mrls} 0^r\qquad 0^r\trans{\mrls^f} (1^r,0)\qquad (1^r,0)\trans{\mok^f} 2^r\qquad 2^r\trans{\mok} \dagger
\]
All the other rules allowing for transitions between states of the form $(1,0,\cdots)$ need to be replicated for $(1^r,0,\cdots)$ and $(1^g,0,\cdots)$.

\item $\seq{\Gamma}{\lambda x.M_1: \theta_h\rarr \cdots\rarr \theta_1\rarr \beta}$: This is simply dealt with by renaming labels in the automaton for 
$\seq{\Gamma,x:\theta_h}{M_1: \theta_{h-1}\rarr\cdots\rarr\theta_1\rarr\beta}$: 
tags of the form $m^{(x\vec{i},\rho)}$ must be renamed as $m^{(h\vec{i},\rho)}$.

\item $\seq{\Gamma}{\cond{M_1}{M_2}{M_3}:\beta}$

This case is similar to $M_1;M_2$ except that $M_1$ of type $\expt$, so the associated move is $\mq$ rather than $\mrun$.
Morever, once $M_1$ terminates, the automaton for either $M_2$ or $M_3$ must be activated, as appropriate.

\bigskip

States
\[\begin{array}{rcl}
Q^{(0)} &=& (Q^{(0)}_1\times I) + Q^{(0)}_2 +Q^{(0)}_3\\
Q^{(i)} &=& Q^{(i)}_1 + Q^{(i)}_2 + Q^{(i)}_3\qquad (0 < i\le k)
\end{array}\]
Transitions
\[
\frac{\dagger\trans{\mq}_1 q_1^{(0)}\qquad x\in I}{\dagger\trans{x} (q_1^{(0)},x)}
\]
\[
\frac{(q_1^{(0)}, \cdots, q_1^{(j)}) \trans{\mm}_1 (r_1^{(0)},\cdots, r_1^{(j')})\qquad \mm\not\in\{0,\cdots,\imax\}\qquad x\in I}{((q_1^{(0)},x), \cdots, q_1^{(j)}) \trans{\mm} ((r_1^{(0)},x),\cdots, r_1^{(j')})}.
\]
\[
\frac{q_1^{(0)}\trans{i}_1 \dagger \qquad i>0 \qquad
\dagger \trans{x}_2 q_2^{(0)} \qquad 
q_2^{(0)} \trans{\mm}_2 (r_2^{(0)},\cdots, r_2^{(j')})\qquad x\in I\qquad \mm\not\in I}{
(q_1^{(0)},x) \trans{\mm} (r_2^{(0)},\cdots, r_2^{(j')})}
\]
\[
\frac{q_1^{(0)}\trans{0}_1 \dagger \qquad
\dagger \trans{x}_3 q_3^{(0)} \qquad 
q_3^{(0)} \trans{\mm}_3 (r_3^{(0)},\cdots, r_3^{(j')})\qquad x\in I\qquad \mm\not\in I}{
(q_1^{(0)},x) \trans{\mm} (r_3^{(0)},\cdots, r_3^{(j')})}
\]
\[
\frac{
(q_2^{(0)}, \cdots, q_2^{(j)}) \trans{\mm}_2 (r_2^{(0)},\cdots, r_2^{(j')})\qquad \mm\not\in I}{(q_2^{(0)}, \cdots, q_2^{(j)}) \trans{\mm} (r_2^{(0)},\cdots, r_2^{(j')})}
\]
\[
\frac{
(q_3^{(0)}, \cdots, q_3^{(j)}) \trans{\mm}_3 (r_3^{(0)},\cdots, r_3^{(j')})\qquad \mm\not\in I}{(q_3^{(0)}, \cdots, q_3^{(j)}) \trans{\mm} (r_3^{(0)},\cdots, r_3^{(j')})}
\]
None of the cases requires an adjustment of pointers, because the inherited indices are accurate.
Even-readiness follows directly from IH.

\item $\seq{\Gamma}{\while{M_1}{M_2}:\comt}$: 

\bigskip

States
\[
Q^{(j)}= Q_1^{(j)}+Q_2^{(j)}\qquad 0\le j\le k
\]

Transitions

\[
\frac{\dagger\trans{\mq}_1 q_1^{(0)}}{\dagger\trans{\mrun} q_1^{(0)}} \qquad
\frac{q_1^{(0)}\trans{0}_1 \dagger}{ q_1^{(0)} \trans{\mdone} \dagger}
\]

\[
\frac{(q_1^{(0)}, \cdots, q_1^{(j)}) \trans{\mm}_1 (r_1^{(0)},\cdots, r_1^{(j')})\qquad \mm\not \in\{\mq,0,\cdots,\imax\}}{(q_1^{(0)}, \cdots, q_1^{(j)}) 
\trans{\mm} (r_1^{(0)},\cdots, r_1^{(j')})}
\]

\[
\frac{q_1^{(0)}\trans{i}_1 \dagger \qquad i>0\qquad\dagger \trans{\mrun}_2 q_2^{(0)}\trans{\mm}_2 (r_2^{(0)}, r_2^{(1)})\qquad \mm\neq\mdone}{
q_1^{(0)}\trans{\mm} (r_2^{(0)}, r_2^{(1)})}
\]

\[
\frac{q_1^{(0)}\trans{i}_1 \dagger\qquad i>0 \qquad\dagger \trans{\mrun}_2 q_2^{(0)}\trans{\mdone}_2 \dagger\qquad \dagger\trans{\mq}_1 r_1^{(0)} 
\trans{\mm}_1 (u_1^{(0)}, u_1^{(1)}) \qquad \mm\not\in\{0,\cdots,\imax\}}{
q_1^{(0)}\trans{\mm} (u_1^{(0)},u_1^{(1)})}
\]

\[
\frac{q_1^{(0)}\trans{i}_1 \dagger\qquad i>0 \qquad\dagger \trans{\mrun}_2 q_2^{(0)}\trans{\mdone}_2 \dagger\qquad \dagger\trans{\mq}_1 r_1^{(0)} 
\trans{0}_1 \dagger }{
q_1^{(0)}\trans{\mdone} \dagger}
\]
---
\[
\frac{(q_2^{(0)}, \cdots, q_2^{(j)}) \trans{\mm}_2 (r_2^{(0)},\cdots, r_2^{(j')})\qquad \mm\not \in\{\mrun,\mdone\}}{(q_2^{(0)}, \cdots, q_2^{(j)}) 
\trans{\mm} (r_2^{(0)},\cdots, r_2^{(j')})}
\]

\[
\frac{q_2^{(0)}\trans{\mdone}_2 \dagger \qquad\dagger \trans{\mq}_1 q_1^{(0)}\trans{\mm}_1 (r_1^{(0)}, r_1^{(1)})\qquad \mm\not\in\{0,\cdots,\imax\}}{
q_2^{(0)}\trans{\mm} (r_1^{(0)}, r_1^{(1)})}
\]

\[
\frac{q_2^{(0)}\trans{\mdone}_2 \dagger \qquad\dagger \trans{\mq}_1 q_1^{(0)}\trans{0}_1 \dagger}{
q_2^{(0)}\trans{\mdone} \dagger}
\]

\[
\frac{q_2^{(0)}\trans{\mdone}_2 \dagger\qquad\dagger \trans{\mq}_1 q_1^{(0)}\trans{i}_1 \dagger\qquad i>0\qquad \dagger\trans{\mrun}_2 r_2^{(0)} 
\trans{\mm}_2 (u_2^{(0)},u_2^{(1)})\qquad \mm\neq\mdone}{
q_2^{(0)}\trans{\mm} (u_2^{(0)}, u_2^{(1)})}
\]
As before, no pointers need adjustment, even-readiness is inherited.

\item $\seq{\Gamma}{!M_1:\expt}$

To model dereferencing, it suffices to explore the plays that start with $\mread$ in the automaton for $M_1$, the $\mread$ gets relabelled to $\mq$.

\bigskip

States
\[
Q^{(j)}= Q_1^{(j)}\qquad (0\le j\le k)
\]

Transitions
\[
\frac{\dagger\trans{\mread}_1 q_1^{(0)}}{\dagger \trans{\mq} q_1^{(0)}}\qquad
\frac{(q_1^{(0)}, \cdots, q_1^{(j)}) \trans{\mm}_1 (r_1^{(0)},\cdots, r_1^{(j')})\qquad \mm\neq\mread,\mwrite{i}{},\mok}{(q_1^{(0)}, \cdots, q_1^{(j)}) \trans{\mm} (r_1^{(0)},\cdots, r_1^{(j')})}\qquad 
\]

Note that the second rule will also handle transitions 
with the tag $i$. No pointer readjustment is needed, as the inherited pointers are accurate.
Even-readiness follows from IH.

\item $\seq{\Gamma}{M_1\aasg M_2:\comt}$

For assignment, we first direct the computation into the automaton for $M_2$ and, depending on the final move $i$, continue
in the automaton for $M_1$ as if $\mwrite{i}$ was played.
This is similar to $M_1;M_2$.

\bigskip

States
\[\begin{array}{rcl}
Q^{(i)} &=& Q^{(i)}_1 + Q^{(i)}_2\qquad (0 \le i\le k)
\end{array}\]
Transitions
\[
\frac{\dagger\trans{\mq}_2 q_2^{(0)}}{\dagger\trans{\mrun} q_2^{(0)}}
\]
\[
\frac{(q_2^{(0)}, \cdots, q_2^{(j)}) \trans{\mm}_2 (r_2^{(0)},\cdots, r_2^{(j')})\qquad \mm\not\in\{0,\cdots,\imax\} }{
(q_2^{(0)}, \cdots, q_2^{(j)}) \trans{\mm} (r_2^{(0)},\cdots, r_2^{(j')})}
\]
\[
\frac{q_2^{(0)}\trans{i}_2 \dagger \qquad i\in\{0,\cdots,\imax\}\qquad
\dagger \trans{\mwrite{i}}_1 q_1^{(0)} \qquad 
q_1^{(0)} \trans{\mm}_1 (r_1^{(0)},\cdots, r_1^{(j')})\qquad \mm\neq\mok}{
q_2^{(0)} \trans{\mm} (r_1^{(0)},\cdots, r_1^{(j')})}
\]
\[
\frac{
(q_1^{(0)}, \cdots, q_1^{(j)}) \trans{\mm}_1 (r_1^{(0)},\cdots, r_1^{(j')})\qquad \mm\not\in \{\mread,\mwrite{0},\cdots,\mwrite{\imax},0,\cdots, \imax,\mok\}}{(q_1^{(0)}, \cdots, q_1^{(j)}) \trans{\mm} (r_1^{(0)},\cdots, r_1^{(j')})}
\]
\[
\frac{q_1^{(0)}\trans{\mok}_1 \dagger}{ q_1^{(0)}\trans{\mdone} \dagger}
\]

None of the cases requires an adjustment of pointers, because the inherited indices are accurate.

\item $\seq{\Gamma}{\grb{M_1}:\comt}$: $Q^{(j)}= Q_1^{(j)}$ ($0\le j\le k$). Here we simply need to direct the automaton to perform the same transitions as $M_1$ would, starting from $\mgrb$.
At the same time, $\mgrb$ and the corresponding answer $\mok$ have to be relabelled as $\mrun$ and $\mdone$ respectively.
\[
\frac{\dagger\trans{\mgrb}_1 q_1^{(0)}}{\dagger \trans{\mrun} q_1^{(0)}}\qquad
\frac{(q_1^{(0)}, \cdots, q_1^{(j)}) \trans{\mm}_1 (r_1^{(0)},\cdots, r_1^{(j')})\qquad\mm\neq\mgrb,\mrls,\mok}{(q_1^{(0)}, \cdots, q_1^{(j)}) \trans{\mm} (r_1^{(0)},\cdots, r_1^{(j')})}\qquad
\frac{q_1^{(0)}\trans{\mok}_1 \dagger}{q_1^{(0)} \trans{\mdone} \dagger}
\]

\item $\seq{\Gamma}{\rls{M_1}:\comt}$: $Q^{(j)}= Q_1^{(j)}$ ($0\le j\le k$). Here we simply need to direct the automaton to perform the same transitions as $M_1$ would, starting from $\mrls$.
At the same time, $\mrls$ and the corresponding answer $\mok$ have to be relabelled as $\mrun$ and $\mdone$ respectively.
\[
\frac{\dagger\trans{\mrls}_1 q_1^{(0)}}{\dagger \trans{\mrun} q_1^{(0)}}\qquad
\frac{(q_1^{(0)}, \cdots, q_1^{(j)}) \trans{\mm}_1 (r_1^{(0)},\cdots, r_1^{(j')})\qquad\mm\neq\mgrb,\mrls,\mok}{(q_1^{(0)}, \cdots, q_1^{(j)}) \trans{\mm} (r_1^{(0)},\cdots, r_1^{(j')})}\qquad
\frac{q_1^{(0)}\trans{\mok}_1 \dagger}{q_1^{(0)} \trans{\mdone} \dagger}
\]

{
\item $\seq{\Gamma}{\mkvar{M_1}{M_2}:\vart}$.  Recall that $\seq{\Gamma}{M_1:\expt\rarr\comt}$. 
Because we are using terms in normal form $M_1=\lambda x^\expt.M_1'$.
For $0\le i\le\imax$, consider $N_i=M_1'[i/x]$, which is of smaller size than $M_1$. Let us apply IH to $N_i$
and write $Q_{1i}^{(j)}$ and $\trans{}_{1i}$ for components of the resultant automaton.

Let $Q^{(j)}= \sum_{i=0}^\imax Q_{1i}^{(j)} +Q_2^{(j)}$ ($0<j\le k$).
In this case, after $\mwrite{i}$ we redirect transitions to the automaton for $N_i$, and after $\mread$ - to $M_2$,
relabelling the initial and final moves as appropriate.

\[
\frac{\dagger\trans{\mrun}_{1i} q_{1i}^{(0)}\qquad 0\le i\le \imax}{\dagger \trans{\mwrite{i}} q_{1i}^{(0)}}\qquad
\frac{q_{1i}^{(0) }\trans{\mdone}_{1i} \dagger \qquad 0\le i\le \imax}{q_{1i}^{(0)} \trans{\mok} \dagger}
\]
\[
\frac{(q_1^{(0)}, \cdots, q_1^{(j)}) \trans{\mm}_{1i} (r_1^{(0)},\cdots, r_1^{(j')})\qquad\mm\neq\mrun,\mdone}{(q_1^{(0)}, \cdots, q_1^{(j)}) \trans{\mm} (r_1^{(0)},\cdots, r_1^{(j')})}
\]

\[
\frac{\dagger\trans{\mq}_2 q_2^{(0)}}{\dagger \trans{\mread} q_2^{(0)}}\qquad
\frac{(q_2^{(0)}, \cdots, q_2^{(j)}) \trans{\mm}_2 (r_2^{(0)},\cdots, r_2^{(j')})\qquad\mm\neq\mq,i}{(q_2^{(0)}, \cdots, q_2^{(j)}) \trans{\mm} (r_2^{(0)},\cdots, r_2^{(j')})}\qquad
\frac{q_2^{(0)}\trans{i}_2 \dagger}{q_2^{(0)} \trans{i} \dagger}
\]

\item $\seq{\Gamma}{\mksem{M_1}{M_2}:\semt}$. $Q^{(j)}= Q_1^{(j)}+Q_2^{(j)}$ ($0\le j\le k$). In this case, after $\mgrb$ we redirect transitions to the automaton for $M_1$, and after $\mrls$ - to $M_2$.

\[
\frac{\dagger\trans{\mrun}_1 q_1^{(0)}}{\dagger \trans{\mgrb} q_1^{(0)}}\qquad
\frac{(q_1^{(0)}, \cdots, q_1^{(j)}) \trans{\mm}_1 (r_1^{(0)},\cdots, r_1^{(j')})\qquad\mm\neq\mrun,\mdone}{(q_1^{(0)}, \cdots, q_1^{(j)}) \trans{\mm} (r_1^{(0)},\cdots, r_1^{(j')})}\qquad
\frac{q_1^{(0)}\trans{\mdone}_1 \dagger}{q_1^{(0)} \trans{\mok} \dagger}
\]

\[
\frac{\dagger\trans{\mrun}_2 q_2^{(0)}}{\dagger \trans{\mrls} q_2^{(0)}}\qquad
\frac{(q_2^{(0)}, \cdots, q_2^{(j)}) \trans{\mm}_2 (r_2^{(0)},\cdots, r_2^{(j')})\qquad\mm\neq\mrun,\mdone}{(q_2^{(0)}, \cdots, q_2^{(j)}) \trans{\mm} (r_2^{(0)},\cdots, r_2^{(j')})}\qquad
\frac{q_2^{(0)}\trans{\mdone}_2 \dagger}{q_2^{(0)} \trans{\mok} \dagger}
\]
}
\end{itemize}
\subsection{Example}\label{apx:example}

Here is a worked example of Theorem \ref{thm:trans} for the term $t = $
\[
\seq{f:\comt\rarr\comt}{\newin{x\aasg 0}{(f (x\aasg 1 || x\aasg 13)\, ||\, \cond{!x=13}{\,\skipcom\,}{\,\divcom}})}
\]
\cutout{
We start with the following statement:

\begin{verbatim}
newvar x=0 in 
     f( write(x:=1) || write(x:=13) ) 
  || if read(x) = 13 then skip else div
\end{verbatim}
}

We will show some simple subterms of this term, and then how to combine them using $||$ and introduce $\textbf{newvar}$. We will first construct the sub-automaton representing the following subterm:
\[
f (x\aasg 1 || x\aasg 13)
\]
\cutout{
\begin{verbatim}
    f(write (x:=1) || write(x:=13))
\end{verbatim}
}

For convenience we will call this subterm $w$ as in ``write''. The states for $\mathcal{A}(w)$ are as follows:
\begin{align*}
Q_w^{(0)} &= \{0_w, 1_w, 2_w\} & \qquad Q_w^{(1)} &= \{0_w\} \\
Q_w^{(2)} &= \{0_1,1_1,2_1\} \times \{0_{13},1_{13},2_{13}\} & \qquad Q_w^{(3)} &= \{0_1,0_{13}\}
\end{align*}

Note: in the standard construction, the subterms will not be annotated with the subscripts given. We show them here to emphasise that the union operation performed by combining branches is the \emph{disjoint} union of the states from each side.

The transitions for $\mathcal{A}(w)$ are as follows. When we write transitions here, places where values are symbolic (e.g. $u$ or $v$) represent one transition for every possible value that may appear in those places.

%\adnote{Not sure how to nicely align the below - happy for it to overflow the margins but it would be nice to center it all!}

\[
\dagger \trans{\mathsf{run}} 0_w 
\qquad 
2_w \trans{\mathsf{done}} \dagger
\]
\[
0_w \trans{\mathsf{run}^{(f,0)}} (1_w, 0_w) 
\qquad 
(1_w, 0_w) \trans{\mathsf{done}^{(f,0)}} 2_w
\]
\[
(1_w, 0_w) \trans{\mathsf{run}^{(f1,0)}} (1_w, 0_w, (0_1, 0_{13}))
\qquad
(1_w, 0_w, (2_1, 2_{13})) \trans{\mathsf{done}^{(f1,0)}} (1_w, 0_w)
\]
\[
(1_w, 0_w, (0_1, v)) \trans{\mwrite{1}^{(x,2)}} (1_w, 0_w, (1_1, v), 0_1)
\qquad
(1_w, 0_w, (1_1, v), 0_1) \trans{\mathsf{ok}^{(x,0)}} (1_w, 0_w, (2_1, v))
\]
\[
(1_w, 0_w, (u, 0_{13})) \trans{\mwrite{1}^{(x,2)}} (1_w, 0_w, (u, 1_{13}), 0_{13})
\qquad
(1_w, 0_w, (u, 1_{13}), 0_{13}) \trans{\mok^{(x,0)}} (1_w, 0_w, (u, 2_{13}))
\]

where $u \in \{0_1,1_1,2_1\}$ and $v\in \{0_{13},1_{13},2_{13}\}$.
\vspace{3mm}

We now do the same for the following term, $r$ (for ``read''):
\[
\cond{!x=13}{\,\skipcom\,}{\,\divcom}
\]
\cutout{
\begin{verbatim}
    if read(x) = 13 then skip else div
\end{verbatim}
}

The states for $\mathcal{A}(r)$ are simpler, as this term is shallow.
\[
Q_r^{(0)} = \{ 0_r, 1_r, 2_r^0, \cdots, 2_r^{\imax} \}
\qquad
Q_r^{(1)} = \{ 0_r \}
\]

The transitions for $\mathcal{A}(r)$ are as follows.
\[
\dagger \trans{\mathsf{run}} 0_r 
\qquad
2_r^{13} \trans{\mdone} \dagger
\]
\[
0_r \trans{\mread^{(x,0)}} (1_r, 0_r)
\qquad
(1_r,0_r) \trans{z^{(x,0)}} 2_r^{z}
\]
where $z\in\{0,\cdots,\imax\}$. 
Observe that only reaching state $2_r^{13}$ (hence, reading a value $13$ from $x$) will allow this automaton to terminate.

\vspace{3mm}

Combining these two automata is relatively simple. We will first apply the procedure for parallel composition ($||$), and then apply the $\textbf{newvar}$ context. See Theorem \ref{thm:trans} for the precise workings of these steps. The final automaton $\mathcal{A}(t)$ for our term $t$ is as follows.

\vspace{1mm}

States:
\begin{align*}
Q^{(0)} = (Q^{\prime (0)}+ Q^{\prime (0)} \times \{lock\}) \times X\\
\text{ where } Q^{\prime (0)} = Q_w^{(0)} \times Q_r^{(0)} \text{ and } X = \{0,\cdots,\imax\}
\end{align*}
\[
Q^{(1)} = Q_r^{(1)} + Q_w^{(1)} \qquad \qquad
Q^{(2)} = Q_w^{(2)} \qquad \qquad
Q^{(3)} = Q_w^{(3)}
\]

\vspace{1mm}

Transitions:
\[
\dagger \trans{\mathsf{run}} ((0_r, 0_w), 0)
\qquad
((2_w, 2_r^{13}), n) \trans{\mathsf{done}} \dagger
\]
\[
((0_w, b), n) \trans{\mathsf{run}^{(f,0)}} (((1_w, b), n), 0_w)
\qquad
(((1_w, b), n), 0_w) \trans{\mathsf{done}^{(f,0)}} ((2_w, b), n)
\]
\begin{align*}
(((1_w, b),n), 0_w) 
& \trans{\mathsf{run}^{(f1,0)}} (((1_w, b), n), 0_w, (0_1, 0_{13}))
\\
(((1_w, b), n), 0_w, (2_1, 2_{13})) 
& \trans{\mathsf{done}^{(f1,0)}} (((1_w, b),n), 0_w)
\\\\
(((1_w, b), n), 0_w, (0_1, v)) 
& \trans{\eq} (((1_w, b), lock, 1), 0_w, (1_1, v), 0_1)
\\
(((1_w, b), lock, n), 0_w, (1_1, v), 0_1) 
& \trans{\ea} (((1_w, b), n), 0_w, (2_1, v))
\\
(((1_w, b), n), 0_w, (u, 0_{13})) 
& \trans{\eq} (((1_w, b), lock, 13), 0_w, (u, 1_{13}), 0_{13})
\\
(((1_w, b), lock, n), 0_w, (u, 1_{13}), 0_{13}) 
& \trans{\ea} (((1_w, b), n), 0_w, (u, 2_{13}))
\\\\
((a, 0_r), n) 
& \trans{\eq} (((a, 1_r), lock, n), 0_r)
\\
(((a, 1_r), lock, n),0_r) 
& \trans{\ea} ((a, 2_r^{n}), n)
\end{align*}

where $u \in \{0_1,1_1,2_1\}$, $v\in \{0_{13},1_{13},2_{13}\}$, $a \in \{ 0_w, 1_w, 2_w \}$ and $b \in \{ 0_r, 1_r, 2_r \}$.

% copy with explicit reads and writes
\cutout{
Transitions:
\[
\dagger \trans{\mathsf{run}} ((0_r, 0_w), 0)
\qquad
((2_w, 2_r^{13}), n) \trans{\mathsf{done}} \dagger
\]
\[
((0_w, b), n) \trans{\mathsf{run}^{(f,0)}} (((1_w, b), n), 0_w)
\qquad
(((1_w, b), n), 0_w) \trans{\mathsf{done}^{(f,0)}} ((2_w, b), n)
\]
\begin{align*}
(((1_w, b),n), 0_w) 
& \trans{\mathsf{run}^{(f1,0)}} (((1_w, b), n), 0_w, (0_1, 0_{13}))
\\
(((1_w, b), n), 0_w, (2_1, 2_{13})) 
& \trans{\mathsf{done}^{(f1,0)}} (((1_w, b),n), 0_w)
\\\\
(((1_w, b), n), 0_w, (0_1, v)) 
& \trans{\mwrite{1}^{(x,2)}} (((1_w, b), lock, 1), 0_w, (1_1, v), 0_1)
\\
(((1_w, b), lock, n), 0_w, (1_1, v), 0_1) 
& \trans{\mathsf{ok}^{(x,0)}} (((1_w, b), n), 0_w, (2_1, v))
\\
(((1_w, b), n), 0_w, (u, 0_{13})) 
& \trans{\mwrite{13}^{(x,2)}} (((1_w, b), lock, 13), 0_w, (u, 1_{13}), 0_{13})
\\
(((1_w, b), lock, n), 0_w, (u, 1_{13}), 0_{13}) 
& \trans{\mok^{(x,0)}} (((1_w, b), n), 0_w, (u, 2_{13}))
\\\\
((a, 0_r), n) 
& \trans{\mread^{(x,0)}} (((a, 1_r), lock, n), 0_r)
\\
(((a, 1_r), lock, n),0_r) 
& \trans{n^{(x,0)}} ((a, 2_r^{n}), n)
\end{align*}
}
\section{Additional material for Section~\ref{sec:tosla}}

\subsection{Proof of Theorem~\ref{thm:trans2}}

We start with a technical lemma that identifies
the level of moves corresponding to free variables
of type $\vart$ and $\semt$.
Given $x:\vart$, moves of the form $\mwrite{i}^{(x,\rho)}$
and $\mread^{(x,\rho)}$ (by P) will be referred to 
as the associated questions, while
$\mok^{(x,\rho)}$ and $i^{(x,\rho)}$ (by O) will be called
the associated answers.
We use analogous terminology for $x:\semt$: the associated questions
are $\mgrb^{(x,\rho)}$ and $\mrls^{(x,\rho)}$, while the associated answer is $\mok^{(x,\rho)}$.

\begin{lemma}\label{lem:ade}
Given a $\fica$-term $\seq{\Gamma}{M:\theta}$ in 
$\beta\eta$-normal form, let $\Aut_M$ be the automaton produced 
by Theorem~\ref{thm:trans}.
For any $x:\vart$ or $x:\semt$ such that $\ade{x}{M}=i$,
the transitions corresponding to 
the moves associated with $x$ add/remove leaves 
at odd levels $1,3,\cdots, 2i-1$.
\end{lemma}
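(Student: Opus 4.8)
The plan is a structural induction on the $\beta\eta$-normal form of $M$, mirroring the case analysis by which Theorem~\ref{thm:trans} builds $\Aut_M$. Call a move \emph{associated with $x$} if it is $\mread^{(x,\rho)},\mwrite{i}^{(x,\rho)}$ or one of their answers $i^{(x,\rho)},\mok^{(x,\rho)}$ (when $x:\vart$), resp.\ $\mgrb^{(x,\rho)},\mrls^{(x,\rho)},\mok^{(x,\rho)}$ (when $x:\semt$), for any $\rho$. The first, easy, observation is that every associated move is a P-question or an O-answer: because $x\in\Gamma$, its base arena lies to the left of the top-level arrow and so its polarities are reversed, turning the O-questions $\mread,\mwrite{i},\mgrb,\mrls$ of $\sem{\vart},\sem{\semt}$ into P-questions and their P-answers into O-answers. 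By the structural properties of $\Aut_M$ recorded just before Theorem~\ref{thm:trans} --- P-moves add leaves at odd levels and remove leaves at even levels, O-moves dually --- it is immediate that any transition of $\Aut_M$ labelled by an associated move \emph{acts on} (adds, if a question; removes, if an answer) a leaf at an \emph{odd} level. So it remains only to bound that level by $2\,\ade{x}{M}-1$, and I will do this by induction, the guiding idea being that an occurrence of $x$ at applicative depth $j$ is handled at level exactly $2j-1$.

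In the base case $M=x$ (a lone variable of type $\vart$ or $\semt$, produced by the $f\,M_h\cdots M_1$ clause with $h=0$, $f=x$), the only transitions labelled by an associated move are the copycat ones reading $\mread^{x}/i^{x}$ and $\mwrite{i}^{x}/\mok^{x}$ (resp.\ $\mgrb^{x}/\mok^{x}$ and $\mrls^{x}/\mok^{x}$), each of which adds or removes a leaf at level $1=2\,\ade{x}{x}-1$. The remaining base cases ($\skipcom,\divcom,i$) contain no variables, so there is nothing to prove.

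For the inductive step, the only construction in Theorem~\ref{thm:trans} that changes levels is application $\seq{\Gamma}{f\,M_h\cdots M_1:\beta}$ (for any $\beta\in\{\comt,\expt,\vart,\semt\}$): it runs copies of the subterm automata with all of their levels raised by $2$ (formally $Q^{(j+2)}=\sum_u Q_u^{(j)}$). As $x$ has base type, it cannot be the function-typed head $f$ (when $h\ge1$), so every occurrence of $x$ lies inside some $M_u$; and the only relabellings that clause applies to those occurrences are the $fu$-tagging of $\theta_u$-moves and the increment of $\rho$ on $\Gamma$-questions, neither of which affects which transitions are associated moves of $x$ or the level of the leaf they act on. Hence, by the induction hypothesis for $M_u$, each associated transition of $\Aut_M$ acts at an odd level $\le (2\,\ade{x}{M_u}-1)+2 = 2(\ade{x}{M_u}+1)-1 \le 2\bigl(1+\max_v\ade{x}{M_v}\bigr)-1 = 2\,\ade{x}{M}-1$, the last equality being the table entry $\ade{x}{f\,M_h\cdots M_1}=1+\max_u\ade{x}{M_u}$. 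Every other construction --- $\arop{M_1}$, $M_1\parc M_2$, $M_1;M_2$, $\cond{M_1}{M_2}{M_3}$, $\while{M_1}{M_2}$, $!M_1$, $M_1\aasg M_2$, $\grb{M_1}$, $\rls{M_1}$, $\lambda y.M_1$ and $\mathbf{newvar}/\mathbf{newsem}\,y$ with $y\ne x$, together with $\mkvar{M_1}{M_2}$ and $\mksem{M_1}{M_2}$ --- leaves levels $\ge1$ intact: inspection of the construction shows that $Q^{(j)}$ for $j\ge1$ is a disjoint sum of the $Q_u^{(j)}$, that level $0$ is only extended by unrelated bookkeeping (a product component, a lock bit, a stored value), that associated transitions are transcribed without altering leaf levels, and that only bound variables $y\ne x$ get renamed. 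Since the corresponding entry of the applicative-depth table is the identity or a maximum, $\ade{x}{M}\ge\ade{x}{M_u}$ for every subterm $M_u$ containing an occurrence of $x$, and the hypothesis transfers directly.

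The whole argument is bookkeeping; the one place that needs genuine care is the application case, where one must check that the uniform $+2$ shift of levels is matched exactly by the $+1$ in the definition of applicative depth, and that the $fu$-tagging and $\rho$-increments performed there neither create nor destroy associated transitions nor alter leaf levels. Granted that, the claimed statement --- associated moves act on leaves at odd levels $1,3,\dots,2\,\ade{x}{M}-1$ --- follows by taking the maximum over occurrences.
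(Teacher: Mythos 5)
Your proposal is correct and follows essentially the same route as the paper's own proof: induction on the normal form, observing that all constructions except application leave the levels of the inherited transitions unchanged (while the applicative-depth table takes maxima there), and that the application case shifts levels by exactly $2$, matching the $+1$ in the definition of $\ade{x}{\cdot}$, with $f\equiv x$ giving the base level $1$. Your extra preliminary observation that associated moves are P-questions/O-answers and hence necessarily act at odd levels is a nice explicit justification of a point the paper leaves implicit.
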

\begin{proof}
We reason by induction on $M$, inspecting each construction in turn. 

For $M\equiv \skipcom,\divcom,i$, the result holds vacuously,
because there are no moves associated with $x$ ($i=0$).

In the following cases, $\ade{x}{M}$ is calculated
by taking the maximum of $\ade{x}{M'}$ for subterms
and the automata constructions
never modify the level of transitions in automata
obtained by IH.
Consequently, the lemma can be established 
by appeal to IH:
$M_1||M_2$, $M_1;M_2$, $\cond{M_1}{M_2}{M_3}$,
$\while{M_1}{M_2}$, $!M_1$, $M_1\aasg M_2$,
$\grb{M_1}$, $\rls{M_1}$, $\newin{y}{M_1}$,
$\newsem{y}{M_1}$.

The remaining case is $M\equiv f M_h\cdots M_1$.
\begin{itemize}
\item Note that this case also covers $f\equiv x$, in which
case $\ade{x}{M}=1$ and transitions associated with $x$
involved leaves at level $2\cdot 1-1=1$, as required.
\item If $f\not\equiv x$ then $\ade{x}{M}=1+\max(\ade{x}{M_1},\cdots, \ade{x}{M_h})$.
In this case, the automata construction lowers  transitions associated with $x$ by exactly two levels, 
so by IH, they will appear at levels $1+2,\cdots, (2i-1)+2$. Note that $(2i-1)+2 = 2(i+1)-1$, i.e. the lemma holds.
\end{itemize}
\end{proof}

Observe that subterms of $\lfica$ terms are in $\lfica$,
i.e. we can reason by structural induction.

\begin{lemma}
Suppose $\seq{\Gamma}{M:\theta}$ is from $\lfica$.
The automaton $\clg{A}_M$ obtained from the translation in Theorem~\ref{thm:trans} 
is presentable as a $\sla$.
\end{lemma}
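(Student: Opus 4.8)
The plan is to argue by structural induction on the $\beta\eta$-normal form (this is legitimate because every subterm of an $\lfica$ term is again in $\lfica$), reusing the case analysis of Theorem~\ref{thm:trans} and checking that each construction preserves two invariants: (i) the transition function admits the compact $\sla$-presentation — a transition adding or removing a leaf at an odd level $2i+1$ reads and rewrites only levels $2i,2i-1,2i-2$ (and $2i+1$ itself), and one at an even level $2i$ only levels $2i-1,2i-2$ (and $2i$); and (ii) the automaton is bounded at every even level. The essential external input is Lemma~\ref{lem:ade}: if $\ade{x}{M}=i$ for some $x:\vart$ or $x:\semt$, then all transitions carrying the moves associated with $x$ add or remove leaves exactly at the odd levels $1,3,\dots,2i-1$. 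The base cases $\skipcom$, $\divcom$, $i$ give single-node automata that are vacuously local and bounded, and the relabelling constructions $\lambda x.M_1$, $\deref{M_1}$, $\arop{M_1}$, $\grb{M_1}$, $\rls{M_1}$ merely rename tags in the automaton for $M_1$ without touching its tree structure, so both invariants transfer directly.

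Next I would treat $M_1\parc M_2$, where the two root states are stored as a pair at level $0$ while the automata act on disjoint subtrees with no communication; each inherited transition is reused with the dormant component's root carried along, which keeps the $\sla$-shape, and the root creates at most $b_1+b_2$ children. The family $M_1;M_2$, $\cond{M_1}{M_2}{M_3}$, $M_1\aasg M_2$, $\mkvar{M_1}{M_2}$, $\mksem{M_1}{M_2}$ runs one subterm automaton to completion and then — using even-readiness from Theorem~\ref{thm:trans}, so that the root is a leaf at the switch — passes control to another; the inherited transitions stay $\sla$-shaped by induction, and the single bridging transition either removes the current root ($\delta^{(0)}_{\A}$) or, since the target's first non-initial move from a one-node tree adds a level-$1$ leaf, is a $\delta^{(1)}$ transition, which in $\sla$-form may read and rewrite the root; when $\beta\ne\comt$ one additionally pairs the root with the finite set of initial moves, a bounded product at level $0$. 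For $f M_h\cdots M_1$ the construction creates one level-$0$ node (so level $0$ stays bounded) and one level-$1$ node, under which unboundedly many copies of each automaton for $M_u$ are run with all their levels shifted down by two; the unbounded branching sits at the odd level $1$, the deeper even-level bounds are inherited from the $M_u$, and the copies execute while levels $0,1$ are frozen at $(1,0)$, so their (shifted) $\sla$-shaped transitions remain inside the allowed window, with the cosmetic tag relabellings and pointer readjustments leaving levels untouched.

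The crux — and the only case that uses locality of the term — is $\newin{x\aasg i}{M_1}$ (and, symmetrically, $\newsem{x\aasg i}{M_1}$): since the term is in $\lfica$, so is $\seq{\Gamma,x:\vart}{M_1}$ and $\ade{x}{M_1}\le 2$. The construction of Theorem~\ref{thm:trans} augments the root with the current value of $x$ and a lock bit, leaves all non-$x$ transitions essentially alone (deep ones never mentioned the root; shallow ones are only tensored at level $0$ with the bounded set $\{0,\dots,\imax\}$ and the lock bit, preserving the shape), and replaces each $x$-move with an $\eq,\ea$ transition that must read and update the root. By Lemma~\ref{lem:ade}, $\ade{x}{M_1}\le 2$ forces these $x$-transitions to add or remove leaves only at levels $1$ and $3$, and an $\sla$ transition at level $1$ (resp.\ $3$) can read and rewrite level $0$ (resp.\ levels $0,1,2$), so the root is within reach and the $\eq,\ea$ transitions can be given in $\sla$-form; the branching bound is unchanged since the new runs are a subset of those for $M_1$. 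The hard part is precisely this verification: one must pin down that a single application of a free function variable costs two levels of nesting, hence that $\ade{x}{M_1}\le 2$ is exactly the condition placing the $x$-transitions at level $\le 3$, and that level $3$ is the deepest odd level whose $\sla$ window still contains the root — with $\ade{x}{M_1}=3$ the $x$-transitions would sit at level $5$, see only levels $2,3,4$, and storing $x$'s value at the root would be impossible. For boundedness the one thing that could go wrong is a $\while{M_1}{M_2}$ subterm, whose automaton revisits the bodies and lets level $0$ spawn unboundedly many children; this is exactly why $\lfica$ forbids $\mathbf{while}$, so that case never arises.
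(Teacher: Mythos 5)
Your proposal is correct and follows essentially the same route as the paper's proof: the same structural induction on $\beta\eta$-normal forms, the same reliance on Lemma~\ref{lem:ade} to place the $x$-transitions at odd levels $1,3,\dots,2\,\ade{x}{M}-1$, the same treatment of the crux case $\newin{x\aasg i}{M_1}$ via $\ade{x}{M_1}\le 2$ putting those transitions at levels $1$ and $3$ (both within reach of the root in an $\sla$), and the same accounting of branching bounds for $\parc$, sequencing, and application. The extra observations you add (the explicit failure at $\ade{x}{M_1}=3$, and the role of even-readiness at the sequencing switch) are consistent with the paper and only sharpen the argument.
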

\begin{proof}
\begin{description}
\item In many cases, the construction merely relabels the given automaton.
Then a simple appeal to the inductive hypothesis will suffice.
The relevant cases are: $!M_1, \arop{M_1}, \rls{M_1}, \grb{M_1}, \lambda x. M_1$.

\item[$M\equiv M_1 || M_2$] 
The case of parallel composition involves running copies of $M_1$ and $M_2$ in parallel without communication,
with their root states stored as a pair at level $0$. Note, though, that each of the automata transitions independently
of the state of the  other automaton, which means that, if the automata $M_1$ and $M_2$ are $\sla$, 
so will be the automaton for $M_1 || M_2$. 
The branching bound after the construction is the sum of the two bounds  for $M_1$ and $M_2$.

\item [$M\equiv M_1;M_2$] The construction schedules the automaton for $M_1$ first and there is a transition to (a disjoint copy of) the second one only after the configuration of the first automaton consists of the root only. Otherwise the automata never communicate. As the transition from the first to the second automaton happens at the root, it can  be captured as a $\sla$ transition. Consequently, 
if the automata for $M_1,M_2$ are $\sla$, so is the automaton for $M$. Here the branching bound
is simply the maximum of the bounds for $M_1$ and $M_2$.

The same argument applies to $\cond{M_1}{M_2}{M_3}$, $M_1\aasg M_2$.

\item[$M\equiv \newin{x\aasg i}{M_1}$] 

Transitions not associated with $x$ are embedded into the automaton 
for $M$ except that at level $0$, the new automaton keeps track of the current value stored in $x$.  Because these transitions proceed uniformly without ever depending on the value stored at the root, this is consistent with $\sla$ behaviour.

For transitions associated with $x$, we note that,
because $M$ is from $\lfica$, we have $\ade{x}{M_1}\le 2$. 
By Lemma~\ref{lem:ade}, this means that the transitions
related to $x$ correspond to creating/removing leaves at either level $1$ or $3$.
These transitions need to read/write the root but, because they concern nodes at level $0$ or $3$,
they will be consistent with the definition of a $\sla$.
All other transitions (not labelled by $x$) proceed as in $M$ and
need not consult the additional information about the current state
stored in the root (the extra information is simply propagated). Consequently, if $M$ is represented by a $\sla$ then the interpretation of $\newin{x\aasg i}{M}$ is also a $\sla$. The construction does not affect the branching bound, because
the resultant runs can be viewed as a subset of runs of the automaton for $M$, i.e. those in which reads and writes are related.

The case of $M\equiv \newsem{x\aasg i}{M_1}$ is analogous.

\item[$M\equiv f M_h\cdots M_1$]

For $f M_h\cdots M_1$, we observe that the construction first 
creates two nodes at levels $0$ and $1$, and the node at level $1$ is used
to run an unbounded number of copies of (the automaton for) $M_i$.
The copies do not need access to the states stored at levels $0$ and $1$, 
because they are never modified when the copies are running.
Consequently, if each $M_i$ can be translated into a $\sla$,
the outcome of the construction in Theorem~\ref{thm:trans} is also a $\sla$. The new branching bound is the maximum over bounds from $M_1,\cdots, M_h$, because at even levels children are produced as in $M_i$ and 
level $0$ produces only $1$ child.
\end{description}
\end{proof}
% !TEX root =  main.tex

\section{Additional material for Section~\ref{sec:tofica}}
\label{apx:tofica}

\paragraph{Word representation}

Let $\Aut=\abra{\Sigma,k,Q,\delta}$ be a leafy automaton. 
We shall assume that $\Sigma,Q\subseteq\{0,\cdots,\imax\}$ so that we can
encode the alphabet and states using type $\expt$.
First we discuss how to assign a play $\play{w}$ to a trace $w$ of $\Aut$.
The basic idea is to simulate each transition with two moves, by $O$ and $P$ respectively.
The child-parent links in $\D$ will be represented by justification pointers.
\begin{itemize}
\item Suppose $w=w' (t,d)$ with $t\in\Sigma_\Q$.
We will  represent $(t,d)$ by a segment of the form $\rnode{A}{{\mq}^{{\vec{i}}}}\quad \rnode{B}{\mrun}^{t\vec{i}}\justg{B}{A}$.
If $w'=\epsilon$, we let $\play{w}=\rnode{A}{\mq} \,\,\rnode{B}{\mrun^{t}}\justf{B}{A}$, i.e. $\vec{i}=\epsilon$.
If $w'\neq\epsilon$ then, because $w$ is a trace, $w'$ must contain a unique
occurrence of $(t',\pred{d})$ for some $t'\in\Sigma_\Q$.
Then, if $(t',\pred{d})$ was represented by $\rnode{A}{\mq^{\vec{i'}}} \rnode{B}{\mrun}^{t'\vec{i'}}\justg{B}{A}$ in $\play{w'}$, 
we let $\play{w}=\rnode{C}{\play{w'}}\,\,\,\rnode{K}{\rnode{D}{{\mq}}^{1 t'{\vec{i'}}}}\justn{D}{C}{100}\,\,\, \rnode{B}{\mrun}^{t 1 t'\vec{i'}}\justn{B}{K}{100}$,
where  $\mq^{1 t'\vec{i'}}$ points at $\mrun^{t' \vec{i'}}$.
\item Suppose $w=w'(t,d)$ with $t\in\Sigma_{\A}$.
Because $w$ is a trace, $w'$ must contain a unique occurrence $(t',d)$ for some $t'\in\Sigma_\Q$.
If $(t',d)$ is represented by the segment $\rnode{A}{\mq^{\vec{i}}} \rnode{B}{\mrun^{t'\vec{i}}}\justf{B}{A}$
in $\play{w'}$,
we set $\play{w}=\play{w'} \,\,\mdone^{t'\vec{i}}\,\, t^{\vec{i}}$, where the two answer-moves are justified by 
$\mrun^{t'\vec{i}}$ and ${\mq^{\vec{i}}}$ respectively. 
Because $w$ is a trace, we can be sure that after processing $w'$, $\Aut$ enters a configuration in which $d$ is a leaf.
Thus, the two answers will satisfy the game-semantic $\wait$ condition, and $\play{w}$ will be well-defined.
\end{itemize}
The $\fork$ condition is  satisfied for $\play{w}$, because reading an answer removes the corresponding data value from the configuration and, hence, it cannot be used as a justifier afterwards.
In what follows, we write $\theta^n\rarr\beta$ for $\underbrace{\theta \rarr\cdots\rarr\theta}_{n}\rarr\beta$ for $n\in\N$.
The lemma below identifies the types that correspond to our encoding of traces.
\begin{lemma}
Let $N=\imax+1$. Suppose $\Aut$ is a $k$-$\la$ and $w\in\trace{\Aut}$.
Then $\play{w}$ is a play in $\sem{\theta_k}$,
where $\theta_0=\comt^N\rarr\expt$ and $\theta_{i+1}=(\theta_i\rarr\comt)^N\rarr\expt$ ($i\ge 0$).
\end{lemma}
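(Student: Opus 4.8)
The plan is an induction on the length of $w$ that follows the recursive definition of $\play{w}$, run against a precise description of the arena $\sem{\theta_k}$. So the first step is to record the shape of $\sem{\theta_k}$: by an easy induction on $k$, unfolding the product and arrow constructions --- using that $\sem{\theta_{k+1}}=(\sem{\theta_k}\Rightarrow\sem{\comt})^N\Rightarrow\sem{\expt}$ embeds $N$ copies of $\sem{\theta_k}$ two levels below the root, with polarities restored by the double flip --- one shows that $\sem{\theta_k}$ is an enabling tree whose questions alternate between O-questions tagged $\mq$ at even arena-levels $0,2,\dots,2k$ and P-questions tagged $\mrun$ at odd arena-levels $1,3,\dots,2k+1$; the unique initial move is the $\mq$ at level $0$; each $\mq$ at a level $2j$ enables $N=\imax+1$ distinct $\mrun$-copies at level $2j+1$ --- enough to host every question-tag, recalling $\Sigma_\Q\subseteq\{0,\dots,\imax\}$ --- together with its value-answers $i$ ($0\le i\le\imax$); each $\mrun$ at a level $2j+1$ with $j<k$ enables a single $\mq$ at level $2j+2$ together with its answer $\mdone$, whereas an $\mrun$ at level $2k+1$ enables only $\mdone$.

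The induction on $w$ then maintains the invariant that $\play{w}$ is a justified sequence over $\sem{\theta_k}$ in which the data values occurring in $w$ are in bijection with the question-pairs of $\play{w}$: a level-$j$ data value $d$ introduced by a question-tag $t$ corresponds to $\mq^{\vec{i}_d}$ at arena-level $2j$ and, later, $\mrun^{t\vec{i}_d}$ at arena-level $2j+1$ justified by $\mq^{\vec{i}_d}$, where $\vec{i}_d=\epsilon$ and $\mq^{\vec{i}_d}$ is the initial move if $d$ is a root, and $\vec{i}_d=1\,t'\,\vec{i}_{\pred{d}}$ with $\mq^{\vec{i}_d}$ justified by $\mrun^{t'\vec{i}_{\pred{d}}}$ if $\pred{d}$ was introduced by tag $t'$; moreover $d$ has been removed from the current configuration exactly when both moves of its pair are answered. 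Every structural requirement is then read off the arena shape. When $\Aut$ reads a question $(t,d)$ at level $j$, the transition requires $\pred{d}$ to be present (for $j>0$), so by the invariant $\mrun^{t'\vec{i}_{\pred{d}}}$ is still pending and sits at arena-level $2j-1$; since $\Aut$ is a $k$-$\la$ we have $j\le k$, hence $2j-1<2k+1$, so this $\mrun$-move really does have a $\mq$-child at level $2j$ in $\sem{\theta_k}$ --- which is exactly the new move $\mq^{\vec{i}_d}$, so its pointer is legal and $\play{w}$ stays inside $\sem{\theta_k}$; the following $\mrun^{t\vec{i}_d}$ is justified by the just-played $\mq^{\vec{i}_d}$ and is one of its $N$ $\mrun$-children because $t\in\Sigma_\Q\subseteq\{0,\dots,\imax\}$. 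The case of an answer letter is symmetric: it appends $\mdone^{t\vec{i}_d}$ answering $\mrun^{t\vec{i}_d}$ and then a value-answer with superscript $\vec{i}_d$ answering $\mq^{\vec{i}_d}$, both legal by the arena shape; this is where it matters that $d$ is a leaf of the configuration.

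Finally, \textsc{fork} and \textsc{wait} are exactly the properties sketched just above the statement, and they fit this framework cleanly. Once a data value has been answered it is removed from the configuration and can never again be encountered fresh or be the parent of a newly read question, so no move of $\play{w}$ is ever justified by an already-answered move --- this gives \textsc{fork}. And an answer-transition on $d$ fires only when $d$ is a leaf, i.e.\ every child ever created under $d$ has already been removed, so by the invariant every $\mq$-occurrence justified by $\mrun^{t\vec{i}_d}$ is already answered when $\mdone^{t\vec{i}_d}$ is played, while the one remaining pending question $\mrun^{t\vec{i}_d}$ is answered immediately before the value-answer --- this gives \textsc{wait}. I expect the only part that needs genuine care to be the first step: checking that the superscript recursion defining $\play{w}$ matches the branch structure of the enabling tree of $\sem{\theta_k}$, and in particular that the $k$-$\la$ bound $j\le k$ is precisely what keeps a branch of $\play{w}$ from running past arena-depth $2k+1$.
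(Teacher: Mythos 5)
Your proposal is correct and follows essentially the same route as the paper: the paper's argument is exactly the inductive construction of $\play{w}$ together with the observations that the leaf condition on answer transitions yields \textbf{WAIT} and the removal of answered data values yields \textbf{FORK}, with the arena-shape analysis of $\sem{\theta_k}$ (alternating $\mq$/$\mrun$ levels down to depth $2k+1$, with $N$ copies indexed by tags) left implicit. Your write-up just makes that arena unfolding and the data-value/question-pair invariant explicit, which is a faithful elaboration rather than a different proof.
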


\cutout{
Before we state the main result, we recall from~\cite{GM08} that strategies corresponding to $\fica$ terms 
satisfy a closure condition known as~\emph{saturation}: swapping two adjacent moves in a play belonging 
to such a strategy yields another play from the same strategy, 
provided the swap yields a play and it is not the case
that the first move is an O-move and the second one a P-move.
Thus, saturated strategies express causal dependencies of P-moves on O-moves.
Consequently, one cannot expect to find a $\fica$-term such that the corresponding
strategy is the smallest strategy containing $\{\,\play{w}\,|\,w\in \trace{\Aut}\,\}$. 
Instead, the best one can aim for is the following result.
}

\subsection{Saturation}
{

The game model~\cite{GM08} of $\fica$ consists of \emph{saturated} strategies only: the saturation
condition stipulates that all possible (sequential) observations of
(parallel) interactions must be present in a strategy: actions of the
environment (O) can always be observed earlier if possible, actions of the
program (P) can be observed later. To formalize this, for any arena
$A$, we define a preorder $\preceq$ on $P_A$, as the least transitive
relation $\preceq$ satisfying 
$s\, o\, m\, s'\preceq s\, m\, o\, s'$ and $s\, m\, p\, s'\preceq s\, p\, m\, s'$
for all $s,s'$,
where $o$ and $p$ are an O- and  a P-move respectively (in the above pairs of plays 
moves on the left-hand-side of $\preceq$ are assumed to have the same justifiers as on the right-hand-side). 
\begin{definition}\label{def:sat}
A strategy $\sigma:A$ is \emph{saturated} iff, for all $s,s'\in P_A$,
if $s\in \sigma$ and $s'\preceq s$ then $s'\in\sigma$.
\end{definition}
\begin{remark}\label{rem:causal}
Definition~\ref{def:sat} states that saturated strategies are stable 
under certain rearrangements of moves.
Note that $s_0\,  p\, o\, s_1\not \preceq s_0\, o\, p\, s_1$, while other move-permutations are allowed.
Thus, saturated strategies express causal dependencies of P-moves on O-moves. This partial-order aspect 
is captured explicitly in concurrent games based on event structures~\cite{CCRW17}.
\end{remark}
}

\subsection{Proof of Theorem~\ref{thm:toalgol}}

\begin{proof}
Our assumption $Q\subseteq\{0,\cdots,\imax\}$ allows us to maintain $\Aut$-states in the memory of $\fica$-terms.
A question $t_{\Q}^{(i)}$ read by $\Aut$ at level $i$ is represented by the variable $f^{(i)}_{t_{\Q}^{(i)}}$,
the corresponding answers $t_{\A}^{(i)}$ are represented by constants $t_{\A}^{(i)}$ (using our assumption $\Sigma\subseteq\{0,\cdots,\imax\}$).
The level $i$ of the data tree is encoded by the order of the variable
$f^{(i)}_{t_{\Q}^{(i)}}$.
For $0\le i < k$,  the variables $f_t^{(i)}$ are meant to have type
$\theta_{k-i-1}\rarr\comt$ and $f_t^{(k)}:\comt$.
This ensures that questions and answers respect the tree structure on data.
To achieve nesting, we rely on a  higher-order structure of the term:
$\lambda f^{(0)}.f^{(0)}(\lambda f^{(1)}.f^{(1)}(\lambda f^{(2)}.f^{(2)}(\cdots \lambda f^{(k)}. f^{(k)})))$.
Recall that the semantics of $f M$ consists of an arbitrary number of interleavings of $M$. This feature is used to mimic the fact that a leafy
automaton can spawn unboundedly many offspring. 
Finally, instead of single variables $f^{(i)}$, we will actually use sequences
$f^{(i)}_0\cdots f^{(i)}_\imax$, which will be used to induce the right move $\mrun^{t\vec{i}}$ when representing  $t\in\Sigma_{\Q}\subseteq \{0,\cdots,\imax\}$. Additionally, the term contains state-manipulating code that enables $P$-moves only if they are
consistent with the transition function of $\Aut$.
To achieve this, every level is equipped with a local variable $X^{(i)}$ of type
$\expt$, so that 
states on a single branch are represented by $\vvec{X^{(i)}} = (X^{(0)},\cdots,X^{(i)})$.

Given $\alpha\in\{\Q,\A\}$ and $-1\le j\le k$, we write
$\vvec{r_{\alpha}^{(j)}}$ for a tuple of values $(r_{\alpha}^{(0)},\cdots, r_{\alpha}^{(j)})$ on
the understanding that $\vvec{r_{\alpha}^{(-1)}}=\dagger$. A similar convention will apply to  $\vvec{u_{\alpha}^{(j)}}$.
Then we use $\vvec{X^{(i)} [ {u_{\alpha}^{(j')}}/{r_{\alpha}^{(j)}}]}$, where $-1\le j,j' \le i$, as shorthand for $\fica$ code that checks componentwise whether the values of
$\vvec{X^{(j)}}$ equal $\vvec{r_{\alpha}^{(j)}}$ and, if so, updates $\vvec{X^{(j')}}$ to $\vvec{u_{\alpha}^{(j')}}$ (if the check fails, the code should diverge). For $j=-1$ (resp. $j'=-1$), 
there is nothing to check (resp. update). All occurrences of $\vvec{X^{(i)} [ {u_{\alpha}^{(j')}}/{r_{\alpha}^{(j)}}]}$ will be protected by a semaphore to ensure mutual exclusion. Consequently,
they will induce exactly the causal dependencies (cf. Remark~\ref{rem:causal})
consistent with sequences of $\Aut$-transitions, i.e. with the shape of $\play{w}$ for some $w\in\trace{\Aut}$.
To select transitions at each stage, we rely on non-deterministic choice $\bigoplus$, which can be encoded in $\fica$\footnote{
$M_1\oplus M_2 = 
\newin{X\aasg 0}{((X\aasg 0\, \parc\, X\aasg 1});\cond{!X}{M_1}{M_2})$.}.

Below we define inductively a family of terms $\seq{}{M_i:\theta_{k-i}}$ ($0\le
i\le k$). Term $M_\Aut$ is then obtained by making a simple change to $M_0$.
For any $0\le i\le k$, let $M_i$ be the term
\[\begin{array}{rl}
\lambda f_0^{(i)}\cdots f_\imax^{(i)}. &\newin{X^{(i)}\aasg 0}{}\\
\bigoplus\limits_{(\vvec{r_{\Q}^{(i-1)}}, {\displaystyle t_{\Q}^{(i)}}, \vvec{u_{\Q}^{(i)}})\in\delta_{\Q}^{(i)}}& \Big( \grb{s};  \vvec{X^{(i)} [ {u_{\Q}^{(i)}}/{r_{\Q}^{(i-1)}}]} ;\rls{s};\quad f_{t_{\Q}^{(i)}}^{(i)}\, M_{i+1}; \\[-4mm]
&\quad \bigoplus_{(\vvec{r_{\A}^{(i)}}, {\displaystyle t_{\A}^{(i)}}, \vvec{u_{\A}^{(i-1)}})\in\delta_{\A}^{(i)}} \big( \grb{s};  \vvec{X^{(i)} [ {u_{\A}^{(i-1)}}/{r_{\A}^{(i)}}]} ;\rls{s}; t_{\A}^{(i)}\big)\Big).
\end{array}\]
We  write $M_{k+1}$ for empty space (this is for a good reason, because
$f_t^{(k)}:\comt$).
The above term $M_i$ declares a new variable to store the state, and then makes a
non-deterministic choice for question transitions that create data values at
level $i$. 
The update of the state is protected by a semaphore.
Then the appropriate $f^{(i)}_t$ is applied to term $M_{i+1}$ that simulates moves of the
automaton on data in the subtree of the freshly created node.
This is followed by the code making a non-deterministic choice over all answer transitions. To define $M_{\Aut}$, it now suffices to declare the semaphore in $M_0$, i.e. given $M_0 = \lambda f_0^{(0)}\cdots f_\imax^{(0)}. {\newin{X^{(0)}\aasg 0}{M}}$
we let $M_\Aut$ be
\[
\lambda f_0^{(0)}\cdots f_\imax^{(0)}. \newsem{s\aasg 0}{\newin{X^{(0)}\aasg 0}{M}}.
\]
\end{proof}

\begin{example}
We illustrate the outcome of the construction from Theorem~\ref{thm:toalgol} for $k=1$.

\[\arraycolsep=1.4pt
\begin{array}{lll}
\lambda f_0^{(0)}\cdots f_\imax^{(0)}. &\multicolumn{2}{l}{\newsem{s\aasg 0}{ \newin{X^{(0)}\aasg 0}{}}}\\
\bigoplus\limits_{(\dagger, {t_{\Q}^{(0)}}, {u_{\Q}^{(0)}})\in\delta_{\Q}^{(0)}}&\multicolumn{2}{l}{ \Bigg( \grb{s}; \, \vvec{X^{(0)} [ {u_{\Q}^{(0)}}/\dagger]};\,\rls{s};}\\
 &\multicolumn{1}{r}{\quad f_{t_{\Q}^{(0)}}^{(0)}\,\, \bigg(\,\, \lambda f_0^{(1)}\cdots f_\imax^{(1)}.} & {\newin{X^{(1)}\aasg 0}{}}\\
&\multicolumn{1}{r}{\bigoplus\limits_{({r_{\Q}^{(0)}}, {t_{\Q}^{(1)}}, \vvec{u_{\Q}^{(1)}})\in\delta_{\Q}^{(1)}}}& {\Big( \grb{s};  \, \vvec{X^{(1)} [ {u_{\Q}^{(1)}}/{r_{\Q}^{(0)}}]};\, \rls{s};\,\, f_{t_{\Q}^{(1)}}^{(1)}; }\\
&&{\quad \bigoplus_{(\vvec{r_{\A}^{(1)}}, t_{\A}^{(1)}, {u_{\A}^{(0)}})\in\delta_{\A}^{(1)}}}  \big( \grb{s}; \, \vvec{X^{(1)} [ {u_{\A}^{(0)}}/{r_{\A}^{(1)}}]} ;\, \rls{s};\, t_{\A}^{(1)}\big)\Big)\bigg);\\
&\multicolumn{2}{l}{\quad \bigoplus_{({r_{\A}^{(0)}}, t_{\A}^{(0)}, \dagger)\in\delta_{\A}^{(0)}} \big( \grb{s}; \, \vvec{X^{(0)} [ \dagger/{r_{\A}^{(0)}}]} ;\,\rls{s}; \,t_{\A}^{(0)})\Bigg)}\\
\end{array}
\]
where
\[\begin{array}{rcl}
\vvec{X^{(0)} [ {u_{\Q}^{(0)}}/\dagger]} &=& X^{(0)}\aasg u_{\Q}^{(0)}\\
\vvec{X^{(1)} [ {u_{\Q}^{(1)}}/{r_{\Q}^{(0)}}]} &=& \cond{(X^{(0)}=r_{\Q}^{(0)})}{(X^{(0)}\aasg u_{\Q}^{(0)};X^{(1)}\aasg u_{\Q}^{(1)})}{\Omega}\\
\vvec{X^{(1)} [ {u_{\A}^{(0)}}/{r_{\A}^{(1)}}]} &=&  \cond{((X^{(0)}=r_{\A}^{(0)})  \wedge (X^{(1)}=r_{\A}^{(1)}))}{(X^{(0)}\aasg u_{\A}^{(0)})}{\Omega}\\
\vvec{X^{(0)} [ \dagger/{r_{\A}^{(0)}}]}  &=& \cond{(X^{(0)}=r_{\A}^{(0)})}{\skipcom}{\Omega}\\
\Omega &=&\while{1}{\skipcom}
\end{array}\]
\end{example}

\end{document}